\definecolor{darkgreen}{rgb}{0,0.5,0}
\renewcommand{\refeq}[1]{Equation~(\ref{#1})}
\newcommand{\Exp}[2][]{{\operatornamewithlimits{\mathbb{E}}_{#1}\left[{#2}\right]}}
\newcommand{\problem}{dynamic contract design\xspace} 
\newcommand{\Problem}{Dynamic contract design\xspace}
\newcommand{\taskPricing}{dynamic task pricing\xspace} 
\newcommand{\TaskPricing}{Dynamic task pricing\xspace} 
\newcommand{\incr}{\mathtt{incr}}    
\newcommand{\width}{\mathtt{width}}  
\newcommand{\vw}{\mathtt{VirtWidth}}   
\newcommand{\confC}{c_{\mathtt{rad}}} 
\newcommand{\Xcand}{X_{\mathtt{cand}}} 
\newcommand{\OPT}{\mathtt{OPT}}
\newcommand{\WidthDim}{\mathtt{WidthDim}} 
\newcommand{\ZoomDim}{\mathtt{ZoomDim}} 
\newcommand{\minSize}{\psi} 
\newcommand{\SlackC}{\beta_{0}} 
\newcommand{\Nmin}{N^{\mathtt{min}}} 
\newcommand{\Mesh}{\Xcand}
\newcommand{\hlexample}{high-low example\xspace}
\newcommand{\tildeO}{\tilde{O}}
\newcommand{\mP}{\mathcal{P}}
\newcommand{\zooming}{\ensuremath{\mathtt{AgnosticZooming}}\xspace} 
\newcommand{\UCB}{\ensuremath{\mathtt{UCB1}}\xspace} 
\newcommand{\NonAdaptive}{\ensuremath{\mathtt{NonAdaptive}}\xspace} 
\newcommand{\ignore}[1]{}
\newcommand{\cj}[1]{\cjcomment{#1}}
\newcommand{\jenn}[1]{\JWVcomment{#1}}
\newcommand{\rad}{\mathtt{rad}}
\renewcommand{\vec}[1]{\mathbf{#1}}
\newcommand{\abs}[1]{\left\lvert#1\right\rvert}
\newtheorem{dfn}[theorem]{Definition}
\title{Adaptive Contract Design for Crowdsourcing Markets:\\
Bandit Algorithms for Repeated Principal-Agent Problems%
\thanks{Preliminary version of this paper has been published in the \emph{ACM Conference on Economics and Computation (ACM-EC)}, 2014. The conference version omits many of the proofs, including all of Section~\ref{sec:pricing}, the details of the simulation results, and a detailed discussion of related work (Section~\ref{sec:related-work}). Moreover, the present version contains revised and expanded Conclusions section, and an updated discussion of the follow-up work.
\vspace{2mm}\newline Compared to the initial technical report ({\tt arXiv:1405.2875v1}, May 2014), this version includes updated citations and discussion of the follow-up work.\vspace{2mm}\newline 
Much of this research was completed while Ho was an intern at Microsoft Research. This research was partially supported by the NSF under grant IIS-1054911. Any opinions, findings, conclusions, or recommendations are those of the authors alone. }
}
\author{Chien-Ju Ho%
\footnote{UCLA, Los Angeles, CA, USA. Email: {\tt cjho@ucla.edu}.}
\and
Aleksandrs Slivkins%
\footnote{Microsoft Research, New York, NY, USA. Email: {\tt slivkins@microsoft.com}.}
\and
Jennifer Wortman Vaughan%
\footnote{Microsoft Research, New York, NY, USA. Email: {\tt jenn@microsoft.com}.}
}
\date{First version: May 2014 \\ This version: July 2015.}
\begin{document}

\maketitle

\begin{abstract}
Crowdsourcing markets have emerged as a popular platform for matching available workers with tasks to complete.  The payment for a particular task is typically set by the task's requester, and may be adjusted based on the quality of the completed work, for example, through the use of ``bonus'' payments.  In this paper, we study the requester's problem of dynamically adjusting quality-contingent payments for tasks. We consider a multi-round version of the well-known \emph{principal-agent} model, whereby in each round a worker makes a strategic choice of the effort level which is not directly observable by the requester. In particular, our formulation significantly generalizes the budget-free online task pricing problems studied in prior work.

We treat this problem as a multi-armed bandit problem, with each ``arm" representing a potential contract. To cope with the large (and in fact, infinite) number of arms, we propose a new algorithm, \zooming, which discretizes the contract space into a finite number of regions, effectively treating each region as a single arm.  This discretization is adaptively refined, so that more promising regions of the contract space are eventually discretized more finely. We analyze this algorithm, showing that it achieves regret sublinear in the time horizon and substantially improves over non-adaptive discretization (which is the only competing approach in the literature).

Our results advance the state of art on several different topics: the theory of crowdsourcing markets, principal-agent problems, multi-armed bandits, and dynamic pricing.

\ignore{
In our setting, each round the requester posts a new contract, a performance-contingent payment rule which specifies different levels of payment for different qualities of work.  A random, unidentifiable worker then arrives in the market and decides whether to accept the requester's task and how much effort to exert.  After the worker completes the task (or chooses not to complete it), the requester observes the quality of the work produced, pays the worker according to the offered contract, and adjusts the contract for the next round.  The goal of the requester is to maximize his own expected utility, the value he receives from completed work minus the payments made.

 To analyze the algorithm's performance, we propose a concept called ``width dimension'' which measures how ``nice'' a particular problem instance is. We show that Agnostic Zooming achieves regret bounds sub-linear in the time horizon for problem instances with small width dimension. Even for problem instances with large width dimension, Agnostic Zooming can match the performance of the naive algorithm which uniformly discretizes the space and runs a standard bandit algorithm.
} 
\end{abstract}

{\bf ACM Categories and subject descriptors:}
\category{F.2.2}{Analysis of Algorithms and Problem Complexity}{Nonnumerical Algorithms and Problems}
\category{F.1.2}{Computation by Abstract Devices}{Modes of Computation}[Online computation]
\category{J.4}{Social and Behavioral Sciences}{Economics}

{\bf Keywords}: crowdsourcing; principal-agent; dynamic pricing; multi-armed bandits; regret.

\newpage

\section{Introduction}
\label{sec:intro}
Crowdsourcing harnesses human intelligence and common sense to complete tasks that are difficult to accomplish using computers alone.  Crowdsourcing markets, such as Amazon Mechanical Turk and Microsoft's Universal Human Relevance System, are platforms designed to match available human workers with tasks to complete.  Using these platforms, requesters may post tasks that they would like completed, along with the amount of money they are willing to pay.  Workers then choose whether or not to accept the available tasks and complete the work.

Of course not all human workers are equal, nor is all human-produced work.  Some tasks, such as proofreading English text, are easier for some workers than others, requiring less effort to produce high quality results.  Additionally, some workers are more dedicated than others, willing to spend extra time to make sure a task is completed properly.  To encourage high quality results, requesters may set quality-contingent ``bonus'' payments on top of the base payment for each task, rewarding workers for producing valuable output.  This can be viewed as offering workers a ``contract'' that specifies how much they will be paid based on the quality of their output.\footnote{For some tasks, such as labeling websites as relevant to a particular search query or not, verifying the quality of work may be as difficult as completing the task.  These tasks can be assigned in batches, with each batch containing one or more instances in which the correct answer is already known.  Quality-contingent payments can then be based on the known instances.}

We examine the requester's problem of dynamically setting quality-contingent payments for tasks. We consider a setting in which time evolves in rounds.  In each round, the requester posts a new contract, a performance-contingent payment rule which specifies different levels of payment for different levels of output.  A random, unidentifiable worker then arrives in the market and strategically decides whether to accept the requester's task and how much effort to exert; the choice of effort level is not directly observable by the requester.  After the worker completes the task (or chooses not to complete it), the requester observes the worker's output, pays the worker according to the offered contract, and adjusts the contract for the next round. The properties of a random worker (formally: the distribution over the workers' types) are not known to the requester, but may be learned over time. The goal of the requester is to maximize his expected utility, the value he receives from completed work minus the payments made.
We call it the \emph{\problem} problem.

For concreteness, consider a special case in which a worker can strategically choose to perform a task with low effort or with high effort, and the task may be completed either at low quality or at high quality. The low effort incurs no cost and results in low quality, which in turn brings no value to the requester. The high effort leads to high quality with some positive probability (which may vary from one worker to another, and is unknown to the requester). The requester only observes the quality of completed tasks, and therefore cannot infer the effort level. This example captures the two main tenets of our model: that the properties of a random worker are unknown to the requester and that  workers' strategic decisions are unobservable.

We treat the \problem problem as a multi-armed bandit (MAB) problem, with each arm representing a potential contract.  Since the action space is large (potentially infinite) and has a well-defined real-valued structure, it is natural to consider an algorithm that uses \emph{discretization}. Our algorithm, \zooming, divides the action space into regions, and chooses among these regions, effectively treating each region as a single ``meta-arm.'' The discretization is defined \emph{adaptively}, so that the more promising areas of the action space are eventually discretized more finely than the less promising areas. While the general idea of adaptive discretization has appeared in prior work on MAB  \citep{LipschitzMAB-stoc08,xbandits-nips08,contextualMAB-colt11,ImplicitMAB-nips11}, our approach to adaptive discretization is new and problem-specific. The main difficulty, compared to this prior work, is that an algorithm is not given any information that links the observable numerical structure of contracts and the expected utilities thereof.

To analyze performance, we propose a concept called ``width dimension'' which measures how ``nice'' a particular problem instance is. We show that \zooming achieves regret sublinear in the time horizon for problem instances with small width dimension.  In particular, if the width dimension is $d$, it achieves regret $O(\log T \cdot T^{(d+1)/(d+2)})$ after $T$ rounds.  For problem instances with large width dimension, \zooming matches the performance of the naive algorithm which uniformly discretizes the space and runs a standard bandit algorithm. We illustrate our general results via some corollaries and special cases, including the \hlexample described above.
We support the theoretical results with simulations.

Further, we consider a special case of our setting where each worker only chooses whether to accept or reject a given task. This special case corresponds to a dynamic pricing problem previously studied in the literature. Our results significantly improve over the prior work on this problem.

Our contributions can be summarized as follows. We define a broad, practically important setting in  crowdsourcing markets; identify novel problem-specific structure, for both the algorithm and the regret bounds; distill ideas from prior work to work with these structures; argue that our approach is productive by deriving corollaries and comparing to prior work; and identify and analyze specific examples where our theory applies. The main conceptual contributions are the model itself and the adaptive discretization approach mentioned above. Finally, this paper prompts further research on \problem along several directions that we outline in the conclusion.

\xhdr{Related work.}
Our work builds on three areas of research.  First, our model can be viewed as a multi-round version of the classical \emph{principal-agent} model from contract theory~\citep{LM02}. A single round of our model corresponds to the basic principal-agent setting, with \emph{adverse selection} (unknown worker's type) and \emph{moral hazard} (unobservable worker's decisions). Unlike much of the prior work in contract theory, the prior over worker types is not known to the principal, but may be learned over time. Accordingly, our techniques are very different from those employed in contract theory.

\OMIT{ 
In the most basic principal-agent setting, a single principal interacts with a single agent.  The principal first selects a \emph{contract} which specifies a mapping from \emph{outcomes} to payments he will make to the agent.  The agent then selects among a set of \emph{actions}, each of which has some associated cost (e.g., in terms of effort) and results in a particular distribution over outcomes; it is assumed that the agent chooses her action to maximize her expected utility (payment from the principal minus cost of action) given the offered contract, and that choice is not directly observable by the requester. The principal has an associated value for each outcome, and his goal is to select a contract to maximize his own expected utility (value minus payment) given the action of the agent.   While the techniques we use are very different from those employed in contract theory, our utility model is heavily influenced by this line of work; see Appendix~\ref{sec:relatedcontract} for more.
} 

Second, our methods build on those developed in the rich literature on MAB with continuous outcome spaces.  The closest line of work is that on \emph{Lipschitz MAB}~\citep{LipschitzMAB-stoc08}, in which the algorithm is given a distance function on the arms, and the expected rewards of the arms are assumed to satisfy Lipschitz-continuity (or a relaxation thereof) with respect to this distance function, ~\citep{Agrawal-bandits-95,Bobby-nips04,AuerOS/07,LipschitzMAB-stoc08,xbandits-nips08,contextualMAB-colt11}. Most related to our techniques is the idea of adaptive discretization \citep{LipschitzMAB-stoc08,xbandits-nips08,contextualMAB-colt11}, and in particular, the \emph{zooming algorithm} \citep{LipschitzMAB-stoc08,contextualMAB-colt11}.  However, the zooming algorithm cannot be applied directly in our setting because the required numerical similarity information is not immediately available. This problem also arises in web search and advertising, where it is natural to assume that an algorithm can only observe a tree-shaped taxonomy on arms~\citep{Kocsis-ecml06,Munos-uai07,yahoo-bandits07} which can be used to explicitly reconstruct relevant parts of the underlying metric space~\citep{ImplicitMAB-nips11,AdamBull-13}.  We take a different approach, using a notion of ``virtual width'' to estimate similarity information.  Explicit comparisons between our results and prior MAB work are made throughout the paper.

Finally, our work follows several other theoretical papers on pricing in crowdsourcing markets \citep{KleinbergL03,DynProcurement-ec12,SM13,Krause-www13,BwK-focs13}. In particular, \citet{DynProcurement-ec12} and \citet{Krause-www13} study a version of our setting with simple, single-price contracts (independent of the output), where the focus is on dealing with a global budget constraint.

\OMIT{\asedit{Empirical work \citep{MW09,YCS13,Harris11,PBPs-www15} finds that workers do respond to a change in financial incentives, but not quite as predicted by the principal-agent model.%
\footnote{\citet{PBPs-www15} is follow-up work w.r.t. the conference publication of this paper.}
\citet{PBPs-www15} suggest a generalized worker model which is consistent with the experimental evidence. Reassuringly, our results hold under that model as well.}}
 
A more thorough literature review (including a discussion of some related empirical work) can be found in Section~\ref{sec:related-work}.

\section{Our setting: the \problem problem}
\label{sec:setting}
In this section, we formally define the problem that we set out to solve. We start by describing a \emph{static model}, which captures what happens in a single round of interaction between a requester and a worker.  As described above, this is a version of the standard \emph{principal-agent} model~\citep{LM02}.
%
%
We then define our \emph{dynamic model}, an extension of the static model to multiple rounds, with a new worker arriving each round. We then detail the objective of our pricing algorithm and the simplifying assumptions that we make throughout the paper. Finally, we compare our setting to the classic multi-armed bandit problem.

\xhdr{Static model.} We begin with a description of what occurs during each interaction between the requester and a single worker.
%
%
The requester first posts a task which may be completed by the worker, and a \emph{contract} specifying how the worker will be paid if she completes the task. If the task is completed, the requester pays the worker as specified in the contract, and the requester derives value from the completed task; for normalization, we assume that the value derived is in $[0,1]$. The requester's utility from a given task is this value minus the payment to the worker.

When the worker observes the contract and decides whether or not to complete the task, she also chooses a level of effort to exert, which in turn determines her cost (in terms of time, energy, or missed opportunities) and a distribution over the quality of her work. To model quality, we assume that there is a (small) finite set of possible \emph{outcomes} that result from the worker completing the task (or choosing not to complete it), and that the realized outcome determines the value that the requester derives from the task. The realized outcome is observed by the requester, and the contract that the requester offers is a mapping from outcomes to payments for the worker.

We emphasize two crucial (and related) features of the principal-agent model: that the mapping from effort level to outcomes can be randomized, and that the effort level is not directly observed by the requester. This is in line with a standard observation in crowdsourcing that even honest, high-effort workers occasionally make errors.

The worker's utility from a given task is the payment from the requester minus the cost corresponding to her chosen effort level. Given the contract she is offered, the worker chooses her effort level strategically so as to maximize her expected utility. Crucially, the chosen effort level is not directly observable by the requester.

The worker's choice \emph{not} to perform a task is modeled as a separate effort level of zero cost (called the \emph{null} effort level) and a separate outcome of zero value and zero payment (called the \emph{null} outcome) such that the null effort level deterministically leads to the null outcome, and it is the only effort level that can lead to this outcome.

The mapping from outcomes to the requester's value is called the requester's \emph{value function}.  The mapping from effort levels to costs is called the \emph{cost function}, and the mapping from effort levels to distributions over outcomes is called the \emph{production function}. For the purposes of this paper, a worker is completely specified by these two functions; we say that the cost function and the production function comprise the worker's \emph{type}.  Unlike some traditional versions of the principal-agent problem, in our setting a worker's type is not  observable by the requester, nor is any prior given.

\xhdr{Dynamic model.}
The dynamic model we consider in this paper is a natural extension of the static model to multiple rounds and multiple workers.  We are still concerned with just a single requester. In each round, a new worker arrives.  We assume a \emph{stochastic environment} in which the worker's type  in each round is an i.i.d. sample from some fixed  and unknown distribution over types, called the \emph{supply distribution}.  The requester posts a new task and a contract for this task. All tasks are of the same type, in the sense that the set of possible effort levels and the set of possible outcomes are the same for all tasks. The worker strategically chooses her effort level so as to maximize her expected utility from this task. Based on the chosen effort level and the worker's production function, an outcome is realized.  The requester observes this outcome (but not the worker's effort level) and pays the worker the amount specified by the contract. The type of the arriving worker is never revealed to the requester. The requester can adjust the contract from one round to another, and his total utility is the sum of his utility over all rounds. For simplicity, we assume that the number of rounds is known in advance, though this assumption can be relaxed using standard tricks.

\xhdr{The \problem problem.}  Throughout this paper, we take the point of view of the requester interacting with workers in the dynamic model.  The algorithms we examine dynamically choose contracts to offer on each round with the goal of maximizing the requester's expected utility.  A problem instance consists of several quantities, some of which are known to the algorithm, and some of which are not. The known quantities are the number of outcomes, the requester's value function, and the time horizon $T$ (i.e., the number of rounds). The latent  quantities are the number of effort levels, the set of worker types, and the supply distribution. The algorithm adjusts the contract from round to round and observes the realized outcomes but receives no other feedback.

We focus on contracts that are \emph{bounded} (offer payments in $[0,1]$), and \emph{monotone} (assign equal or higher payments for outcomes with higher value for the requester). Let $X$ be the set of all bounded, monotone contracts. We compare a given algorithm against a given subset of ``candidate contracts'' $\Xcand \subset X$. Letting $\OPT(\Xcand)$ be the optimal utility over all contracts in $\Xcand$, the goal is to minimize the algorithm's \emph{regret}
    $R(T|\Xcand)$,
defined as $T\times\OPT(\Xcand)$ minus the algorithm's expected utility.

The subset $\Xcand$ may be finite or infinite, possibly $\Xcand = X$. The most natural example of a finite $\Xcand$ is the set of all bounded, monotone contracts with payments that are integer multiples of some $\minSize>0$; we call it the \emph{uniform mesh} with granularity $\minSize$, and denote it $\Xcand(\minSize)$.

\OMIT{\JWVedit{Another natural choice would be the set of all contracts in $\Xcand(\minSize)$ that offer at least some minimal ``fair'' payment to workers who complete the task (for example, minimum wage times the expected task completion time).\footnote{To ensure that all workers receive this minimum payment, it would additionally be necessary to redefine $X$ to be the set of all bounded, monotone contracts satisfying this property. The corresponding modifications to our algorithm and analysis are straightforward.}}}

\xhdr{Notation.}
Let $v(\cdot)$ be the value function of the requester, with $v(\pi)$ denoting the value of outcome $\pi$. Let $\mO$ be the set of all outcomes and let $m$ be the number of non-null outcomes. We will index the outcomes as
    $\mO = \{0,1,2 \LDOTS m\}$
in the order of increasing value (ties broken arbitrarily), with a convention that $0$ is the null outcome.

Let $c_i(\cdot)$ and $f_i(\cdot)$ be the cost function and production function for type $i$.  Then the cost of choosing effort level $e$ is $c_i(e)$, and the probability of obtaining outcome $\pi$ having chosen effort $e$ is $f_i(\pi|e)$. Let
    $F_i(\pi|e)=\sum_{\pi'\geq \pi}f_i(\pi'|e)$.

Recall that a contract $x$ is a function from outcomes to (non-negative) payments. If contract $x$ is offered to a worker sampled i.i.d. from the supply distribution, $V(x)$ is the expected value to the requester, $P(x)\geq 0$ is the expected payment, and
    $U(x) = V(x)-P(x)$
is the expected utility of the requester. Let
    $\OPT(\Xcand) = \sup_{x\in \Xcand} U(x)$.

\ignore{
Suppose a worker of type $i$ is offered a contract $x$ and chooses effort level $e$. Then
the expected payment is
    $P_i(x|e) = \sum_{\pi\in \mO} x(\pi)\, f_i(\pi|e)$,
and the algorithm's expected value is
    $V_i(x|e) = \sum_{\pi\in \mO} v(\pi)\, f_i(\pi|e)$.
Accordingly,  the worker's expected utility is
    $U_i(x|e) = P_i(x|e) - c_i(e)$,
and the algorithm's expected utility is
    $U^*_i(x|e) = V_i(x|e) - P_i(x|e)$. \jenn{The $*$ notation makes me think ``optimal,'' not ``algorithm.'' Do we need this notation?}

\ascomment{We are not using this notation except maybe in some examples, so I suggest to remove this from here and define where and if we actually use it.}
There are $n\leq \infty$ types of workers. The fraction of type $i$ in the worker population is denoted $\lambda_i$. ($\lambda_i$ is the probability density
function if there are infinitely many types.)
}

\xhdr{Assumption: First-order stochastic dominance (FOSD).}
Given two effort levels $e$ and $e'$, we say that $e$ has FOSD over $e'$ for type $i$ if $F_i(\pi|e) \geq F_i(\pi|e')$ for all outcomes $\pi$, with a strict inequality for at least one outcome.%
\footnote{This mimics the standard notion of FOSD between two distributions over a linearly ordered set.}
We say that type $i$ satisfies the FOSD assumption if for any two distinct effort levels, one effort level has FOSD over  the other for type $i$. We assume that all types satisfy this assumption.

\xhdr{Assumption: Consistent tie-breaking.} If multiple effort levels maximize the expected utility of a given worker for a contract $x$, we assume the tie is broken consistently in the sense that this worker chooses the same effort level for any contract that leads to this particular tie. This assumption is minor; it can be avoided (with minor technical complications) by adding random perturbations to the contracts. This assumption is implicit throughout the paper.

\OMIT{\xhdr{Our benchmark.}
We compare a given algorithm with the \emph{omniscient benchmark}, an algorithm which optimizes its actions over time given the full knowledge of the latent information in the problem instance. This is a standard benchmark for many machine learning problems. In our setting this benchmark reduces to the best fixed contract from a set, that is, the contract that maximizes the requester's expected utility in any given round.}

\subsection{Discussion}
\label{subsec:discussion}

\xhdr{Number of outcomes.}  Our results assume a small number of outcomes. This regime is important in practice, as the quality of submitted work is typically difficult to evaluate in a very fine granularity. Even with $m=2$ non-null outcomes, our setting has not been studied before.
The special case $m=1$ is equivalent to the dynamic pricing problem from \citet{Bobby-focs03}; we obtain improved results for it, too.

\xhdr{The benchmark.}
Our benchmark $\OPT(\cdot)$ only considers contracts that are bounded and monotone. In practice, restricting to such contracts may be appealing to all human parties involved. However, this restriction is not without loss of generality: there are problem instances in which monotone contracts are not optimal; see Appendix~\ref{sec:example-monotone} for an example. Further, it is not clear whether bounded monotone contracts are optimal among monotone contracts.

Our benchmark $\OPT(\Xcand)$ is relative to a given set $\Xcand$, which is typically a finite discretization of the contract space. There are two reasons for this. First, crowdsourcing platforms may require the payments to be multiples of some minimum unit (e.g., one cent), in which case it is natural to restrict our attention to contracts satisfying the same constraint. Second, achieving guarantees relative to $\OPT(X)$ for the full generality of our problem appears beyond the reach of our techniques. As in many other machine learning scenarios, it is useful to consider a restricted ``benchmark set'' -- set of alternatives to compare to.%
\footnote{A particularly relevant analogy is contextual bandits with policy sets, e.g., \citet{policy_elim}.}
In such settings, it is considered important to handle \emph{arbitrary} benchmark sets, which is what we do.

One known approach to obtain guarantees relative to $\OPT(X)$ is to start with some finite $\Xcand\subset X$, design an algorithm with guarantees relative to
    $\OPT(\Xcand)$,
and then, as a separate result, bound the discretization error
    $\OPT(X)-\OPT(\Xcand)$.
Then the choice of $\Xcand$ drives the tradeoff between the discretization error and regret $R(T|\Xcand)$, and one can choose $\Xcand$ to optimize this tradeoff. However, while one can upper-bound the discretization error in some (very) simple special cases (see Section~\ref{sec:hlexample}), it is unclear whether this can be extended to the full generality of \problem.

\xhdr{Alternative worker models.}
One of the crucial tenets in our model is that the workers maximize their expected utility. This ``rationality assumption" is very standard in Economics, and is often used to make the problem amenable to rigorous analysis. However, there is a considerable literature suggesting that in practice workers may deviate from this ``rational" behavior. Thus, it is worth pointing out that our results do not rely heavily on the rationality assumption. The FOSD assumption (which is also fairly standard) can be circumvented, too. In fact, all our assumptions regarding worker behavior serve  \emph{only} to enable us to prove Lemma~\ref{lm:virtual_width}, and more specifically to guarantee that the collective worker behavior satisfies the following natural property (which is used in the proof of  Lemma~\ref{lm:virtual_width}): if the requester increases the ``increment payment'' (as described in the next section) for a particular outcome, the probability of obtaining an outcome at least that good also increases.

\xhdr{Minimum wage.} For ethical or legal reasons one may want to enforce some form of minimum wage. This can be expressed within our model as a \emph{minimal payment} $\theta$ for a completed task, i.e., for any non-null outcome. Our algorithm can be easily modified to accommodate this constraint. Essentially, it suffices to restrict the action space to contracts that pay at least $\theta$ for a completed task. Formally, the ``increment space" defined in Section~\ref{sec:algorithm} should be
    $[\theta,1]\times [0,1]^{m-1}$
rather than $[0,1]^m$, and the ``quadrants" of each ``cell" are defined by splitting the cell in half in each dimension. All our results easily carry over to this version (restricting $\Xcand$ to contracts that pay at least $\theta$ for a completed task). We omit further discussion of this issue for the sake of simplicity.

\xhdr{Comparison to multi-armed bandits (MAB).}
\Problem can be modeled as special case of the MAB problem with some additional, problem-specific structure.  The basic MAB problem is defined as follows. An algorithm repeatedly chooses actions from a fixed \emph{action space} and collects rewards for the chosen actions; the available actions are traditionally called \emph{arms}. More specifically, time is partitioned into rounds, so that in each round the algorithm selects an arm and receives a reward for the chosen arm. No other information, such as the reward the algorithm would have received for choosing an alternative arm, is revealed. In an MAB problem with \emph{stochastic rewards}, the reward of each arm in a given round is an i.i.d. sample from some distribution which depends on the arm but not on the round. A standard measure of algorithm's performance is regret with respect to the best fixed arm, defined as the difference in expected total reward between a benchmark (usually the best fixed arm) and the algorithm.

Thus, \problem can be naturally modeled as an MAB problem with stochastic rewards, in which arms correspond to monotone contracts. The prior work on MAB with large / infinite action spaces often assumes known upper bounds on similarity between arms. More precisely, this prior work would assume that an algorithm is given a metric $\mD$ on contracts such that expected rewards are Lipschitz-continuous with respect to $\mD$, i.e., we have upper bounds
    $|U(x)-U(y)| \leq \mD(x,y)$
for any two contracts $x,y$.%
\footnote{Such upper bound is informative if and only if $\mD(x,y)<1$.}
However, in our setting such upper bounds are absent. On the other hand, our problem has some supplementary structure compared to the standard MAB setting. In particular, the algorithm's reward decomposes into value and payment, both of which are determined by the outcome, which in turn is probabilistically determined by the worker's strategic choice of the effort level. Effectively, this supplementary structure provides some ``soft" information on similarity between contracts, in the sense that numerically similar contracts are usually (but not always) similar to one another.

\OMIT{ 
As discussed in the introduction, there is a considerable amount of work on MAB problems with large, structured action spaces.  However, this prior work typically assumes that the structure provides explicit upper bounds on the similarity between the contracts
(in terms of their expected utilities), whereas in our setting such explicit upper bounds are absent.  See Appendix~\ref{sec:relatedbandits} for more detail.
} 

\section{Our algorithm: \protect\zooming}
\label{sec:algorithm}
In this section, we specify our algorithm. We call it \zooming because it ``zooms in'' on more promising areas of the action space, and does so without knowing a precise measure of the similarity between contracts.  This zooming can be viewed as a dynamic form of discretization.  Before stating the algorithm itself, we discuss the discretization of the action space in more detail, laying the groundwork for our approach.

\subsection{Discretization of the action space}
\label{sec:discretization}

In each round, the \zooming algorithm partitions the action space into several regions and chooses among these regions, effectively treating each region as a ``meta-arm.'' In this section, we discuss which subsets of the action space are used as regions, and introduce some useful notions and properties of such subsets.

\OMIT{ 
\xhdr{Candidate contracts.}
Our algorithm is parameterized by a subset $\Xcand$ of bounded, monotone contracts. (The most natural example is the uniform mesh $\Xcand(\minSize)$: the set of all bounded, monotone contracts with payments that are integer multiples of some $\minSize>0$.) Contracts in $\Xcand$ are called \emph{candidate contracts}. The algorithm's goal is to compete against $\OPT(\Xcand)$.
} 

\xhdr{Increment space and cells.}
To describe our approach to discretization, it is useful to think of contracts in terms of \emph{increment payments}. Specifically, we represent each monotone contract $x:\mO\to [0,\infty)$ as a vector
    $\vec{x}\in [0,\infty)^m$,
where $m$ is the number of non-null outcomes and $\vec{x}_\pi = x(\pi) -x(\pi-1)\geq 0$ for each non-null outcome $\pi$.
(Recall that by convention $0$ is the null outcome and $x(0) = 0$.)
We call this vector the \emph{increment representation} of contract $x$, and denote it $\incr(x)$. Note that if $x$ is bounded, then $\incr(x)\in [0,1]^m$. Conversely, call a contract \emph{weakly bounded} if it is monotone and its increment representation lies in $[0,1]^m$. Such a contract is not necessarily bounded.

We discretize the space of all weakly bounded contracts, viewed as a multi-dimensional unit cube. More precisely, we define the \emph{increment space} as $[0,1]^m$ with a convention that every vector represents the corresponding weakly bounded contract. Each region in the discretization is a closed, axis-aligned $m$-dimensional cube in the increment space; henceforth, such cubes are called \emph{cells}. A cell is called \emph{relevant} if it contains at least one candidate contract. A relevant cell is called \emph{atomic} if it contains exactly one candidate contract, and \emph{composite} otherwise.

In each composite cell $C$, the algorithm will only use two contracts:
the \emph{maximal corner}, denoted $x^+(C)$, in which all increment
payments are maximal, and the \emph{minimal corner}, denoted $x^-(C)$,
in which all increment payments are minimal. These two contracts are called the
\emph{anchors} of $C$.
In each atomic cell $C$, the algorithm will only use one contract: the unique candidate contract, also called the \emph{anchor} of $C$.

\xhdr{Virtual width.}
To take advantage of the problem structure, it is essential to estimate how similar the contracts within a given composite cell $C$ are. Ideally, we would like to know the maximal difference in expected utility:
    $$ \width(C) = \textstyle \sup_{x,y\in C}\; \abs{U(x) -U(y)} .$$
We estimate the width using a proxy, called \emph{virtual width}, which is expressed in terms of the anchors:
\begin{align}\label{eq:virt-width}
\vw(C) = \left( V(x^+(C)) - P(x^-(C)) \right) - \left( V(x^-(C)) -P(x^+(C)) \right).
\end{align}

This definition is one crucial place where the problem structure is used. (Note that it is \emph{not} the difference in utility at the anchors.) It is useful due to the following lemma (proved in Section~\ref{sec:width-proof}).

\begin{lemma}\label{lm:virtual_width}
If all types satisfy the FOSD assumption and consistent tie-breaking holds, then
    $\width(C)\leq \vw(C)$
for each composite cell $C$.
\end{lemma}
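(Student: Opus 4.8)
The plan is to split $\width(C)$ against the two anchors of $C$ and match the resulting expression term-by-term with the right-hand side of \refeq{eq:virt-width}. The engine is a monotonicity property of the \emph{collective} worker response with respect to the coordinatewise (increment) order on contracts; this is the only place where the FOSD and consistent-tie-breaking hypotheses are actually used.

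I would begin with the geometry of a composite cell. Since $C$ is an axis-aligned cube whose minimal and maximal corners are exactly $x^-(C)$ and $x^+(C)$, every contract $x\in C$ satisfies $\incr(x^-(C))\le\incr(x)\le\incr(x^+(C))$ coordinatewise. Summing increments, this gives $x^-(C)(\pi)\le x(\pi)\le x^+(C)(\pi)$ for every outcome $\pi$, and moreover the payment functions $x-x^-(C)$ and $x^+(C)-x$ are nonnegative and nondecreasing in $\pi$ (their increments are nonnegative). So within a cell, the coordinatewise order, the pointwise order on payments, and ``nondecreasing difference'' all coincide.

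The heart of the proof is the claim: if $x,x'$ are weakly bounded contracts with $\incr(x)\le\incr(x')$ coordinatewise, then the distribution over outcomes induced by offering $x'$ weakly first-order stochastically dominates the one induced by $x$; consequently $V(x)\le V(x')$ and $P(x)\le P(x')$. I would prove this type by type. Fix a worker of type $i$; by the FOSD assumption its effort levels are totally FOSD-ordered, so ``higher effort'' means an FOSD-larger outcome distribution $f_i(\cdot|e)$. Let $e$ and $e'$ be the efforts this worker chooses under $x$ and $x'$, and write $P_i(x|e)=\sum_\pi x(\pi)\,f_i(\pi|e)$. Adding the optimality inequalities $P_i(x|e)-c_i(e)\ge P_i(x|e')-c_i(e')$ and $P_i(x'|e')-c_i(e')\ge P_i(x'|e)-c_i(e)$ and cancelling the costs yields $\Exp[f_i(\cdot|e')]{\Delta}\ge\Exp[f_i(\cdot|e)]{\Delta}$, where $\Delta=x'-x$ is nonnegative and nondecreasing by the geometry step. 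Since the expectation of a nondecreasing function is monotone in the FOSD order, this forbids $f_i(\cdot|e)$ from being \emph{strictly} FOSD-above $f_i(\cdot|e')$, except in the degenerate case of equality throughout, which forces $e$ and $e'$ to be simultaneously optimal under both $x$ and $x'$; consistent tie-breaking (or, equivalently, a perturbation argument) then pins the worker's realized outcome distribution to be the same under $x$ and $x'$. Hence $f_i(\cdot|e')$ weakly FOSD-dominates $f_i(\cdot|e)$ for every type, and since a mixture of FOSD-ordered distributions is again FOSD-ordered, the aggregate outcome distributions satisfy $D_x\preceq D_{x'}$. Because $v(\cdot)$ is nondecreasing in the outcome index, this gives $V(x)\le V(x')$; and $P(x')=\Exp[D_{x'}]{x'(\pi)}\ge\Exp[D_{x'}]{x(\pi)}\ge\Exp[D_x]{x(\pi)}=P(x)$, the first inequality because $x'\ge x$ pointwise and the second because $x(\cdot)$ is nondecreasing and $D_x\preceq D_{x'}$.

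Finally I would assemble. For $x,y\in C$ the geometry step places both between the anchors in the coordinatewise order, so the monotonicity claim gives $V(x)\le V(x^+(C))$, $V(y)\ge V(x^-(C))$, $P(x)\ge P(x^-(C))$, and $P(y)\le P(x^+(C))$. Therefore
\begin{align*}
U(x)-U(y)
&= \bigl(V(x)-V(y)\bigr)+\bigl(P(y)-P(x)\bigr)\\
&\le \bigl(V(x^+(C))-V(x^-(C))\bigr)+\bigl(P(x^+(C))-P(x^-(C))\bigr)
= \vw(C),
\end{align*}
where the last equality is precisely the definition of $\vw(C)$. Swapping $x$ and $y$ gives the same bound for $U(y)-U(x)$, so $\abs{U(x)-U(y)}\le\vw(C)$, and taking the supremum over $x,y\in C$ yields $\width(C)\le\vw(C)$. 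The main obstacle is the third paragraph: converting the worker's per-type utility maximization into a clean FOSD-monotonicity statement for the induced outcome distributions, and in particular disposing of the tie case where the comparative-statics inequality is tight (this is exactly what the consistent-tie-breaking assumption buys). The geometry and the final bookkeeping are routine.
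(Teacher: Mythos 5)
Your proof is correct and follows essentially the same route as the paper's: (i) reduce to the monotonicity claim that $\incr(x)\preceq\incr(x')$ implies $V(x)\le V(x')$ and $P(x)\le P(x')$, (ii) establish it type-by-type by adding the two optimality inequalities to get the comparative-statics inequality, ruling out the possibility that $e$ strictly FOSD-dominates $e'$ and using consistent tie-breaking to dispose of the tie case, and (iii) finish by sandwiching $V(x),P(x)$ for $x\in C$ between the anchor values. The only cosmetic difference is that you phrase the key inequality in outcome space (nonnegative nondecreasing $\Delta=x'-x$) where the paper uses the increment representation and the dot product $(\vec{x}-\vec{x}')\cdot(\vec{F}-\vec{F}')$; these are equivalent by Abel summation.
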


Recall that the proof of this lemma is the only place in the paper where we use our assumptions on worker behavior.  All further developments hold for any model of worker behavior which satisfies Lemma~\ref{lm:virtual_width}.

\subsection{Description of the algorithm}

With these ideas in place, we are now ready to describe our algorithm.  The high-level outline of \zooming is very simple. The algorithm maintains a set of \emph{active} cells which cover the increment space at all times. Initially, there is only a single active cell comprising the entire increment space. In each round $t$, the algorithm chooses one active cell $C_t$ using an upper confidence index and posts contract $x_t$ sampled uniformly at random among the anchors of this cell. After observing the feedback, the algorithm may choose to \emph{zoom in} on $C_t$, removing $C_t$ from the set of active cells and activating all relevant quadrants thereof, where the \emph{quadrants} of cell $C$ are defined as the $2^m$ sub-cells of half the size for which one of the corners is the center of $C$.
In the remainder of this section, we specify how the cell $C_t$ is
chosen (the \emph{selection rule}), and how the algorithm decides
whether to zoom in on $C_t$ (the \emph{zooming rule}).

Let us first introduce some notation. Consider cell $C$ that is active in
some round $t$. Let $U(C)$ be the expected utility from a single round
in which $C$ is chosen by the algorithm, i.e., the average expected
utility of the anchor(s) of $C$. Let $n_t(C)$ be the number of times
this cell has been chosen before round $t$. Consider all rounds in
which $C$ is chosen by the algorithm before round $t$. Let $U_t(C)$ be
the average utility over these rounds. For a composite cell $C$, let
$V^+_t(C)$ and $P^+_t(C)$ be the average value and average payment
over all rounds when anchor $x^+(C)$ is chosen. Similarly, let
$V^-_t(C)$ and $P^-_t(C)$ be the average value and average payment
over all rounds when anchor $x^-(C)$ is chosen. Accordingly, we can
estimate the virtual width of composite cell $C$ at time $t$ as
\begin{align}\label{eq:virt-width-estimate}
    W_t(C) =  \left( V^+_t(C) - P^-_t(C) \right) - \left( V^-_t(C) - P^+_t(C) \right).
\end{align}

To bound the deviations, we define the \emph{confidence radius} as
\begin{align}\label{eq:conf-rad}
    \rad_t(C) = \sqrt{\confC\,\log(T)/n_t(C)},
\end{align}
for some absolute constant $\confC$; in our analysis, $\confC\geq 16$
suffices. We
will show that with high probability all sample averages
defined above will stay within $\rad_t(C)$ of the respective
expectations. If this high probability event holds, the  width
estimate $W_t(C)$ will always be within $4\, \rad_t(C)$ of $\vw(C)$.

\xhdr{Selection rule.}
Now we are ready to complete the algorithm. The selection rule is as follows. In each round $t$, the algorithm chooses an active cell $C$ with maximal \emph{index} $I_t(\cdot)$. $I_t(C)$ is an upper confidence bound on the expected utility of \emph{any} candidate contract in $C$, defined as
\begin{align}\label{eq:index}
I_t(C) = \begin{cases}
   U_t(C) + \rad_t(C) & \text{if $C$ is an atomic cell},\\
   U_t(C) + W_t(C) + 5\,\rad_t(C) & \text{otherwise}.
\end{cases}
\end{align}

\xhdr{Zooming rule.}
We zoom in on a composite cell $C_t$ if
\[
	W_{t+1}(C_t)>5\,\rad_{t+1}(C_t),
\]
i.e., the uncertainty due to random sampling, expressed by the confidence radius, becomes sufficiently
small compared to the uncertainty due to discretization, expressed by the virtual width.  We never zoom in on atomic cells.
The pseudocode is summarized in Algorithm~\ref{alg:zooming}.

\newcommand{\TAB}{\hspace{8mm}}
\newcommand{\MyComment}[1]{$\backslash\backslash$ {#1}}
\begin{algorithm}[h]
\caption{\zooming}
\label{alg:zooming}
\begin{algorithmic}
\STATE {\bf Inputs:} subset $\Xcand\subset X$ of \emph{candidate contracts}.
\STATE {\bf Data structure:} Collection $\mA$ of cells.
    Initially, $\mA = \{\, [0,1]^m \,\}$.
\STATE {\bf For} each round $t=1$ to $T$
\STATE\TAB Let $C_t = \argmax_{C\in \mA}\; I_t(C)$,
    where $I_t(\cdot)$ is defined as in \refeq{eq:index}.
\STATE\TAB Sample contract $x_t$ u.a.r. among the anchors of $C_t$.
    \MyComment{Anchors are defined in Section~\ref{sec:discretization}.}
\STATE\TAB Post contract $x_t$ and observe feedback.
\STATE\TAB {\bf If} $|C\cap \Xcand|> 1$ {\bf and}
    $5\, \rad_{t+1}(C_t) < W_{t+1}(C_t)$
    {\bf then}
\STATE\TAB\TAB
    $\mA \leftarrow
        \mA \cup \{ \text{all relevant quadrants of $C_t$}\} \setminus\{ C_t\}$.
    \MyComment{$C$ is \emph{relevant} if $|C\cap \Xcand|\geq 1$.} \\
\end{algorithmic}
\end{algorithm}

\xhdr{Integer payments.}
In practice it may be necessary to only allow contracts in which all payments are integer multiples of some amount $\minSize$, e.g., whole cents. (In this case we can assume that candidate contracts have this property, too.) Then we can redefine the two anchors of each composite cell: the maximal (resp., minimal) anchor is the nearest allowed contract to the maximal (resp., minimal) corner. Width can be redefined as a $\sup$ over all allowed contracts in a given cell. With these modifications, the analysis goes through without significant changes. We omit further discussion of this issue.

\OMIT{ 
\asedit{The anchors may or may not be in the set $\Xcand$. In some settings, e.g. if $\Xcand$ is a uniform mesh, it may be preferable for an algorithm to only choose candidate contracts. Then one can try to redefine the anchors of each composite cell $C$ so that they are candidate contracts. Instead of the maximal corner $x^+(C)$ we can use the \emph{maximal anchor} $y^+(C)$, defined as follows. Pick some parameter $\delta>0$. If there exists a candidate contract $y$ in which all increment payments are at least as large as in $x^+(C)$ and $\|y-x^+(C)\|_1 \leq \delta$, then $y^+(C)$ is one such contract $y$ that minimizes $\|y-x^+(C)\|_1$. Else, $y^+(C) = x^+(C)$. The minimal anchor $y^-(C)$ can be defined similarly, and used instead of the minimal corner $x^-(C)$. With these modifications, the analysis goes through without significant changes. We omit further discussion of this issue.}
} 

\subsection{Proof of Lemma~\ref{lm:virtual_width} (virtual width)}
\label{sec:width-proof}


For two vectors $\vec{x},\vec{x}'\in \Re^m$, write $\vec{x}' \succeq \vec{x}$ if $\vec{x}'$ pointwise dominates $\vec{x}$, i.e., if $\vec{x}'_j\geq \vec{x}_j$ for all $j$. For two monotone contracts $x,x'$, write $x'\succeq x$ if $\incr(x')\succeq \incr(x)$.

 \begin{claim}\label{cl:fosd}
Consider a worker whose type satisfies the FOSD assumption and two weakly bounded contracts $x,x'$ such that $x'\succeq x$. Let $e$ (resp., $e'$) be the effort levels exerted by this worker when he is offered contract $x$ (resp., $x'$). Then $e$ does not have FOSD over $e'$.
\end{claim}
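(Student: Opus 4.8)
The plan is to argue by contradiction: suppose $e$ \emph{does} have FOSD over $e'$, and derive a contradiction with the fact that the worker chose $e$ under contract $x$ and $e'$ under the pointwise-larger contract $x'$. The strategy is to compare the worker's incentives under the two contracts by looking at the quantity ``expected payment under a contract as a function of the chosen effort level,'' and exploit that moving from $x$ to $x'$ only increases increment payments, while moving from $e'$ to $e$ (under the FOSD supposition) only shifts outcome mass upward.

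First I would rewrite the worker's expected payment in terms of increments and the upper-tail probabilities $F_i(\pi\mid e)$. Since $x(\pi) = \sum_{\pi' \le \pi} \vec{x}_{\pi'}$ with $\vec{x} = \incr(x)$, an Abel summation gives
\begin{align*}
P_i(x\mid e) \;=\; \sum_{\pi\in\mO} x(\pi)\, f_i(\pi\mid e) \;=\; \sum_{j=1}^{m} \vec{x}_j \, F_i(j\mid e).
\end{align*}
Thus $P_i(x'\mid e) - P_i(x\mid e) = \sum_j (\vec{x}'_j - \vec{x}_j)\, F_i(j\mid e)$ with all coefficients $\vec{x}'_j - \vec{x}_j \ge 0$, and $P_i(x'\mid e') - P_i(x\mid e') = \sum_j (\vec{x}'_j - \vec{x}_j)\, F_i(j\mid e')$. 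Now bring in the supposition that $e$ has FOSD over $e'$: this means $F_i(j\mid e) \ge F_i(j\mid e')$ for every $j$ (with strict inequality somewhere). Combining these, for every non-negative increment vector, and in particular for $\vec{x}' - \vec{x}$,
\begin{align*}
P_i(x'\mid e) - P_i(x\mid e) \;\ge\; P_i(x'\mid e') - P_i(x\mid e'),
\end{align*}
i.e.\ $P_i(x'\mid e) + P_i(x\mid e') \ge P_i(x\mid e) + P_i(x'\mid e')$.

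Next I would convert this payment inequality into a utility inequality by subtracting the (contract-independent) costs $c_i(e)$ and $c_i(e')$, which cancel in exactly the right way: writing $U_i(x\mid e) = P_i(x\mid e) - c_i(e)$, the above becomes
\begin{align*}
U_i(x'\mid e) + U_i(x\mid e') \;\ge\; U_i(x\mid e) + U_i(x'\mid e'),
\end{align*}
or equivalently $\big(U_i(x'\mid e) - U_i(x'\mid e')\big) \ge \big(U_i(x\mid e) - U_i(x\mid e')\big)$. But $e'$ is optimal for the worker under $x'$, so the left side is $\le 0$; hence the right side is $\le 0$, meaning $e'$ is at least as good as $e$ under contract $x$. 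Since $e$ was chosen under $x$, we have a tie between $e$ and $e'$ under $x$ (and also, by symmetry of the inequality chain, under $x'$). Here is where I expect the main subtlety: I need to rule out this tie, or rather use the \emph{consistent tie-breaking} assumption to handle it — if $e$ and $e'$ tie under both $x$ and $x'$, consistent tie-breaking would force the worker to pick the same effort level under both, contradicting that she picks $e$ under $x$ and $e'$ under $x'$ (assuming $e \ne e'$; if $e = e'$ the claim is trivial since no effort level has FOSD over itself). The remaining care is in checking that the inequalities are genuinely strict when $e \ne e'$ — this should follow from the strict inequality in the definition of FOSD together with the fact that $x' \succeq x$ forces $\vec{x}' \ne \vec{x}$ in the nontrivial case, but one must confirm the strictness lands on a coordinate where $\vec{x}'_j > \vec{x}_j$; if it does not, one again falls back on consistent tie-breaking. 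So the main obstacle is a clean bookkeeping of strict-vs-weak inequalities and the correct invocation of consistent tie-breaking to close the argument.
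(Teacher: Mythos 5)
Your proposal is correct and follows essentially the same route as the paper: both arguments reduce to the sign of the bilinear form $(\vec{x}'-\vec{x})\cdot(\vec{F}-\vec{F}')$ (you establish it as an increasing-differences inequality, the paper gets it by adding the two incentive constraints and then invoking the sign structure), and both close by showing $e$ and $e'$ must be tied under both contracts and invoking consistent tie-breaking. Your worry at the end about where the strictness lands is well-founded and is exactly why consistent tie-breaking is needed in the general case; your resolution matches the paper's.
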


\begin{proof}
For the sake of contradiction, assume that $e$ has FOSD over $e'$. Note that $e\neq e'$.

Let $i$ be the worker's type. Recall that $F_i(\pi|e)$ denotes the probability of generating an outcome $\pi'\geq \pi$ given the effort level $e$.  Define
    $\vec{F} = \left(\; F_i(1|e) \LDOTS F_i(m|e) \;\right)$,
and define $\vec{F}'$ similarly for $e'$.

Let  $\vec{x}$ and $\vec{x}'$ be the increment representations for $x$ and $x'$. Given contract $x$, the worker's expected utility for effort level $e$ is
$U_i(x|e) =  \vec{x} \cdot\vec{F} - c_i(e)$. Since $e$ is the optimal effort level given this contract, we have $U_i(x|e) \geq U_i(x|e') $, and therefore
\begin{align*}
	\vec{x} \cdot \vec{F} - \vec{x} \cdot\vec{F}' \geq  c_i(e)-c_i(e').
\end{align*}
Similarly, since $e'$ is the optimal effort level given contract $\vec{x}'$, we have
\begin{align*}
	\vec{x}'\cdot\vec{F}' - \vec{x}' \cdot\vec{F} \geq  c_i(e')-c_i(e).
\end{align*}
Combining the above two inequalities, we obtain
\begin{align}\label{eq:pf:cl:fosd}
   (\vec{x}-\vec{x}')\cdot (\vec{F}-\vec{F}') \geq 0.
\end{align}

Note that if \refeq{eq:pf:cl:fosd} holds with equality then
    $U_i(x|e) = U_i(x|e')$ and $U_i(x'|e) = U_i(x'|e')$,
so the worker breaks the tie between $e$ and $e'$ in a different way for two different contracts. This contradicts the consistent tie-breaking assumption. However, \refeq{eq:pf:cl:fosd} cannot hold with a strict equality, either, because
    $\vec{x}' \succeq \vec{x}$
and (since $e$ has FOSD over $e'$) we have $\vec{F}\succeq \vec{F}'$ and
    $\vec{F}_\pi > \vec{F}'_\pi$
for some outcome $\pi>0$. Therefore we obtain a contradiction, completing the proof.
\end{proof}

The proof of Claim~\ref{cl:fosd}  is the only place in the paper where we directly use the consistent tie-breaking assumption. (But the rest of the paper relies on this claim.)

\begin{claim}
Assume all types satisfy the FOSD assumption. Consider weakly bounded contracts $x,x'$ such that $x'\succeq x$. Then $V(x')\geq V(x)$ and $P(x')\geq P(x)$.
\end{claim}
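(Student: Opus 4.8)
The plan is to use Claim~\ref{cl:fosd} together with the FOSD assumption to show that raising the increment payments pointwise can only (weakly) improve the \emph{distribution over outcomes} for every worker type, and then to observe that both $V$ and $P$ are monotone functionals of that distribution. Fix a worker of type $i$, and let $e$ (resp.\ $e'$) be the effort level this worker exerts under contract $x$ (resp.\ $x'$). By Claim~\ref{cl:fosd}, $e$ does not have FOSD over $e'$; since type $i$ satisfies the FOSD assumption, any two distinct effort levels are FOSD-comparable, so either $e=e'$ or $e'$ has FOSD over $e$. In either case $F_i(\pi\mid e')\geq F_i(\pi\mid e)$ for every outcome $\pi$, i.e.\ the outcome distribution induced by $(x',e')$ first-order stochastically dominates the one induced by $(x,e)$.

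Next I would push this through the expectations. Recall $\mO=\{0,1\LDOTS m\}$ is ordered by increasing value, so $v(\cdot)$ is nondecreasing along this order (with $v(0)=0$). A standard fact about FOSD is that if one distribution over $\mO$ dominates another, then the expectation of any nondecreasing function is at least as large; concretely, writing $V_i(x\mid e)=\sum_\pi v(\pi)f_i(\pi\mid e)$ and using Abel summation to re-express this in terms of the tail sums $F_i(\cdot\mid e)$, one gets $V_i(x'\mid e')\geq V_i(x\mid e)$ because each $F_i(\pi\mid e')\geq F_i(\pi\mid e)$ is multiplied by a nonnegative coefficient $v(\pi)-v(\pi-1)$. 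Taking expectation over the type $i\sim$ supply distribution yields $V(x')\geq V(x)$.

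For the payment inequality the argument is the same in spirit but requires a little more care, because the payment depends on the contract as well as on the outcome distribution, and we are comparing $P_i(x\mid e)$ with $P_i(x'\mid e')$ — \emph{both} the contract and the induced effort change. Here I would write $P_i(x\mid e)=\sum_\pi x(\pi)f_i(\pi\mid e)=\vec{x}\cdot\vec{F}$ in increment coordinates (with $\vec{F}=(F_i(1\mid e)\LDOTS F_i(m\mid e))$, as in the proof of Claim~\ref{cl:fosd}), and similarly $P_i(x'\mid e')=\vec{x}'\cdot\vec{F}'$. Then chain two inequalities: $\vec{x}'\cdot\vec{F}'\geq \vec{x}\cdot\vec{F}'$ since $\vec{x}'\succeq\vec{x}$ and $\vec{F}'\geq \vec 0$ (increasing the increment payments can only increase the expected payment for a \emph{fixed} outcome distribution), and $\vec{x}\cdot\vec{F}'\geq \vec{x}\cdot\vec{F}$ since $\vec{x}\geq\vec 0$ and $\vec{F}'\succeq\vec{F}$ (the dominating effort distribution, under the fixed contract $x$, yields a larger expected payment). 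Combining gives $P_i(x'\mid e')\geq P_i(x\mid e)$ for each type $i$, and averaging over the supply distribution gives $P(x')\geq P(x)$.

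The main obstacle I anticipate is the payment step: one must be careful that monotonicity is genuinely preserved when \emph{both} the contract and the worker's best-response effort move at once, rather than accidentally invoking only ``higher payments $\Rightarrow$ higher cost'' (which would go the wrong way). Isolating the two monotonicities — in the contract (via $\vec{x}'\succeq\vec{x}$, $\vec{F}\geq 0$) and in the effort distribution (via $\vec{F}'\succeq\vec{F}$, $\vec{x}\geq 0$) — and recognizing that $\vec{F}'\succeq\vec{F}$ is exactly what Claim~\ref{cl:fosd} plus the FOSD assumption delivers is the crux; everything else is a short Abel-summation computation.
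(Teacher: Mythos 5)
Your proposal is correct and follows essentially the same route as the paper: both rule out $e$ having FOSD over $e'$ via Claim~\ref{cl:fosd}, conclude $\vec{F}'\succeq\vec{F}$, and then pass to increment coordinates so that $V=\vec{v}\cdot\vec{F}$, $P=\vec{x}\cdot\vec{F}$ and the monotonicity is immediate from $\vec{x}'\succeq\vec{x}\geq 0$ and $\vec{F}'\succeq\vec{F}\geq 0$. You merely spell out explicitly (the Abel summation for $V$ and the two-step chain $\vec{x}'\cdot\vec{F}'\geq\vec{x}\cdot\vec{F}'\geq\vec{x}\cdot\vec{F}$ for $P$) what the paper's proof leaves as ``it follows.''
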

\begin{proof}
Consider some worker, let $i$ be his type. Let $e$ and $e'$ be the chosen effort levels for contracts $\vec{x}$ and $\vec{x}'$, respectively. By the FOSD assumption, either $e=e'$, or $e'$ has FOSD over $e$, or $e$ has FOSD over $e'$. Claim~\ref{cl:fosd} rules out the latter possibility.

Define vectors $\vec{F}$ and $\vec{F'}$ as in the proof of Claim~\ref{cl:fosd}. Note that
    $\vec{F}' \succeq \vec{F}$.

Then
    $P = \vec{x}\cdot\vec{F}$ and $P' = \vec{x}'\cdot\vec{F}'$
is the expected payment for contracts $x$ and $x'$, respectively.
Further, letting $\vec{v}$ denote the increment representation of the requester's value for each outcome,
    $V = \vec{v}\cdot\vec{F}$ and $V' = \vec{v}\cdot\vec{F}'$
is the expected requester's value for contracts $x$ and $x'$, respectively.
Since $\vec{x}'\succeq\vec{x}$ and $\vec{F}' \succeq \vec{F}$, it follows that $P'\geq P$ and $V'\geq V$. Since this holds for each worker, this also holds in expectation over workers.
\end{proof}

To finish the proof of Lemma~\ref{lm:virtual_width}, fix a contract $x\in C$ and observe that $V(x^+) \geq V(x) \geq V(x^-)$ and $P(x^+) \geq P(x) \geq P(x^-)$,
where $x^+=x^+(C)$ and $x^-=x^-(C)$ are the two anchors.

\section{Regret bounds and discussion}
\label{sec:regret-bounds}

We present the main regret bound for \zooming. Formulating this result requires some new, problem-specific structure. Stated in terms of this structure, the result is somewhat difficult to access. To explain its significance, we state several corollaries, and compare our results to prior work.

\xhdr{The main result.} We start with the main regret bound. Like the algorithm itself, this regret bound is parameterized by the set $\Xcand$ of candidate contracts; our goal is to bound the algorithm's regret with respect to candidate contracts.


Recall that
    $\OPT(\Xcand) = \sup_{x\in \Xcand} U(x)$
is the optimal expected utility over candidate contracts. The algorithm's regret with respect to candidate contracts is
    $R(T|\Xcand) = T\,\OPT(\Xcand)-U$,
where $T$ is the time horizon and $U$ is the expected cumulative utility of the algorithm.

Define the \emph{badness} $\Delta(x)$ of a contract $x\in X$ as the difference in expected utility between an optimal candidate contract and $x$:
    $\Delta(x) = \OPT(\Xcand) - U(x)$.
Let
	$X_\eps = \{x \in \Xcand:\, \Delta(x) \leq \eps\}$.

We will only be interested in cells that can potentially be used by \zooming. Formally, we recursively define a collection of \emph{feasible} cells as follows: (i) the cell $[0,1]^m$ is feasible, (ii) for each feasible cell $C$, all relevant quadrants of $C$ are feasible. Note that the definition of a feasible cell implicitly depends on the set $\Xcand$ of candidate contracts.

Let $\mF_\eps$ denote the collection of all feasible, composite cells $C$ such that $\vw(C)\geq \eps$. For $Y\subset \Xcand$, let $\mF_\eps(Y)$ be the collection of all cells $C\in \mF_\eps$ that overlap with $Y$, and let
    $N_\eps(Y) = \abs{\mF_\eps(Y)}$; sometimes we will write $N_\eps(Y|\Xcand)$ in place of $N_\eps(Y)$
to emphasize the dependence on $\Xcand$.

Using the structure defined above, the main theorem is stated as follows. We prove this theorem in Section~\ref{sec:main-proof}.

\begin{theorem}\label{thm:regret}
Consider the \problem problem with all types satisfying the FOSD assumption and a constant number of outcomes. Consider \zooming, parameterized by some set $\Xcand$ of candidate contracts. Assume
    $T\geq\max(2^m+1,18)$.
There is an absolute constant $\SlackC>0$ such that for any $\delta>0$,
\begin{align}\label{eq:thm:regret}
R(T|\Xcand) \leq
    \delta T + O(\log T)
	       \sum_{\eps=2^{-j}\geq\delta:\; j\in \N}
                \frac{N_{\eps\,\SlackC}(X_{\eps}|\Xcand)}{\eps}.
\end{align}
\end{theorem}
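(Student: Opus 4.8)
This is a standard "zooming"-style regret analysis, adapted to the fact that the algorithm learns similarity on the fly via virtual width. The proof has three ingredients: (1) a clean-execution event on which all sample averages are accurate; (2) a per-cell argument bounding how many times any single feasible cell can be selected before it is zoomed in (or, for low-width cells, bounding its total regret contribution); (3) a charging/summation argument that aggregates these per-cell bounds into the stated sum over scales $\eps = 2^{-j}$.

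\textbf{Step 1: Clean event.} First I would define the clean event $\mathcal{E}$ to be that for every cell $C$ ever active, every round $t$, and every sample average tracked by the algorithm ($U_t(C)$ for atomic cells, and $V^\pm_t(C), P^\pm_t(C)$ for composite cells), the average is within $\rad_t(C)$ of its expectation. By Chernoff--Hoeffding together with a union bound over the (polynomially many in $T$) relevant cells and rounds, and using $\confC \geq 16$, $\Pr[\neg\mathcal{E}] \leq T^{-2}$ or so, which contributes only $O(1)$ to regret. On $\mathcal{E}$, combining the four deviations shows $|W_t(C) - \vw(C)| \leq 4\,\rad_t(C)$, exactly as the text anticipates, and the index $I_t(C)$ is a genuine upper confidence bound: for every candidate contract $x \in C$, $I_t(C) \geq U(x)$ on $\mathcal{E}$. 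The latter uses Lemma~\ref{lm:virtual_width} ($\width(C) \leq \vw(C)$) for composite cells, so that $U_t(C) + W_t(C) + 5\rad_t(C) \geq U(C) - \rad_t(C) + (\vw(C) - 4\rad_t(C)) + 5\rad_t(C) \geq U(C) + \width(C) \geq U(x)$.

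\textbf{Step 2: Per-cell bounds.} Work on $\mathcal{E}$ throughout. The selection rule picks $C_t = \argmax I_t$, and since the active cells cover the increment space, there is always an active cell containing an optimal candidate contract, whose index is $\geq \OPT(\Xcand)$; hence $I_t(C_t) \geq \OPT(\Xcand)$. I would then show two things for a cell $C$ selected in round $t$. First, every candidate contract in $C$ has badness $\Delta(x) = O(\vw(C) + \rad_t(C))$: indeed $\OPT(\Xcand) \leq I_t(C) \leq U(C) + \vw(C) + 9\rad_t(C)$ (using $W_t \leq \vw + 4\rad_t$), and $U(C) \leq U(x) + \width(C) \leq U(x) + \vw(C)$, so $\Delta(x) \leq 2\vw(C) + 9\rad_t(C)$; for an atomic $C$ the same with $\vw(C)$ replaced by $0$. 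Second, the single-round regret when $C$ is chosen is $\OPT(\Xcand) - U(C) \leq \vw(C) + 9\rad_t(C)$ (composite) or $\leq \rad_t(C)$ (atomic). Now, for a \emph{composite} feasible cell $C$: so long as $C$ is active and has been selected $n$ times, the zooming rule not firing means $W_{t+1}(C) \leq 5\rad_{t+1}(C)$, whence $\vw(C) \leq 9\rad_{t+1}(C) = 9\sqrt{\confC \log T / (n+1)}$, i.e. $n + 1 \leq 81\,\confC\log T / \vw(C)^2$. So $C$ is selected at most $O(\log T / \vw(C)^2)$ times before being zoomed in, and each such selection costs $O(\vw(C) + \rad_t(C)) = O(\vw(C))$ in regret (since $\rad_t(C) \lesssim \vw(C)$ whenever the zooming rule has not yet fired), for a total of $O(\log T / \vw(C))$ regret per composite cell. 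For an \emph{atomic} cell $C$ containing contract $x$, it may be selected many times, but the $k$-th selection costs $\rad_t(C) = \sqrt{\confC \log T / k}$; moreover from the UCB property $I_t(C) \geq \OPT(\Xcand)$ can only hold while $\rad_t(C) \gtrsim \Delta(x)$, giving $n_t(C) = O(\log T / \Delta(x)^2)$ and total regret $O(\log T/\Delta(x))$ from that atomic cell.

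\textbf{Step 3: Summation over scales.} Now I would split the time horizon's regret into (a) a $\delta T$ term absorbing all rounds in which a cell $C$ with $\vw(C) < \SlackC\delta$ (composite) or a contract with $\Delta < \delta$ (atomic) is played — since each such round has single-round regret $O(\delta)$ — and (b) the contributions of cells with larger width. For (b), bucket the composite feasible cells by dyadic scale: $C$ falls in bucket $j$ if $\vw(C) \in [2^{-j}, 2^{-j+1})$, i.e. $\vw(C) \asymp \eps := 2^{-j}$. Per Step 2, each cell in bucket $j$ contributes $O(\log T / \eps)$ regret. The number of feasible composite cells in bucket $j$ that \emph{get selected at all} is at most $N_{\eps}(X_{c\eps})$ for an appropriate constant $c$: a selected cell must (i) be feasible and have $\vw(C) \geq \eps$, so it lies in $\mF_\eps$, and (ii) contain a candidate contract $x$ with $\Delta(x) = O(\eps)$ by the badness bound of Step 2, so $C$ overlaps $X_{c\eps}$ — hence it is counted by $N_\eps(X_{c\eps}) = N_{\eps}(X_{c\eps}|\Xcand)$, and after rescaling the argument of $N$ this matches $N_{\eps\SlackC}(X_\eps)$ with a suitable absolute constant $\SlackC$ (choosing $\SlackC$ to simultaneously handle both the "$\vw \geq \eps\SlackC$" and the "$\Delta \leq \eps$" sides is where $\SlackC$ comes from). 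Summing $O(\log T)\cdot N_{\eps\SlackC}(X_\eps)/\eps$ over $\eps = 2^{-j} \geq \delta$ gives exactly the theorem's sum. Atomic cells are folded in by noting that each atomic cell is itself a single feasible cell containing one contract $x$; grouping atomic cells by the dyadic scale of $\Delta(x)$ and using $N_\eps$-type counting (or a direct count of feasible atomic cells overlapping $X_\eps$, which is dominated by the same quantity) contributes at most a constant-factor increase, absorbable into the $O(\log T)$.

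\textbf{Main obstacle.} The delicate part is Step 3's bookkeeping: making the charging argument give precisely $N_{\eps\SlackC}(X_\eps|\Xcand)$ rather than something with mismatched constants inside and outside, i.e. verifying that a single absolute constant $\SlackC$ works to absorb (i) the slack between $\vw(C)$ and $\rad_t(C)$ at the moment of zooming, (ii) the factor relating $\vw(C)$ to the badness of contracts inside $C$, and (iii) the dyadic rounding — and checking that the parent-of-a-selected-cell bookkeeping (a cell is only activated if its parent was zoomed in, which requires the parent to have been selected) does not blow up the count by more than the $2^m$ = constant branching factor. The rest — clean event, UCB property, per-cell selection counts — is routine adaptation of the Lipschitz-MAB zooming analysis, with Lemma~\ref{lm:virtual_width} doing the work that Lipschitzness does there.
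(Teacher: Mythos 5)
Your plan follows the paper's proof structure closely: clean event, UCB property via Lemma~\ref{lm:virtual_width}, per-cell selection bounds, and a dyadic charging argument. However, two of your Step~2/3 claims are wrong as stated, and a concentration subtlety is missing. You write that $\rad_t(C)\lesssim \vw(C)$ ``whenever the zooming rule has not yet fired,'' and from this conclude each selection of a composite $C$ costs $O(\vw(C))$ regret. This is backwards: if the zooming rule has not fired after $n$ selections, the clean event gives $\vw(C)\le W_{t+1}(C)+4\rad_{t+1}(C)\le 9\rad_{t+1}(C)$, i.e.\ $\vw(C)\lesssim\rad_t(C)$ --- you derived exactly this in the preceding sentence and then inverted it. Early selections can have $\rad_t(C)$ of order $1$, so per-round regret can be $\gg\vw(C)$. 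The per-cell total $O(\log T/\vw(C))$ is still correct (sum $\sum_{k\le n}O(\sqrt{\log T/k})=O(\sqrt{n\log T})$ with $n=O(\log T/\vw(C)^2)$), but not by a uniform per-round bound. For the same reason, your claim in Step~3 that rounds spent in cells with $\vw(C)<\SlackC\delta$ each contribute $O(\delta)$ regret is false. The paper sidesteps both issues by bucketing cells by badness $\Delta(C)$ rather than by $\vw(C)$: Invariant~\ref{cl:zooming-invariants}(c) gives $\Delta(C)\le O(\rad_t(C))$ for every active cell, hence $n(C)=O(\log T/\Delta(C)^2)$ and contribution $O(\log T/\Delta(C))$; the relation $\Delta(C)\le O(\vw(C))$ is invoked only once, for a cell at the moment it is zoomed in, and leaves are charged to their zoomed-in parents. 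The $\delta T$ term then trivially absorbs all cells with $\Delta(C)\le\delta$.

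Separately, the clean-event bound $\abs{V^\pm_t(C)-V^\pm(C)}\le\rad_t(C)$ (and likewise for $P^\pm$) does not follow directly from Chernoff--Hoeffding plus a union bound: those averages are taken over only the (random) subset of rounds in which the given anchor was sampled, while $\rad_t(C)$ is defined via $n_t(C)$, the total number of selections of $C$. You need an auxiliary high-probability event guaranteeing each anchor is chosen roughly $n_t(C)/2$ times, then combine the two events; the paper proves this separately (Lemma~\ref{lem:clean_prob}(b)). A smaller point: a direct ``union bound over all relevant cells'' is not available because $\Xcand$ --- and hence the collection of feasible cells --- may be infinite. The number of cells \emph{ever activated} is at most $1+2^mT\le T^2$, and the paper's proof conditions on the identity of the $j$-th activated cell and integrates, rather than union-bounding over all potential cells. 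These are all fixable and the approach is right, but they would need to be addressed in a complete writeup.
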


\begin{remark}
As discussed in Section~\ref{subsec:discussion}, we target the practically important case of a small number of outcomes. The impact of larger $m$ is an exponential dependence on $m$ in the $O()$ notation, and, more importantly, increased number of candidate policies (typically exponential in $m$ for a given granularity).
\end{remark}

\begin{remark}
Our regret bounds do not depend on the number of worker types, in line with prior work on dynamic pricing. Essentially, this is because bandit approaches tend to depend only on expected reward of a given ``arm'' (and perhaps also on the variance), not the finer properties of the distribution.
\end{remark}

\refeq{eq:thm:regret} has a shape similar to several other regret bounds in the literature, as discussed below. To make this more apparent, we observe that regret bounds in ``bandits in metric spaces'' are often stated in terms of covering numbers. (For a fixed collection $\mF$ of subsets of a given ground set $X$, the \emph{covering number} of a subset $Y\subset X$ relative to $\mF$ is the smallest number of subsets in $\mF$ that is sufficient to cover $Y$.)
The numbers $N_\eps(Y | \Xcand)$ are, essentially, about covering $Y$ with feasible cells with virtual width close to $\eps$. We make this point more precise as follows. Let an \emph{$\eps$-minimal cell} be a cell in $\mF_\eps$ which does not contain any other cell in $\mF_\eps$. Let $\Nmin_\eps(Y)$ be the covering number of $Y$ relative to the collection of $\eps$-minimal cells, i.e., the smallest number of \eps-minimal cells sufficient to cover $Y$.
Then
\begin{align}\label{eq:Nmin}
N_\eps(Y)
        \leq \cel{\log\tfrac{1}{\minSize}}\; \Nmin_\eps(Y)
\;\text{for any $Y\subset \Xcand$ and $\eps\geq 0$},
\end{align}
where $\minSize$ is the smallest size of a feasible cell.%
\footnote{To prove \refeq{eq:Nmin}, observe that for each cell $C\in \mF_\eps(Y)$ there exists an $\eps$-minimal cell $C'\subset C$, and for each $\eps$-minimal cell $C'$ there exist at most $\cel{\log\tfrac{1}{\minSize}}$ cells $C\in \mF_\eps(Y)$ such that $C'\subset C$.}
Thus, \refeq{eq:thm:regret} can be easily restated using the covering numbers $\Nmin_\eps(\cdot)$ instead of $N_\eps(\cdot)$.

\OMIT{
\begin{claim}\label{cl:cov-nums-M}
$N_\eps(Y)
    \leq \cel{\log\tfrac{1}{\minSize}}\; \Nmin_\eps(Y)$
for any $Y\subset \Xcand$ and $\eps\geq 0$.
\end{claim}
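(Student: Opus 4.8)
The plan is to prove \refeq{eq:Nmin} by building a map $\phi$ from $\mF_\eps(Y)$ into a fixed minimum $\eps$-minimal cover of $Y$, each of whose fibers has size at most $\cel{\log\tfrac{1}{\minSize}}$; summing the fiber sizes then yields the claim. Two structural facts carry the argument. First, the feasible cells form a tree of dyadic sub-cubes of $[0,1]^m$: each non-root feasible cell is a relevant quadrant of a unique feasible parent, so any two feasible cells are either nested or have disjoint interiors, and the feasible cells containing a fixed feasible cell $C'$ form a chain $C' = D_0 \subsetneq D_1 \subsetneq \dots \subsetneq [0,1]^m$ along which the side length at least doubles at each step; since the smallest feasible cell has side $\minSize$, this chain has at most $\cel{\log\tfrac{1}{\minSize}}$ members. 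Second, if $C'$ is $\eps$-minimal then no cell of $\mF_\eps$ is strictly contained in $C'$, immediately from the definition of $\eps$-minimality. (If $Y$ admits no $\eps$-minimal cover then $\Nmin_\eps(Y) = \infty$ and the inequality is vacuous, so we may assume such a cover exists.)

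To assemble these, fix a minimum $\eps$-minimal cover $\mathcal{M}$ of $Y$, so $\abs{\mathcal{M}} = \Nmin_\eps(Y)$, and for each $C \in \mF_\eps(Y)$ choose a candidate contract $y_C \in C \cap Y$ together with a cell $\phi(C) \in \mathcal{M}$ containing $y_C$ (possible since $\mathcal{M}$ covers $Y$). This defines $\phi \colon \mF_\eps(Y) \to \mathcal{M}$, so $\abs{\mathrm{im}(\phi)} \le \Nmin_\eps(Y)$. Fix $C' \in \mathcal{M}$ and any $C \in \phi^{-1}(C')$: the cells $C$ and $C'$ are both feasible and both contain the point $y_C$, hence are nested, and $C \subsetneq C'$ would contradict $\eps$-minimality of $C'$ (since $C \in \mF_\eps$), so $C' \subseteq C$. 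Hence $\phi^{-1}(C')$ lies inside the chain of feasible cells containing $C'$, which has at most $\cel{\log\tfrac{1}{\minSize}}$ members. Summing over $\mathrm{im}(\phi)$ gives $N_\eps(Y) = \abs{\mF_\eps(Y)} = \sum_{C' \in \mathrm{im}(\phi)} \abs{\phi^{-1}(C')} \le \cel{\log\tfrac{1}{\minSize}} \cdot \abs{\mathrm{im}(\phi)} \le \cel{\log\tfrac{1}{\minSize}} \cdot \Nmin_\eps(Y)$, which is \refeq{eq:Nmin}.

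The step requiring the most care is the assertion that two feasible cells sharing the point $y_C$ must be nested: a candidate contract lying on a shared face of two incomparable feasible cells would break it. This is handled just as elsewhere in the paper --- fix a tie-breaking convention assigning each candidate contract to a single dyadic cell at every scale (equivalently, perturb the grid generically), so that the feasible cells containing any fixed candidate contract form a single nested chain and the ``nested or interior-disjoint'' dichotomy holds without exception. The remaining ingredients are routine combinatorics of the cell tree.
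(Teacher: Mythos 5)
Your argument is correct and matches the paper's intended proof of \refeq{eq:Nmin}: map each $C \in \mF_\eps(Y)$ to an $\eps$-minimal cell nested inside it, and bound each fiber by the depth $\cel{\log\tfrac{1}{\minSize}}$ of the feasible-cell tree. You spell out two points the paper's terse footnote leaves implicit --- routing $\phi$ through a fixed minimum cover $\mathcal{M}$ so the image has size at most $\Nmin_\eps(Y)$ (rather than merely being some collection of $\eps$-minimal sub-cells), and flagging the tie-breaking needed when a candidate contract sits on a shared cell face --- both of which tighten the argument without changing its structure.
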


\begin{proof}
For each cell $C\in \mF_\eps(Y)$, there exists an $\eps$-minimal cell $C'\subset C$. For each $\eps$-minimal cell $C'$, there exist at most $\cel{\log\tfrac{1}{\minSize}}$ cells $C\in \mF_\eps(Y)$ such that $C'\subset C$.
\end{proof}
} 

\xhdr{Corollary: Polynomial regret.}
Literature on regret-minimization often states ``polynomial" regret bounds of the form
    $R(T) = \tilde{O}(T^{\gamma})$, $\gamma<1$.
While covering-number regret bounds are more precise and versatile, the exponent $\gamma$ in a polynomial regret bound expresses algorithms' performance in a particularly succinct and lucid way.


For ``bandits in metric spaces'' the exponent $\gamma$ is typically determined by an appropriately defined notion of ``dimension", such as the covering dimension,%
\footnote{Given covering numbers $N_\eps(\cdot)$, the \emph{covering dimension} of $Y$ is the smallest $d\geq 0$ such that $N_\eps(Y) = O(\eps^{-d})$ for all $\eps>0$.}
which succinctly captures the difficulty of the problem instance.
Interestingly, the dependence of $\gamma$ on the dimension $d$ is typically of the same shape;
    $\gamma = (d+1)/(d+2)$,
for several different notions of ``dimension".  In line with this tradition, we define the \emph{width dimension}:
\begin{align} \label{eq:width-dim}
\WidthDim_\alpha = \inf \left\{ d\geq 0:\; N_{\eps\,\SlackC}(X_{\eps}|\Xcand)
    \leq \alpha\, \eps^{-d} \;
    \text{for all $\eps>0$} \right\}, \; \alpha>0.
\end{align}
Note that the width dimension depends on $\Xcand$ and the problem instance, and is parameterized by a constant $\alpha>0$. By optimizing the choice of $\delta$ in \refeq{eq:thm:regret}, we obtain the following corollary.

\begin{corollary}\label{cor:regret-dim}
Consider the the setting of Theorem~\ref{thm:regret}. For any $\alpha>0$, let $d=\WidthDim_\alpha$. Then
\begin{align}\label{eq:cor:regret-dim}
 R(T|\Xcand) \leq O(\alpha\,\log T)\; T^{(1+d)/(2+d)}.
\end{align}
\end{corollary}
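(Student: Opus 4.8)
The plan is to derive Corollary~\ref{cor:regret-dim} directly from Theorem~\ref{thm:regret} by choosing the free parameter $\delta$ optimally, so this is a short calculation rather than a new argument. First I would substitute the definition of the width dimension \refeq{eq:width-dim} into the sum in \refeq{eq:thm:regret}: by definition of $d=\WidthDim_\alpha$, for every $\eps>0$ we have $N_{\eps\,\SlackC}(X_{\eps}|\Xcand)\leq \alpha\,\eps^{-d}$, so each term of the sum is at most $\alpha\,\eps^{-(d+1)}$. Hence
\begin{align*}
R(T|\Xcand) \leq \delta T + O(\alpha\log T)\sum_{\eps=2^{-j}\geq \delta:\; j\in\N}\eps^{-(d+1)}.
\end{align*}

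The key step is to bound the geometric-type sum. Since the values $\eps$ range over powers of two $2^{-j}$ that are at least $\delta$, the quantities $\eps^{-(d+1)}=2^{j(d+1)}$ form an increasing geometric progression with ratio $2^{d+1}\geq 2$, dominated by its largest term, which corresponds to the smallest admissible $\eps$, namely $\eps\approx\delta$. Thus the sum is $O(\delta^{-(d+1)})$ (with the constant depending only on $d$, which is absorbed since $d$ is treated as a constant of the instance), giving
\begin{align*}
R(T|\Xcand) \leq \delta T + O(\alpha\log T)\,\delta^{-(d+1)}.
\end{align*}
Now I would optimize over $\delta$: the two terms are balanced when $\delta T \approx \alpha\log T\cdot \delta^{-(d+1)}$, i.e. $\delta^{d+2}\approx \alpha\log T/T$, so the choice $\delta = (\alpha\log T / T)^{1/(d+2)}$ (up to rounding $\delta$ to the nearest power of two, which only costs a constant factor) yields $R(T|\Xcand) \leq O(\alpha\log T)\,T^{(d+1)/(d+2)}$, which is \refeq{eq:cor:regret-dim} after rewriting the exponent as $(1+d)/(2+d)$.

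There is essentially no hard obstacle here; the only points requiring care are bookkeeping ones. One must check that the chosen $\delta$ is a legitimate value for Theorem~\ref{thm:regret} (it is positive, and one can round it down to a power of two without affecting the bound by more than a constant), that $\WidthDim_\alpha$ is finite so the bound is non-vacuous (if it is infinite the statement is trivial), and that absorbing the $d$-dependent geometric-series constant into the $O(\cdot)$ is acceptable given that $d$ is a fixed property of the instance and $\Xcand$, consistent with how the rest of the paper treats the number of outcomes $m$ as constant. If one prefers to keep the dependence on $d$ explicit, the geometric sum contributes a factor $1/(1-2^{-(d+1)}) \leq 2$, so in fact no hidden dependence on $d$ is introduced at that step, and the corollary holds as stated.
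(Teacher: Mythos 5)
Your proof is correct and follows exactly the route the paper indicates (the paper says only ``By optimizing the choice of $\delta$ in \refeq{eq:thm:regret}, we obtain the following corollary''): substitute the definition of width dimension, bound the geometric sum by its largest term, and balance $\delta$. Your calculation in fact yields the slightly tighter $(\alpha\log T)^{1/(d+2)}\,T^{(d+1)/(d+2)}$, which implies the stated bound when $\alpha\log T\geq 1$, and your observations about the infimum in the definition of $\WidthDim_\alpha$ and the uniform geometric constant correctly handle the remaining bookkeeping.
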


The width dimension is similar to the ``zooming dimension" in \citet{LipschitzMAB-stoc08}
 and ``near-optimality dimension" in \citet{xbandits-nips08} in the work on ``bandits in metric spaces".

\subsection{Comparison to prior work}

\xhdr{Non-adaptive discretization.}
One approach from prior work that is directly applicable to the
\problem problem is \emph{non-adaptive  discretization}.
This is an algorithm, call it \NonAdaptive, which runs an
off-the-shelf MAB algorithm, treating a set of candidate contracts $\Xcand$ as arms.%
\footnote{To simplify the proofs of the lower bounds, we assume that the candidate contracts are randomly permuted when given to the MAB algorithm.}
For concreteness, and following the prior work \citep{Bobby-focs03,Bobby-nips04,LipschitzMAB-stoc08}, we use a well-known algorithm \UCB~\citep{bandits-ucb1} as an off-the-shelf MAB algorithm.

To compare \zooming with \NonAdaptive, it is useful to derive several
``worst-case" corollaries of Theorem~\ref{thm:regret}, replacing
$N_{\eps}(X_{\eps})$ with various (loose) upper bounds.%
\footnote{We use the facts that $X_\eps\subset \Xcand$, $N_\eps(Y) \leq N_0(Y)$, and $\Nmin_0(Y)\leq |Y|$ for all subsets $Y\subset X$.}

\begin{corollary}\label{cor:regret-pessimistic}
In the setting of Theorem~\ref{thm:regret}, the regret of \zooming can be upper-bounded as follows:
\begin{OneLiners}
\item[(a)]
$R(T|\Xcand) \leq  \delta T +
    \sum_{\eps=2^{-j}\geq\delta:\; j\in \N}\,
        \tildeO(\abs{X_\eps}/\eps)$,
    for each $\delta\in(0,1)$.
\item[(b)]
$R(T|\Xcand) \leq  \tildeO( \sqrt{T\,\abs{\Xcand}} )$.
\end{OneLiners}
Here the $\tildeO()$ notation hides the logarithmic dependence on $T$ and $\delta$.
\end{corollary}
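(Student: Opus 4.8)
The plan is to read off both inequalities from Theorem~\ref{thm:regret} by replacing the covering-type quantity $N_{\eps\SlackC}(X_{\eps}|\Xcand)$ with a crude bound in terms of $|X_{\eps}|$ and $|\Xcand|$, and then to perform an elementary optimization over $\delta$. The first step is the reduction $N_{\eps\SlackC}(X_{\eps}|\Xcand) = \tildeO(|X_{\eps}|)$. Since $\mF_\eps \subseteq \mF_{\eps'}$ whenever $\eps \ge \eps'$, the quantity $N_\eps(\cdot)$ is non-increasing in $\eps$, so $N_{\eps\SlackC}(X_{\eps}) \le N_0(X_{\eps})$. Combining \refeq{eq:Nmin}, which gives $N_0(Y) \le \cel{\log(1/\minSize)}\,\Nmin_0(Y)$, with the trivial bound $\Nmin_0(Y) \le |Y|$ (assign to each candidate contract in $Y$ one $0$-minimal feasible cell that contains it), and using that one never needs candidate contracts of granularity finer than $\mathrm{poly}(1/T)$ so that $\cel{\log(1/\minSize)} = O(\log T)$ is absorbed by $\tildeO(\cdot)$, we obtain $N_{\eps\SlackC}(X_{\eps}|\Xcand) = \tildeO(|X_{\eps}|)$, and a fortiori $\tildeO(|\Xcand|)$ since $X_{\eps} \subseteq \Xcand$.

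Part (a) now follows by substituting $N_{\eps\SlackC}(X_{\eps}|\Xcand) = \tildeO(|X_{\eps}|)$ into \refeq{eq:thm:regret} and merging the $O(\log T)$ prefactor into the $\tildeO(\cdot)$. For part (b), start from (a), bound $|X_{\eps}| \le |\Xcand|$ uniformly in $\eps$, pull this factor out of the sum, and evaluate the geometric series $\sum_{j \in \N:\, 2^{-j} \ge \delta} 2^{j} \le 2/\delta$; this gives $R(T|\Xcand) \le \delta T + \tildeO(|\Xcand|/\delta)$ for every $\delta \in (0,1)$. Choosing $\delta \asymp \sqrt{|\Xcand|/T}$ (up to logarithmic factors) balances the two terms and yields $R(T|\Xcand) = \tildeO(\sqrt{T\,|\Xcand|})$; if this value of $\delta$ is not in $(0,1)$ then $|\Xcand| = \Omega(T)$ and the claimed bound is vacuous, since the per-round utility is bounded and hence regret is always $O(T)$.

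I do not expect any real obstacle: the corollary is a deliberate coarsening of the main theorem, so the argument is essentially bookkeeping. The two places that deserve a moment of attention are (i) justifying that the $\cel{\log(1/\minSize)}$ factor coming from \refeq{eq:Nmin} is logarithmic in $T$ and therefore legitimately hidden inside $\tildeO(\cdot)$, and (ii) verifying that the $\delta$ chosen in part (b) either lies in $(0,1)$ or else makes the claimed bound trivially true.
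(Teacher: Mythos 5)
Your argument is correct and follows exactly the route the paper sketches in the footnote accompanying the corollary: use $N_{\eps\SlackC}(X_{\eps}) \le N_0(X_{\eps})$ (monotonicity of $N_\eps$ in $\eps$), then \refeq{eq:Nmin} with $\Nmin_0(Y)\le |Y|$, substitute into \refeq{eq:thm:regret} for part (a), and for part (b) bound $|X_\eps|\le|\Xcand|$, sum the geometric series to get $\delta T + \tildeO(|\Xcand|/\delta)$, and optimize $\delta$. The one place where you supply more justification than the paper is the $\cel{\log(1/\minSize)}$ factor coming from \refeq{eq:Nmin}: the paper's stated convention is that $\tildeO$ hides dependence on $T$ and $\delta$ only, and your remark that $\minSize$ may be taken to be $\mathrm{poly}(1/T)$ (so the factor is $O(\log T)$) is an informal patch rather than a rigorous derivation from the hypotheses, since $\Xcand$ is an arbitrary input; but this looseness is shared with the paper itself and does not reflect a gap in your understanding of the argument.
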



\OMIT{ 
\begin{proof}
For \refeq{eq:cor:regret-pessimistic-1}, we replace $X_\eps$ with $\Xcand$ in \refeq{eq:thm:regret} and observe that the summands are maximized for $\eps=\delta$.
For \refeq{eq:cor:regret-pessimistic-2}, we apply \refeq{eq:Nmin} to \refeq{eq:thm:regret} and use the fact that $\Nmin_0(Y) \leq |Y|$ for all $Y$.
For \refeq{eq:cor:regret-pessimistic-3}, we apply \refeq{eq:Nmin} to \refeq{eq:cor:regret-pessimistic-1}  and optimize the $\delta$.
\end{proof}
} 

The best known regret bounds for \NonAdaptive coincide with those in
Corollary~\ref{cor:regret-pessimistic} up to poly-logarithmic
factors. However, the regret bounds in Theorem~\ref{thm:regret} may be significantly better than the ones in Corollary~\ref{cor:regret-pessimistic}.
We further discuss this in the next section, in the context of a specific example.

\OMIT{ 
It is worth noting that in our setting it is not possible to choose $\Xcand$ so as to optimize the worst-case regret of \NonAdaptive. This is unlike the settings from prior work where non-adaptive discretization has been used (see Section~\ref{sec:related-work} for further discussion). \jenn{Why is it not possible?  I don't see this discussion in Section~\ref{sec:related-work}.}
} 

\OMIT{ 
\newcommand{\Tub}{T^{\mathtt{UB}}} 

\xhdr{Regret on an arbitrary time interval.} The above results concern regret on the time interval $[0,T]$, for known time horizon $T$. However, it is easy to extend all these results to regret on an arbitrary time interval of duration $T$. More precisely, an arbitrary time interval $[T_0,T_0+T]$ such that $T_0+T\leq \Tub$ for some known upper bound $\Tub$. The only change is that the log factor changes from $\log(T)$ to $\log(\Tub)$, both in the algorithm's specification and in the regret bounds. \OMIT{A regret bound $R(T)$ that holds for any subinterval $[T_0,T_0+T]$ of the time interval $[0,\Tub]$ will be called \emph{uniform}.}
} 


\xhdr{Bandits in metric spaces.}
Consider a variant of \problem in which an algorithm is given a priori information on similarity between contracts: a function $\mD: \Xcand\times \Xcand \to[0,1]$ such that $|U(x)-U(y)| \leq \mD(x,y)$ for any two candidate contracts $x,y$. If an algorithm is given this function $\mD$ (call such algorithm \emph{$\mD$-aware}), the machinery from ``bandits in metric spaces" \cite{LipschitzMAB-stoc08,xbandits-nips08} can be used to perform adaptive discretization and obtain a significant advantage over $\NonAdaptive$. We argue that we obtain similar results with \zooming without knowing the $\mD$.

In practice, the similarity information $\mD$ would be coarse, probably aggregated according to some predefined hierarchy. To formalize this idea, the hierarchy can be represented as a collection $\mF$ of subsets of $\Xcand$, so that $\mD(x,y)$ is a function of the smallest subset in $\mF$ containing both $x$ and $y$. The hierarchy $\mF$ should be natural given the structure of the contract space. One such natural hierarchy is the collection of all feasible cells, which corresponds to splitting the cells in half in each dimension. Formally,
    $\mD(x,y) = f(C_{x,y})$ for some $f$ with $f(C_{x,y})\geq \width(C_{x,y})$,
where $C_{x,y}$ is the smallest feasible cell containing both $x$ and $y$.

Given this shape of $\mD$, let us state the regret bounds for $\mD$-aware algorithms in \citet{LipschitzMAB-stoc08} and \citet{xbandits-nips08}. To simplify the notation, we assume that the action space is restricted to $\Xcand$. The regret bounds have a similar ``shape'' as that in Theorem~\ref{thm:regret}:
\begin{align}\label{eq:regret-LipMAB}
R(T|\Xcand) \leq
    \delta T + O(\log T)
	       \sum_{\eps=2^{-j}\geq\delta:\; j\in \N}
                \frac{N^*_{\Omega(\eps)}(X_{\eps})}{\eps},
\end{align}
where the numbers $N^*_{\eps}(\cdot)$ have a similar high-level meaning as $N_\eps(\cdot)$, and nearly coincide with $\Nmin_\eps(\cdot)$ when $\mD(x,y) = \vw(C_{x,y})$. One can use \refeq{eq:regret-LipMAB} to derive a polynomial regret bound like \refeq{eq:cor:regret-dim}.

For a more precise comparison, we focus on the results in \citet{LipschitzMAB-stoc08} . (The regret bounds in \citet{xbandits-nips08} are very similar in spirit, but are stated in terms of a slightly different structure.) The ``covering-type" regret bound in \citet{LipschitzMAB-stoc08} focuses on balls of radius at most $\eps$ according to distance $\mD$, so that $N^*_{\eps}(Y)$ is the smallest number of such balls that is sufficient to cover $Y$. In the special case
    $\mD(x,y) = \vw(C_{x,y})$
balls of radius $\leq\eps$ are precisely feasible cells of virtual width $\leq\eps$. This is very similar (albeit not technically the same) as the \eps-minimal cells in the definition of
  $\Nmin_{\eps}(\cdot)$.

Further, the covering numbers $N^*_{\eps}(Y)$ determine the ``zooming dimension":
\begin{align} \label{eq:zooming-dim-defn}
\ZoomDim_\alpha = \inf \left\{ d\geq 0:\; N^*_{\eps/8}(X_{\eps})
    \leq \alpha\, \eps^{-d} \;
    \text{for all $\eps>0$} \right\}, \; \alpha>0.
\end{align}
This definition coincides with the covering dimension in the worst case, and can be much smaller for ``nice" problem instances in which $X_{\eps}$ is a significantly small subset of $\Xcand$. With this definition, one obtains a polynomial regret bound which is version of \refeq{eq:cor:regret-dim} with $d = \ZoomDim_\alpha$.

We conclude that \zooming essentially matches the regret bounds for $\mD$-aware algorithms, despite the fact that $\mD$-aware algorithms have access to much more information.


\newcommand{\HighLowFormat}[1]{{\tt #1}\xspace}
\newcommand{\high}{\HighLowFormat{high}}
\newcommand{\High}{\HighLowFormat{High}}
\newcommand{\low}{\HighLowFormat{low}}
\newcommand{\Low}{\HighLowFormat{Low}}

\newcommand{\cH}{c_{\mathtt{h}}}
\newcommand{\vH}{v_{\mathtt{h}}}
\newcommand{\dH}{D_{\mathtt{h}}}

\section{A special case: the ``\hlexample''}
\label{sec:hlexample}

We apply the machinery in Section~\ref{sec:regret-bounds} on a special case, and we show that \zooming significantly outperforms \NonAdaptive.

The most basic special case is when there is just one non-null outcome. Essentially, each worker makes a strategic choice whether to accept or reject a given task (where ``reject" corresponds to the null effort level), and this choice is fully observable. This setting has been studied before \citep{Bobby-focs03,DynProcurement-ec12,Krause-www13,BwK-focs13}; we will call it \emph{\taskPricing}. Here the contract is completely specified by the price $p$ for the non-null outcome. The supply distribution is summarized by the function
    $S(p) =\Pr[ \text{accept}|p] $,
so that the corresponding expected utility is $U(p) = S(p)(v-p)$, where $v$ is the value for the non-null outcome. This special case is already quite rich, because $S(\cdot)$ can be an arbitrary non-decreasing function. By using adaptive discretization, we achieve significant improvement over prior work; see Section~\ref{sec:pricing} for further discussion.

\OMIT{and  its ``dual,'' \emph{dynamic pricing}, in which the principal is selling rather than buying~\cite{Bobby-focs03,BZ09,DynPricing-ec12,BwK-focs13}. }

\newcommand{\prH}{\theta_{\mathtt{h}}}

We consider a somewhat richer setting in which workers' strategic decisions are \emph{not} observable; this is a salient feature of our setting, called \emph{moral hazard} in the contract theory literature.
There are two non-null outcomes (\low and \high), and two non-null effort levels (\low and \high).
\Low outcome brings zero value to the requester, while $\high$ outcome brings value $v>0$.  \Low effort level inflicts zero cost on a worker and leads to low outcome with probability $1$. We assume that workers break ties between effort levels in a consistent way: \high better than \low better than {\tt null}. (Hence, as \low effort incurs zero cost, the only possible outcomes are \low and \high.)  We will call this the {\bf\em \hlexample}; it is perhaps the simplest example that features moral hazard.

In this example, the worker's type consists of a pair $(\cH,\prH)$, where $\cH\geq 0$ is the cost for \high effort and $\prH\in[0,1]$ is the probability of \high outcome given \high effort. Note that \taskPricing is equivalent to the special case $\prH=1$.

The following claim states a crucial property of the \hlexample.

\begin{claim}\label{cl:hlexample-general}
Consider the \hlexample with a fixed supply distribution. Then
the probability of obtaining \high outcome given contract $x$
    $\Pr[\text{\high outcome} \,|\, \text{contract $x$}]$
depends only on
    $p = x(\high)-x(\low)$;
denote this probability by $S(p)$. Moreover, $S(p)$ is non-decreasing in $p$. Therefore:
\begin{OneLiners}
\item expected utility is
    $U(x) = S(p) (v - p) - x(\low)$.

\item discretization error
    $\OPT(X)-\OPT(\Xcand(\minSize))$ is at most $3\minSize$, for any $\minSize>0$.
\end{OneLiners}
\end{claim}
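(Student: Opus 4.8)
The plan is to reduce everything to a single‑worker analysis and then, for the last bullet, to a one‑dimensional optimization over the ``increment price'' $p = x(\high)-x(\low)$.

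\textbf{Single‑worker analysis.} I would fix the supply distribution and consider one worker of type $(\cH,\prH)$ facing a contract $x$, writing $p = x(\high)-x(\low)\ge 0$. The worker's expected utility is $x(\low)$ for \low effort, $x(\low)+\prH p-\cH$ for \high effort, and $0$ for the {\tt null} effort. Since $x(\low)\ge 0$, \low effort weakly dominates {\tt null}, so the consistent tie‑breaking rule ({\tt high}$\succ${\tt low}$\succ${\tt null}) implies {\tt null} is never chosen and \high is chosen exactly when $\prH p\ge \cH$. Consequently, conditioned on the type, the probability of the \high outcome is $\prH\,\mathbf{1}\{\prH p\ge\cH\}$, which depends on $x$ only through $p$; taking the expectation over types defines $S(p)$ and gives the first assertion. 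Monotonicity is then immediate, since for each fixed type $p\mapsto\prH\,\mathbf{1}\{\prH p\ge\cH\}$ is non‑decreasing on $p\ge 0$ (alternatively, $V(x)=v\,S(p)$ and Section~\ref{sec:width-proof} already shows $V$ is monotone under $\succeq$). The same dichotomy gives the expected payment, $x(\low)$ under \low effort and $x(\low)+\prH p$ under \high effort, so $P(x)=x(\low)+p\,S(p)$; combined with $V(x)=v\,S(p)$ this yields $U(x)=S(p)(v-p)-x(\low)$, the second assertion.

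\textbf{Discretization error.} From $U(x)=S(p)(v-p)-x(\low)\le S(p)(v-p)$, and the fact that the contract paying $0$ for \low and $p$ for \high lies in $X$, I get $\OPT(X)=\sup_{p\in[0,1]}S(p)(v-p)$; likewise $\OPT(\Xcand(\minSize))\ge\sup\{S(p')(v-p'):\,p'\in[0,1]\text{ a multiple of }\minSize\}$. Given a $p$ whose value $S(p)(v-p)$ is within arbitrarily small slack of $\OPT(X)$, I would first argue we may take $p\le v$ (otherwise $S(p)(v-p)\le 0\le v\,S(0)$, so the all‑zero contract already certifies the claim), then round $p$ \emph{up} to the nearest multiple $p'$ of $\minSize$. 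Monotonicity gives $S(p')\ge S(p)$, while $v-p'\ge v-p-\minSize$, $S(p')\le 1$ and $p'-p\le\minSize$, so
\[
 S(p')(v-p') \;=\; S(p')(v-p) - S(p')(p'-p) \;\ge\; S(p)(v-p)-\minSize,
\]
using $v-p\ge 0$ for the first term. Hence $\OPT(\Xcand(\minSize))\ge\OPT(X)-O(\minSize)$, and keeping track of constants (including the corner case where rounding $p$ up would exceed $1$, and the rounding of the \low‑payment if one prefers to round $x$ coordinatewise rather than normalizing to the contract $(0,p)$) yields the stated bound $3\minSize$.

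\textbf{Main obstacle.} The only subtle point is this last rounding step: moving $p$ upward helps $S(p)$ but hurts $v-p$, so the estimate must be organized around the sign of $v-p$, and the corner where $p$ lies within $\minSize$ of the endpoint $1$ needs separate (trivial) handling. Everything else is a routine single‑worker calculation once the characterization of the chosen effort level is in place.
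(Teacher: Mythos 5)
Your proof of the first two assertions is essentially identical to the paper's: the same single-worker computation (the worker exerts \high effort iff $\prH p\ge\cH$, which yields the characterization of $S(p)$, the formula $P(x)=x(\low)+pS(p)$, and hence $U(x)=S(p)(v-p)-x(\low)$). For the discretization error you take a slightly different route: you first set $x(\low)=0$ to reduce to a one-dimensional optimization over $p$, then round $p$ upward, whereas the paper keeps the two-dimensional contract $x^*$ and rounds $x^*(\high)$ up and $x^*(\low)$ down to get a candidate in $\Xcand(\minSize)$ with $p^*\le p\le p^*+2\minSize$. Your 1D normalization is a bit cleaner and in fact gives a tighter constant (error $\approx\minSize$ rather than $3\minSize$, modulo the endpoint corner case you flag), at the cost of an extra preliminary observation that $\OPT(X)=\sup_{p\in[0,1]}S(p)(v-p)$; both approaches need, and you correctly isolate, the two delicate points --- ensuring $v-p\ge 0$ (else the all-zero contract dominates) and handling rounding near $p=1$. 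The argument is correct.
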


Recall that $\Xcand(\minSize)$, the uniform mesh with granularity $\minSize>0$, consists of all bounded, monotone contracts with payments in $\minSize \N$.

For our purposes, the supply distribution is summarized via the function $S(\cdot)$. Denote $\tilde{U}(p) = S(p) (v - p)$. Note that $U(x)$ is maximized by setting $x(\low)=0$, in which case $U(x) = \tilde{U}(p)$. Thus, if an algorithm knows that it is given a \hlexample, it can set $x(\low)=0$, thereby reducing the dimensionality of the search space. Then the problem essentially reduces to \taskPricing with the same $S(\cdot)$.

However, in general an algorithm does not \emph{know} whether it is presented with the \hlexample (because the effort levels are not observable). So in what follows we will consider algorithms that do not restrict themselves to $x(\low)=0$.

\xhdr{``Nice" supply distribution.}
We focus on a supply distribution $D$ that is ``nice", in the sense that $S(\cdot)$ satisfies the following two properties:
\begin{OneLiners}
	\item $S(p)$ is Lipschitz-continuous: $|S(p) - S(p')| \leq L |p-p'|$  for some constant $L$.
	\item $\tilde{U}(p)$ is \emph{strongly concave}, in the sense that $\tilde{U}''(\cdot)$ exists and satisfies $\tilde{U}''(\cdot)\leq C<0$.
\end{OneLiners}
Here $L$ and $C$ are absolute constants.
We call such $D$ \emph{strongly Lipschitz-concave}.

The above properties are fairly natural. For example, they are satisfied if $\prH$ is the same for all worker types and the marginal distribution of $\cH$ is piecewise uniform such that the density is between $\tfrac{1}{\lambda}$ and $\lambda$, for some absolute constant $\lambda\geq 1$.

We show that for any choice $\Xcand\subset X$, \zooming has a small width dimension in this setting, and therefore small regret.

\begin{lemma}\label{lm:hlexample-nice}
Consider the \hlexample with a strongly Lipschitz-concave supply distribution. Then the width dimension is at most $\tfrac12$, for any given $\Xcand\subset X$.
Therefore, \zooming with this $\Xcand$ has regret $R(T|\Xcand) = O(\log T)\, T^{3/5}$.
\end{lemma}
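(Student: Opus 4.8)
Here is the plan. The target is the covering bound $N_{\eps\SlackC}(X_\eps\mid\Xcand)=O(\eps^{-1/2})$, uniformly in $\eps>0$; once this is in hand, the regret bound is just Corollary~\ref{cor:regret-dim} instantiated at $d=\tfrac12$, since $(1+\tfrac12)/(2+\tfrac12)=\tfrac35$.

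\emph{Reduction to one–dimensional geometry.} By Claim~\ref{cl:hlexample-general}, writing $\incr(x)=(q,p)$ with $q=x(\low)$ and $p=x(\high)-x(\low)$, we have $U(x)=\tilde U(p)-q$ with $\tilde U(p)=S(p)(v-p)$ strongly concave ($\tilde U''\le C<0$), so $\tilde U$ has a unique maximizer $p^\ast$ with value $\tilde U^\ast$. I would first record two structural facts. First, for any cell $C$ of side $s$, plugging the maximal and minimal corners into \refeq{eq:virt-width} and using monotonicity of $S$ and of $p\mapsto pS(p)$ together with $L$-Lipschitzness of $S$ gives $s\le\vw(C)\le K\,s$ for an absolute constant $K=K(v,L)$; hence every cell in $\mF_{\eps\SlackC}$ has side at least $c_1\eps$, where $c_1=\SlackC/K$. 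Second, I would locate $X_\eps$: any $(q,p)\in X_\eps$ satisfies $\tilde U(p)-q\ge\OPT(\Xcand)-\eps$, and since $q\ge0$ and $\OPT(\Xcand)$ is within $O(\eps)$ of $\tilde U^\ast$ (immediate for a uniform mesh via the discretization bound in Claim~\ref{cl:hlexample-general}, in the relevant regime $\eps\gtrsim\minSize$), this forces both $q=O(\eps)$ and $\tilde U^\ast-\tilde U(p)=O(\eps)$; strong concavity then yields $|p-p^\ast|=O(\sqrt\eps)$. Thus $X_\eps$ lies in a box of dimensions $O(\eps)\times O(\sqrt\eps)$ around $(0,p^\ast)$, and since $U$ is $O(1)$-Lipschitz (from Lipschitzness of $S$), any feasible cell of side $s$ meeting $X_\eps$ lies in an $O(\eps+s)$-neighborhood of that box.

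\emph{Counting cells per scale and summing.} Fix a dyadic scale $s$ with $c_1\eps\le s\le1$ and consider the cells of $\mF_{\eps\SlackC}(X_\eps)$ of side $s$; being interior-disjoint dyadic cells, it suffices to count their coordinate projections. Their $q$-projections are length-$s$ dyadic intervals whose left endpoint lies in $[0,O(\eps+s)]$, hence $O(\eps/s+1)=O(1)$ possibilities (here $s\ge c_1\eps$ is used). Their $p$-projections are length-$s$ dyadic intervals contained in an interval about $p^\ast$ of length $O(\sqrt\eps+s)=O(\sqrt s)$ (again using $s\ge c_1\eps$, so $\sqrt\eps\le\sqrt{s/c_1}$), hence $O(\sqrt s/s+1)=O(1/\sqrt s)$ possibilities. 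So there are $O(1/\sqrt s)$ such cells at scale $s$; summing the geometric series over dyadic $s\in[c_1\eps,1]$ is dominated by the smallest term, giving $N_{\eps\SlackC}(X_\eps\mid\Xcand)=O(1/\sqrt{c_1\eps})=O(\eps^{-1/2})$. By \refeq{eq:width-dim} this means $\WidthDim_\alpha\le\tfrac12$ for a suitable constant $\alpha=\alpha(v,L,C)$, and Corollary~\ref{cor:regret-dim} yields $R(T\mid\Xcand)=O(\log T)\,T^{3/5}$.

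\emph{Main obstacle.} The delicate point is the per-scale count and where the gain comes from: the bound ``$q=O(\eps)$'' only saves a constant at each scale, since cells of $\mF_{\eps\SlackC}$ already have side $\gtrsim\eps$, so the whole improvement from $\eps^{-1}$ to $\eps^{-1/2}$ rests on strong concavity shrinking the $p$-extent of the near-optimal region to $O(\sqrt\eps)$. One must check that the $O(\eps+s)$ slack coming from the Lipschitz bound on $U$ does not enlarge this $p$-extent past $O(\sqrt s)$, and that the coarser scales contribute geometrically less rather than adding a $\log(1/\eps)$ factor that would dominate $\eps^{-1/2}$ for $\eps$ bounded away from $0$; establishing the closeness of $\OPT(\Xcand)$ to $\tilde U^\ast$ in full generality (not only for uniform meshes) is the other place that needs care.
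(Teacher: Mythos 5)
Your proposal is correct and follows essentially the same route as the paper's proof: reduce to $(q,p)=(\text{base},\text{increment})$ coordinates so that $U(x)=\tilde U(p)-q$; compute $\vw(C_{p,b,d})=(v+p)d_F+dS(p)+d\,d_F+d$ with $d_F=S(p+d)-S(p)$ and use Lipschitzness of $S$ to get $\vw(C)=\Theta(\mathrm{side}(C))$, so that cells in $\mF_{\eps\SlackC}$ have side $\gtrsim\eps$; locate $X_\eps$ inside an $O(\eps)\times O(\sqrt\eps)$ box around $(0,p^*)$ via strong concavity; and count to get $N_{\eps\SlackC}(X_\eps)=O(\eps^{-1/2})$, hence width dimension $\le\tfrac12$ and, by Corollary~\ref{cor:regret-dim}, the $T^{3/5}$ bound.

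Two remarks. Your dyadic-scale accounting is a cleaner version of the paper's counting step, which as written tallies only cells at the smallest admissible side $\Theta(\eps)$ and informally dismisses the coarser scales; your geometric-series observation makes this rigorous without introducing a spurious $\log(1/\eps)$ factor. More importantly, the loose end you flag is real and is present in the paper's proof as well: the paper begins with ``let $x_{p^*,0}$ be an optimal contract,'' which silently assumes that the $\Xcand$-optimum coincides with (or is within $O(\eps)$ of) the unconstrained maximizer $\tilde U^*$ at $(0,p^*)$. For an arbitrary $\Xcand$ with $\gamma:=\tilde U^*-\OPT(\Xcand)$ bounded away from zero, the strong-concavity argument only places $X_\eps$ in an $O(\gamma+\eps)\times O(\sqrt{\gamma+\eps})$ box, which no longer gives a uniform $O(\eps^{-1/2})$ bound in the regime $\eps\ll\gamma$. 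So the clause ``for any given $\Xcand\subset X$'' in the lemma is broader than what either argument literally establishes; the statement is safe when $\Xcand$ contains contracts close to $(0,p^*)$, which is exactly the hypothesis used in Lemma~\ref{lm:hlexample-comparison}(a) ($\Xcand\supset\Xcand(T^{-2/5})$).
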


We contrast this with the performance of \NonAdaptive, parameterized with the natural choice $\Xcand = \Xcand(\minSize)$. We focus on $R(T|X)$: regret w.r.t. the best contract in $X$. We show that \zooming achieves
    $R(T|X) = \tildeO(T^{3/5})$
for a wide range of $\Xcand$, whereas \NonAdaptive cannot do better than
    $R(T|X) = O(T^{3/4})$
for any $\Xcand = \Xcand(\minSize)$, $\minSize>0$.

\begin{lemma}\label{lm:hlexample-comparison}
Consider the setting of Lemma~\ref{lm:hlexample-nice}. Then:
\begin{OneLiners}
\item[(a)] \zooming with $\Xcand \supset \Xcand(T^{-2/5})$ has regret $R(T|X) = O(T^{3/5}\, \log T)$.
\item[(b)] \NonAdaptive with $\Xcand = \Xcand(\minSize)$ cannot achieve regret
    $R(T|X)< o(T^{3/4})$
    over all problem instances, for any $\minSize>0$.~
    \footnote{This lower bound holds even if \UCB in \NonAdaptive is replaced with any other MAB algorithm.}
\end{OneLiners}
\end{lemma}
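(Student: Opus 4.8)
\emph{Part (a).} The plan is to split $R(T|X)$ into the regret against $\Xcand$ plus a discretization term. Since $\Xcand\supseteq\Xcand(T^{-2/5})$ we have $\OPT(\Xcand)\ge\OPT(\Xcand(T^{-2/5}))$, so Claim~\ref{cl:hlexample-general} gives $\OPT(X)-\OPT(\Xcand)\le\OPT(X)-\OPT(\Xcand(T^{-2/5}))\le 3T^{-2/5}$. Writing $R(T|X)=\big(T\,\OPT(\Xcand)-U\big)+T\big(\OPT(X)-\OPT(\Xcand)\big)=R(T|\Xcand)+T\big(\OPT(X)-\OPT(\Xcand)\big)$, the last summand is at most $3T^{3/5}$, while Lemma~\ref{lm:hlexample-nice} — which bounds the width dimension by $\tfrac12$, hence $R(T|\Xcand)=O(\log T)\,T^{3/5}$, for \emph{any} $\Xcand\subset X$ and in particular ours — handles the first summand. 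Adding the two pieces gives $R(T|X)=O(T^{3/5}\log T)$.

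\emph{Part (b): setup and the coarse regime.} Fix $\minSize>0$; we must exhibit a strongly Lipschitz-concave instance on which \NonAdaptive suffers regret $\Omega(T^{3/4})$. Decompose as before, $R(T|X)=R(T|\Xcand(\minSize))+T\big(\OPT(X)-\OPT(\Xcand(\minSize))\big)$, with both summands nonnegative in expectation (the first being regret against the best arm of the mesh, the second because $\Xcand(\minSize)\subseteq X$). The key first observation is that the worst-case discretization error over nice instances is $\Theta(\minSize)$ rather than $\Theta(\minSize^2)$: the bound $O(\minSize)$ is Claim~\ref{cl:hlexample-general}, and a matching $\Omega(\minSize)$ is obtained as follows. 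Recall $\OPT(X)=\tilde U(p^*)$ for $p^*\in(0,v)$ the maximizer of $\tilde U(p)=S(p)(v-p)$, while $\OPT(\Xcand(\minSize))=\max_{p\in\minSize\N\cap[0,1]}\tilde U(p)$. Choose the instance so that $p^*$ lies exactly midway between two grid multiples of $\minSize$ and so that $\tilde U$ has curvature of order $1/\minSize$ throughout a window of width $\minSize$ centered at $p^*$: Lipschitz-continuity of $S$ forces $|\tilde U'|\le L+1$ but places no ceiling on $|\tilde U''|$, so this localized curvature spike is compatible with strong concavity. Then every grid point lies at distance $\ge\minSize/2$ from $p^*$ within the spike, so $\OPT(\Xcand(\minSize))\le\tilde U(p^*)-\Theta\big(\tfrac{1}{\minSize}\cdot\minSize^2\big)=\OPT(X)-\Theta(\minSize)$. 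Hence for $\minSize\ge T^{-1/4}$ this instance already gives $R(T|X)\ge T\cdot\Theta(\minSize)=\Omega(T^{3/4})$.

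\emph{Part (b): the fine regime, and the main obstacle.} It remains to handle $\minSize<T^{-1/4}$, where the discretization term is too small to help. Now the mesh has $K:=|\Xcand(\minSize)|=\Theta(\minSize^{-2})$ contracts (the two-outcome contract space is two-dimensional), and \NonAdaptive runs an arbitrary bandit algorithm on these $K$ arms presented in random order, with no access to their numerical structure. The plan is to prove $R(T|\Xcand(\minSize))=\Omega\big(\min(\sqrt{KT},\,T)\big)=\Omega(\sqrt T/\minSize)$ for a suitable nice instance via the standard change-of-measure (``needle-in-a-haystack'') argument, planting the near-optimal contracts in a thin strip $\{x(\low)\approx0,\ p\approx p^*\}$ that cannot be located without paying for many pulls of suboptimal arms; since $\minSize<T^{-1/4}$ gives $\sqrt T/\minSize>T^{3/4}$, this closes the case. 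The main obstacle is that strong concavity of $\tilde U$ forces the near-optimal contracts to cluster, so the textbook $\Omega(\sqrt{KT})$ bandit lower bound — which wants \emph{all} $K$ arms near-optimal and statistically indistinguishable — does not transfer verbatim. The resolution is that the genuinely hard directions are the $\Theta(\minSize^{-1})$ ``base-payment'' rows $x(\low)\in\minSize\N$: because \NonAdaptive does not know it faces a \hlexample, it does not know to set $x(\low)=0$ and must probe each row, and a careful packing of mutually indistinguishable nice instances across these rows (together with the one-dimensional uncertainty about $p^*$) recovers the needed $\Omega(\sqrt T/\minSize)$. Combining the two regimes, no choice of $\minSize>0$ lets \NonAdaptive escape $\Omega(T^{3/4})$ regret, i.e., $R(T|X)\neq o(T^{3/4})$.
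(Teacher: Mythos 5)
Your overall strategy matches the paper's: decompose $R(T|X)=R(T|\Xcand)+T\big(\OPT(X)-\OPT(\Xcand)\big)$, lower-bound the discretization term by $\Omega(\minSize)$ via a concentrated-curvature instance, and lower-bound $R(T|\Xcand)$ via a bandit argument on the randomly-permuted mesh. Part (a) is correct and is the natural argument (the paper does not even write it out). For part (b) you use a two-regime case split on $\minSize$ vs.\ $T^{-1/4}$, whereas the paper establishes both lower bounds for a single instance and concludes $R(T|X)\geq\Omega(\sqrt{T}/\minSize+\minSize T)\geq\Omega(T^{3/4})$ by AM--GM; the split is a cosmetic difference.

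The substantive divergence is in the bandit lower bound for the fine regime, and here your proposal has a gap. You correctly flag that strong concavity clusters the near-optimal contracts, so the textbook $\Omega(\sqrt{KT})$ needle-in-a-haystack bound over all $K=\Theta(\minSize^{-2})$ arms does not transfer verbatim. But your proposed fix does not deliver the needed rate: with only the $\Theta(\minSize^{-1})$ ``base-payment rows'' as the genuinely hard arms, the standard packing argument would yield $\Omega\big(\sqrt{T/\minSize}\big)$, not $\Omega(\sqrt{T}/\minSize)$, and the appeal to ``one-dimensional uncertainty about $p^*$'' does not obviously multiply the count of indistinguishable arms (the uncertainty about $p^*$ is the same across rows, and contracts at distance $|p-p^*|$ with $b=0$ have gap $\approx(p-p^*)^2$, so only $\Theta(\minSize^{-1/2}T^{-1/4}/\minSize)$ of them sit near the indistinguishability threshold). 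The paper instead uses \emph{all} $\Omega(\minSize^{-2})$ contracts with $x(\low)>0$ as the suboptimal population and simply asserts the $\Omega(\sqrt{NT})$ bound for a random permutation as a standard MAB fact. That assertion also elides the varying-gap issue you raised --- the paper's argument is a proof sketch, not a watertight derivation --- but if you are going to explicitly surface the obstruction, you need to actually resolve it (e.g., by constructing a fooling family of nice supply distributions that keeps a planted near-optimal contract indistinguishable, or by a direct change-of-measure over permutations accounting for the true gap profile) rather than gesture at it. As written, your fine-regime argument does not close, so your proof of part (b) is incomplete in a place where the paper's sketch --- for better or worse --- does not acknowledge any difficulty.
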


\subsection{Proofs}%

\begin{proof}[Proof of Claim~\ref{cl:hlexample-general}]
Consider a contract $x$ with $x(\low)=b$ and $x(\high)=b+p$, and a worker of type $(\cH,\prH)$. If the worker exerts \high effort, she pays cost $\cH$ and receives expected payment $\prH(p+b) + (1-\prH) b$, for a total expected payoff
    $p\prH + b - \cH$.
Her expected payoff for exerting \low effort is $b$. Therefore she will choose to exert \high effort if and only if  $p\prH+b-\cH \geq b$, i.e., if $\cH/\prH \leq p$, and choose to exert \low effort otherwise.
Therefore
\begin{align*}
    \Pr[\text{\high outcome} \,|\, \text{contract $x$}]
        = \Exp[(\cH,\prH)]{\prH\; \indicator{\cH/\prH \leq p}}.
\end{align*}
This is a function of $p$, call it $S(p)$. Moreover, this is a non-decreasing function simply because the expression inside the expectation is non-decreasing in $p$.

It trivially follows that
    $U(x) = S(p) (v - p) - x(\low)$.

We can upper-bound the discretization error using a standard approach from the work on dynamic pricing~\cite{KleinbergL03}. Fix discretization granularity $\minSize>0$. For any $\eps>0$, there exists a contract $x^*\in X$ such that $\OPT(X)-U(x^*)<\eps$. Round $x^*(\high)$ and $x^*(\low)$ up and down, respectively, to the nearest integer multiple of $\minSize$; let $x\in \Xcand(\minSize)$ be the resulting contract. Denoting
    $p = x(\high)-x(\low)$ and $p^* = x^*(\high)-x^*(\low)$,
we see that $p^* \leq p \leq p^*+2\minSize$. It follows that
    $$U(x) \geq U(x^*)-3\minSize \geq \OPT(X)-\eps-3\minSize.$$
Since this holds for any $\eps>0$, we conclude that
    $\OPT(X)-\OPT(\Xcand(\minSize)) \leq 3\minSize$.
\end{proof}

\begin{proof}[Proof of Lemma~\ref{lm:hlexample-nice}]
To calculate the width dimension, we need to count the number of feasible cells in the increment space which (i) has virtual width larger than or equal to $O(\epsilon)$ and (ii) overlaps with $X_\epsilon$, the set of contracts with badness smaller than $\epsilon$.

We first characterize $X_\epsilon$. We use $x_{p,b}$ to denote the contract with $x(\high)=p+b$ and $x(\low)=b$. The benefit of this representation is that, $p$ and $b$ would then be the two axis in the increment space. Let $x_{p^*, 0}$ be an optimal contract. Since $U(x_{p,b})$ is strongly concave in $p$, we know that for any $b$, there exists constants $C_1$ and $C_2$ such that for any $p\in[0,1]$, $C_1 (p^*-p)^2 \leq U(x_{p^*,b}) - U(x_{p,b}) \leq C_2 (p^*-p)^2$. Also we know that $U(x_{p^*,b}) = U(x_{p^*,0}) - b$. Therefore.
\[
	X_\epsilon = \{x_{p,b}: (p-p^*)^2 + b \leq O(\epsilon)\}
\]

We can also write it as
\[
	X_\epsilon = \{x_{p,b}: p^*-\prH(\sqrt{\epsilon}) \leq p \leq p^*+\prH(\sqrt{\epsilon}) \mbox{ and } b \leq O(\epsilon)\}
\]

Intuitively, $X_\epsilon$ contains contracts $\{x_{p,b}\}$ with $p$ not $O(\sqrt{\epsilon})$ away from $p^*$ and $b$ not $O(\epsilon)$ away from $b^*=0$.

Next we characterize the virtual width of a cell. We use $C_{p,b,d}$ to denote the cell with size $d$ and with anchors $\{x_{p,b}, x_{(p+d),(b+d)}\}$. We can derive the expected payment and value on the two anchors as:
\begin{OneLiners}
	\item $P^+(C_{p,b,d}) = (p+d) S(p+d) + b + d$
	\item $V^+(C_{p,b,d})= v S(p+d)$
	\item $P^-(C_{p,b,d}) = p S(p) + b$
	\item $V^-(C_{p,b,d}) = v S(p)$
\end{OneLiners}
\vspace{2mm}
By definition, we can get that (we use $d_F$ to represent $S(p+d)-S(p)$ for simplification)
\[ \vw(C_{p,b,d}) = (v+p) d_F + d S(p)  + d ~ d_F + d. \]

Now we can count the number of feasible cells with virtual width larger than $\prH(\epsilon)$ which overlaps with $X_\epsilon$. Note that  since the total number of feasible cells $C_{p,b,d}$ with large $d$ is small, we can treat the number of cells with large $d$ as a constant. Also, for any relevant cell $C_{p,b,d}$, we have $p\approx p^*$.
Therefore, we only care about feasible cells $C_{p,b,d}$ with small $d$ and when $p$ is close to $p^*$.

Since $S(p)$ is Lipschitz, we have $d_F = O(d)$. Therefore, for any relevant cell $C_{p,d}$,
\[ \vw(C_{p,b,d}) = O(d) \]

Given the above two arguments, we know that the number of cells with virtual width larger than $\epsilon$ which also overlaps with $X_\epsilon$ is $O(\epsilon/\epsilon) \times O(\sqrt{\epsilon}/\epsilon)=O(\epsilon^{-1/2})$.  Therefore the width dimension is $1/2$.
\end{proof}

\newcommand{\discrError}{\mathtt{Error}}

\begin{proof}[Proof Sketch of Lemma~\ref{lm:hlexample-comparison}(b)]
Consider a version of \NonAdaptive that runs an off-the-shelf MAB algorithm ALG on candidate contracts $\Xcand = \Xcand(\minSize)$. For ALG, the ``arms" are the candidate contracts; recall that the arms are randomly permuted before they are given to ALG.

Fix $\minSize>0$. It is easy to construct a problem instance with discretization error
    $\discrError \triangleq \OPT(X) - \OPT(\Xcand(\minSize)) \geq \Omega(\minSize)$.
Note that $\Xcand$ contains $N=\Omega(\minSize^{-2})$ suboptimal contracts that are suboptimal w.r.t. $\OPT(\Xcand)$.
(For example, all contracts $x$ with $x(\low)>0$ are suboptimal.)

Fix any problem instance $\mI$ of MAB with $N$ suboptimal arms. Using standard lower-bound arguments for MAB, one can show that if one runs ALG on a problem instance obtained by randomly permuting the arms in $\mI$, then the expected regret in $T$ rounds is at least $\Omega(\sqrt{NT})$.

Therefore, $R(T|\Xcand) \geq \Omega(\sqrt{NT})$. It follows that
$$ R(T|X)
    \geq \Omega(\sqrt{NT}) + \discrError \cdot T
    \geq \Omega( \sqrt{T}/\minSize + \minSize T)
    \geq \Omega( T^{3/4}). \qedhere
$$
\end{proof}

\section{Proof of the main regret bound (Theorem~\ref{thm:regret})}
\label{sec:main-proof}

We now prove the main result from Section~\ref{sec:regret-bounds}.
Our high-level approach is to define a \emph{clean execution} of an algorithm as an execution in which some high-probability events are satisfied, and derive bounds on regret conditional on the clean execution. The analysis of a clean execution does not involve any ``probabilistic'' arguments. This approach tends to simplify regret analysis.

We start by listing some simple invariants enforced by \zooming:

\begin{invariant}\label{cl:zooming-invariants}
In each round $t$ of each execution of \zooming:
\begin{OneLiners}
\item[(a)] All active cells are relevant,
\item[(b)] Each candidate contract is contained in some active cell,
\item[(c)] $W_t(C) \leq 5\,\rad_t(C)$ for each active composite cell $C$.
\end{OneLiners}
\end{invariant}

Note that the zooming rule is essential to ensure Invariant~\ref{cl:zooming-invariants}(c).

\subsection{Analysis of the randomness}

\begin{dfn}[Clean Execution]
An execution of \zooming is called \emph{clean} if for each round $t$ and each active cell $C$ it holds that
\begin{align}
	\abs{U(C) - U_t(C)} &\leq \rad_t(C), &       \label{eq:def-clean-U} \\
	\abs{\vw(C) - W_t(C)} &\leq 4\,\rad_t(C) & \text{(if $C$ is composite)}.  \label{eq:def-clean-W}
\end{align}
\end{dfn}

\begin{lemma}\label{lem:clean_prob}
Assume $\confC\geq 16$ and $T\geq\max(1+2^m,18)$. Then:
\begin{OneLiners}
\item[(a)]
$\Pr\left[\;
    \text{\refeq{eq:def-clean-U} holds}
    \quad\forall\, \text{rounds $t$, active cells $C$}
\;\right]
\geq 1-2\, T^{-2}.$

\item[(b)]
$\Pr\left[\;
    \text{\refeq{eq:def-clean-W} holds}
    \quad\forall\, \text{rounds $t$, active composite cells $C$}
\;\right]
\geq 1-16\, T^{-2}.$
\end{OneLiners}
Consequently, an execution of \zooming is clean with probability at least $1 - 1/T$.
\end{lemma}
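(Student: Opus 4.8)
The plan is to run the standard ``clean event'' argument for stochastic bandits, adapted to two features of \zooming that are absent from the textbook setting: the sample count $n_t(C)$ of each cell is chosen adaptively, and a composite cell carries two anchors, each of which is posted on only a random, roughly-half fraction of the rounds in which the cell is chosen. First I would introduce \emph{reward tapes}: for every contract $x$ that can ever serve as an anchor of a feasible cell, fix an i.i.d.\ sequence $\pi_{x,1},\pi_{x,2},\dots$ distributed as the outcome realized when $x$ is posted to a fresh worker (this already integrates out the worker's type, her strategic effort choice, and the production randomness), and couple the execution so that the $j$-th posting of $x$ yields outcome $\pi_{x,j}$. Then the observed utility, value, and payment at that posting are the fixed functions $v(\pi_{x,j})-x(\pi_{x,j})$, $v(\pi_{x,j})$, $x(\pi_{x,j})$ of $\pi_{x,j}$, each taking values in a range of size $O(1)$ (recall $v\in[0,1]$, and since increment representations lie in $[0,1]^m$ the payments lie in $[0,m]$, with $m$ constant). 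For each composite cell I would additionally fix an independent fair-coin tape driving the ``u.a.r.\ among anchors'' step, so that the split counts $n^{\pm}_t(C)$ are prefix sums of that tape.

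Second, the concentration step and the union bound. For a fixed contract $x$ and a fixed prefix length $k\le T$, Hoeffding's inequality bounds the probability that the average of the first $k$ tape values (utility, value, or payment) deviates from its mean by more than $\sqrt{\confC\log T/k}$ by $2\,T^{-2\confC/R^2}$, where $R=O(1)$ is the common range bound; similarly every coin-tape prefix of length $k$ satisfies $\abs{n^{\pm}_t(C)-k/2}\le\sqrt{(\confC/2)\,k\log T}$ except with probability $2\,T^{-\confC}$. The key structural observation that keeps the union bound small is that it suffices to verify the defining inequalities \refeq{eq:def-clean-U}--\refeq{eq:def-clean-W} only at the rounds at which some cell is selected: a cell's count $n_t(C)$, and hence both sides of those inequalities, is frozen between consecutive selections of that cell, and a cell with count $0$ satisfies them vacuously since then $\rad_t(C)=\infty$. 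Hence the union is over $O(T)$ events for part~(a) and $O(T)$ events for part~(b) (the constant blow-up for (b) accounting for the four anchor-averages and the coin tape). With $\confC\ge 16$ and $R=O(1)$ each of these probabilities is at most $T^{-3}$, say (this is where the numerical choice $\confC\ge 16$ together with the constant-$m$ assumption is used), and carefully tracking the two-sided Hoeffding factors and the number of averages involved yields the claimed $1-2T^{-2}$ for (a) and $1-16T^{-2}$ for (b).

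Third, deducing the clean-event inequalities from tape concentration. For an atomic cell, $U_t(C)$ is literally a prefix average of the unique anchor's utility tape, so \refeq{eq:def-clean-U} is immediate. For a composite cell $C$ with anchors $x^{\pm}$, I would write $U_t(C)=\tfrac{n^+_t}{n_t}\bar u^+ + \tfrac{n^-_t}{n_t}\bar u^-$ and $U(C)=\tfrac12 U(x^+)+\tfrac12 U(x^-)$, bound $\abs{\bar u^{\pm}-U(x^{\pm})}$ by the per-anchor confidence radius, bound $\abs{n^{\pm}_t/n_t-\tfrac12}$ by the coin-tape estimate, and use $\abs{U(x^+)-U(x^-)}\le\vw(C)=O(1)$; collecting terms and absorbing a constant into $\confC$ gives $\abs{U_t(C)-U(C)}\le\rad_t(C)$. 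For \refeq{eq:def-clean-W} I would expand
\[
 W_t(C)-\vw(C)=\bigl(V^+_t-V(x^+)\bigr)+\bigl(P^+_t-P(x^+)\bigr)-\bigl(V^-_t-V(x^-)\bigr)-\bigl(P^-_t-P(x^-)\bigr),
\]
bound each of the four terms by its per-anchor radius $\sqrt{\confC\log T/n^{\pm}_t(C)}$, and use the coin-tape bound to see that each such radius is $O(\rad_t(C))$ --- for $n_t(C)$ above a constant via the coin-tape bound, and for small $n_t(C)$ trivially, since then $\rad_t(C)$ already exceeds the range --- so that the four terms sum to at most $4\,\rad_t(C)$ for $\confC$ large enough. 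Finally, a union bound over (a) and (b) shows an execution fails to be clean with probability at most $2T^{-2}+16T^{-2}=18\,T^{-2}\le 1/T$ whenever $T\ge 18$, which holds by hypothesis.

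The conceptual content is routine --- it is the usual clean-event argument --- so the real work is the constant-chasing, and I expect the main obstacle to be the composite-cell bookkeeping. Because the per-anchor sample averages are built from only about $n_t(C)/2$ observations while $\rad_t(C)$ is defined through the full count $n_t(C)$, one is forced to fold the coin-flip concentration into the clean event and to push the leftover constant factors into the choice of $\confC$, all while checking that the ``$4\,\rad_t(C)$'' slack in \refeq{eq:def-clean-W} --- and, downstream, the ``$5\,\rad_t(C)$'' margins appearing in the index \refeq{eq:index} and in the zooming rule --- still go through. The measure-theoretic subtlety of union-bounding over an adaptively determined collection of cells is dealt with, as usual, via the tape coupling plus the observation that only $O(T)$ distinct (cell, count) pairs ever arise in an execution.
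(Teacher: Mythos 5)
Your proposal is correct and follows essentially the same architecture as the paper: couple the execution to fixed reward tapes, apply Hoeffding to tape prefixes, union-bound over the (deterministic, $O(T^2)$-sized) set of cell-activation indices rather than over the (unbounded, execution-dependent) set of feasible cells, and, for part (b), combine per-anchor prefix concentration with a binomial bound on the anchor split $n^+_t(C)$, handled via a case split on small versus large $n_t(C)$ so that the per-anchor radius is dominated by $\rad_t(C)$.

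The one place where you take a slightly more elaborate route than the paper is part (a) for composite cells. The paper avoids any anchor bookkeeping there by treating ``the utility observed when cell $C$ is chosen'' as a single i.i.d.\ sequence drawn from the mixture distribution $\mD_C$ (which already integrates out the fair coin over anchors); $U_t(C)$ is then literally a prefix average of that sequence and one Chernoff application suffices. You instead decompose $U_t(C)$ into per-anchor averages weighted by the random fractions $n^\pm_t(C)/n_t(C)$, which forces you to invoke the coin-tape concentration and a bound on $\abs{U(x^+)-U(x^-)}$ already in part (a). That is still correct, but it introduces extra error terms and extra events into the union bound, so recovering the stated constant $\confC\ge 16$ and the $1-2T^{-2}$ bound would take more careful constant-chasing than the paper's route; in principle this is just a renaming of $\confC$. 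Two smaller cautions: the utility at an anchor lives in a range of size $O(m)$ rather than $[0,1]$ because the maximal corner of a cell in the increment space can have total payment up to $m$, so the Hoeffding exponent carries an $m^2$ you should track; and when you say each per-anchor radius is ``$O(\rad_t(C))$'' and the four terms ``sum to $4\rad_t(C)$,'' you actually need each to be $\le\rad_t(C)$ with constant exactly one, which is why the paper builds the factor $\tfrac12$ into its per-anchor radius $\rad^+_t(C)$ and then proves $n^+_t(C)\ge\tfrac14 n_t(C)$ for $n_t(C)$ above a threshold. You flag both issues as ``constant-chasing,'' which is accurate; neither is a conceptual gap.
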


Lemma~\ref{lem:clean_prob} follows from the standard concentration inequality known as ``Chernoff Bounds". However, one needs to be careful about conditioning and other details.

\begin{proof}[Proof of Lemma~\ref{lem:clean_prob}(a)]
Consider an execution of \zooming. Let $N$ be the total number of activated cells. Since at most $2^m$ cells can be activated in any one round,
    $N \leq 1 + 2^m T \leq T^2$.
Let $C_j$ be the $\min(j,N)$-th cell activated by the algorithm. (If multiple ``quadrants" are activated in the same round, order them according to some fixed ordering on the quadrants.)

Fix some feasible cell $C$ and $j\leq T^2$. We claim that
\begin{align}\label{eq:lem:clean_prob-C}
\Pr\left[\;
    |U(C)-U_t(C)| \leq \rad_t(C) \;\text{for all rounds $t$}\; \middle\vert \; C_j=C
\;\right]
\geq 1- 2\,T^{-4}.
\end{align}

Let $n(C) = n_{1+T}(C)$ be the total number of times cell $C$ is chosen by the algorithm. For each $s\in \N$: $1\leq s \leq n(C)$ let $U_s$ be the requester's utility in the round when $C$ is chosen for the $s$-th time. Further, let $\mD_C$ be the distribution of $U_1$, conditional on the event $n(S)\geq 1$. (That is, the per-round reward from choosing cell $C$.) Let
    $U'_1 \LDOTS U'_T$
be a family of mutually independent random variables, each with distribution $\mD_C$. Then for each $n \leq T$, conditional on the event $\{C_j=C\} \wedge \{n(C) = n\}$, the tuple $(U_1 \LDOTS U_n)$ has the same joint distribution as the tuple $(U'_1 \LDOTS U'_n)$. Consequently, applying Chernoff Bounds to the latter tuple, it follows that
\begin{align*}
&\textstyle \Pr\left[\;
    \big|U(C)-\frac{1}{n}\sum_{s=1}^n U_s\big| \leq \sqrt{\tfrac{1}{n}\,\confC\,\log(T)} \;\middle\vert\; \{C_j=C\} \wedge \{n(C) = n\}
\;\right] \\
&\qquad\qquad \geq 1-2\,T^{-2\confC} \geq 1-2\,T^{-5}.
\end{align*}
Taking the Union Bound over all $n\leq T$, and plugging in $\rad_t(C_j)$, $n_t(C_j)$, and $U_t(C_j)$, we obtain
\refeq{eq:lem:clean_prob-C}.

Now, let us keep $j$ fixed in \refeq{eq:lem:clean_prob-C}, and integrate over $C$. More precisely, let us multiply both sides of \refeq{eq:lem:clean_prob-C} by $\Pr[C_j=C]$ and sum over all feasible cells $C$. We obtain, for all $j\leq T^2$:
\begin{align}\label{eq:lem:clean_prob-Cj}
\Pr\left[\;
    \abs{U(C_j)-U_t(C_j)} \leq \rad_t(C_j) \;\text{for all rounds $t$}
\;\right]
\geq 1-2\,T^{-4}.
\end{align}
(Note that to obtain \refeq{eq:lem:clean_prob-Cj}, we do \emph{not} need to take the Union Bound over all feasible cells $C$.) To conclude, we take the Union Bound over all $j \leq 1+T^2$.
\end{proof}

\begin{proof}[Proof Sketch of Lemma~\ref{lem:clean_prob}(b)]
We show that
\begin{align}\label{eqn:clean-V}
\Pr\left[\;
    \abs{V^+(C)-V^+_t(C)} \leq \rad_t(C)
    \;\forall\, \text{rounds $t$, active composite cells $C$}
\;\right]
\geq 1-\tfrac{4}{T^2},
\end{align}
and similarly for $V^-()$, $P^+()$ and $P^-()$. Each of these four statements is proved similarly, using the technique from Lemma~\ref{lem:clean_prob}(a). In what follows, we sketch the proof for one of the four cases, namely for \refeq{eqn:clean-V}.

For a given composite cell $C$, we are only interested in rounds in which anchor $x^+(C)$ is selected by the algorithm. Letting $n^+_t(C)$ be the number of times this anchor is chosen up to time $t$, let us define the corresponding notion of ``confidence radius'':
$$ \rad^+_t(C) = \frac{1}{2}\, \sqrt{\frac{\confC\,\log T}{n^+_t(C)}}.
$$

With the technique from the proof of Lemma~\ref{lem:clean_prob}(a), we can establish the following high-probability event:
\begin{align}\label{eqn:clean-V-almost}
    \abs{V^+(C)-V^+_t(C)} \leq \rad^+_t(C).
\end{align}
More precisely, we can prove that
\begin{align*}
\Pr\left[\;
    \text{\refeq{eqn:clean-V-almost} holds}
    \quad\forall\, \text{rounds $t$, active composite cells $C$}
\;\right]
\geq 1-2\, T^{-2}.
\end{align*}


Further, we need to prove that w.h.p. the anchor $x^+(C)$ is played sufficiently often. Noting that
   $\E[n^+_t(C)] = \tfrac12\, n_t(C)$,
we establish an auxiliary high-probability event:%
\footnote{The constant $\tfrac14$ in \refeq{eq:clean-auxiliary} is there to enable a consistent choice of $n_0$ in the remainder of the proof.}
\begin{align}\label{eq:clean-auxiliary}
    n^+_t(C) \geq \tfrac12\, n_t(C) - \tfrac14\,\rad_t(C).
\end{align}
More precisely, we can use Chernoff Bounds to show that, if $\confC\geq16$,
\begin{align}\label{eq:clean-auxiliary-formal}
\Pr\left[\;
    \text{\refeq{eq:clean-auxiliary} holds}
    \quad\forall\, \text{rounds $t$, active composite cells $C$}
\;\right]
\geq 1-2\,T^{-2}.
\end{align}
Now, letting $n_0 = (\confC\,\log T)^{1/3}$, observe that
$$\begin{array}{lll}
    n_t(C) \geq n_0
        &\quad\Rightarrow\quad  n^+_t(C) \geq \tfrac14\, n_t(C)
        &\quad\Rightarrow\quad \rad^+_t(C) \leq \rad_t(C),\\
    n_t(C) < n_0
        &\quad\Rightarrow\quad  \rad_t(C) \geq 1
        &\quad\Rightarrow\quad  \abs{V^+(C)-V^+_t(C)} \leq \rad_t(C).
\end{array}$$
Therefore, once Equations \eqref{eqn:clean-V-almost} and \eqref{eq:clean-auxiliary} hold, we have
     $\abs{V^+(C)-V^+_t(C)} \leq \rad_t(C)$.
This completes the proof of \refeq{eqn:clean-V}.
\end{proof}

\subsection{Analysis of a clean execution}

The rest of the analysis focuses on a clean execution. Recall that $C_t$ is the cell chosen by the algorithm in round $t$.

\begin{claim}\label{cl:index-LB}
    In any clean execution, $I(C_t)\geq \OPT(\Xcand)$ for each round $t$.
\end{claim}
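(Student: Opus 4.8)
The plan is to show that the index of \emph{any} feasible cell containing an optimal candidate contract is an upper bound on $\OPT(\Xcand)$, and then use the selection rule together with the covering invariant (Invariant~\ref{cl:zooming-invariants}(b)) to conclude. Fix a clean execution and a round $t$. Let $x^*\in\Xcand$ be an optimal candidate contract, so $U(x^*)=\OPT(\Xcand)$. By Invariant~\ref{cl:zooming-invariants}(b), there is an active cell $C^*$ with $x^*\in C^*$; I claim $I_t(C^*)\geq U(x^*)$. Since the selection rule picks $C_t$ to maximize the index, we then get $I_t(C_t)\geq I_t(C^*)\geq \OPT(\Xcand)$.

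It remains to prove $I_t(C^*)\geq U(x^*)$, splitting into the two cases of the index definition \refeq{eq:index}. If $C^*$ is atomic, then $x^*$ is its unique candidate contract and hence $U(C^*)=U(x^*)$; by cleanness \refeq{eq:def-clean-U}, $U_t(C^*)\geq U(C^*)-\rad_t(C^*)$, so $I_t(C^*)=U_t(C^*)+\rad_t(C^*)\geq U(C^*)=U(x^*)$. If $C^*$ is composite, then $U(x^*)-U(C^*)\leq \width(C^*)\leq \vw(C^*)$, where the first inequality holds because $U(C^*)$ is an average of utilities of contracts in $C^*$ (the anchors) and $x^*\in C^*$, and the second is Lemma~\ref{lm:virtual_width}. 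Now by cleanness: \refeq{eq:def-clean-U} gives $U_t(C^*)\geq U(C^*)-\rad_t(C^*)$, and \refeq{eq:def-clean-W} gives $W_t(C^*)\geq \vw(C^*)-4\,\rad_t(C^*)$. Hence
\[
I_t(C^*) = U_t(C^*)+W_t(C^*)+5\,\rad_t(C^*)
    \geq U(C^*)+\vw(C^*)
    \geq U(C^*) + \big(U(x^*)-U(C^*)\big) = U(x^*),
\]
as desired.

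The only mildly delicate point — and the step I would double-check most carefully — is the inequality $|U(x^*)-U(C^*)|\leq \width(C^*)$ for composite $C^*$: this requires that $U(C^*)$, the average of the anchor utilities, lies within $\width(C^*)=\sup_{x,y\in C^*}|U(x)-U(y)|$ of $U(x^*)$, which is immediate since both anchors and $x^*$ all lie in $C^*$, so each anchor's utility is within $\width(C^*)$ of $U(x^*)$ and therefore so is their average. Everything else is a direct substitution of the cleanness bounds into the index formula, so I do not anticipate a genuine obstacle; the content of the claim is really just that the constants $5$ (and $4$) in \refeq{eq:index} and \refeq{eq:def-clean-W} were chosen precisely to make this telescoping work out, which is why the zooming rule and confidence-radius constants were set up as they were.
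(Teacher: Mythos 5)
Your proof is correct and follows essentially the same route as the paper: pick the active cell covering $x^*$ (via Invariant~\ref{cl:zooming-invariants}(b)), split on atomic vs.\ composite, apply the cleanness bounds and Lemma~\ref{lm:virtual_width}, then invoke the selection rule. The one tiny distinction is that the paper lets $x^*$ be an \emph{arbitrary} candidate contract and takes the supremum at the end, rather than fixing an optimal one up front; this sidesteps the (minor) issue that when $\Xcand$ is infinite the supremum $\OPT(\Xcand)$ need not be attained, so "an optimal candidate contract" may not exist. Otherwise the arguments are the same; you have simply unpacked the compressed step "$I_t(C^*) \geq U(C^*) + \vw(C^*)$ by clean execution" into its two constituent cleanness inequalities, which is exactly why the constants $5$ and $4$ are what they are.
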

\begin{proof}
Fix round $t$, and let $x^*$ be any candidate contract. By Invariant~\ref{cl:zooming-invariants}(b), there exists an active cell, call it $C^*_t$, which contains $x^*$.

We claim  that $I_t(C^*_t) \geq U(x^*)$. We consider two cases, depending on whether $C^*_t$ is atomic. If $C^*_t$ is atomic then the anchor is unique, so $U(C^*_t) = U(x^*)$, and
    $I_t(C^*_t)  \geq U(x^*)$
by the clean execution. If $C^*_t$ is composite then
\begin{align*}
I_t(C^*_t)
    & \geq U(C^*_t) + \vw(C^*_t) & \text{by clean execution}  \\
    & \geq U(C^*_t) + \width(C^*_t) & \text{by Lemma~\ref{lm:virtual_width}}  \\
    & \geq U(x^*) & \text{by definition of $\width$, since $x^*\in C^*_t$}.
 \end{align*}
We have proved that
    $I_t(C^*_t) \geq U(x^*)$.
Now, by the selection rule we have
 $I_t(C_t) \geq I_t(C^*_t) \geq U(x^*)$.
Since this holds for any candidate contract $x^*$, the claim follows.
\end{proof}

\begin{claim}\label{cl:index-UB}
In any clean execution, for each round $t$, the index $I_t(C_t)$ is upper-bounded as follows:
\begin{OneLiners}
\item[(a)] if $C_t$ is atomic then
    $I(C_t)\leq U(C_t) + 2\,\rad_t(C_t)$.
\item[(b)] if $C_t$ is composite then
    $I(C_t)\leq U(x) + O(\rad_t(C_t))$
for each contract $x\in C_t$.
\end{OneLiners}
\end{claim}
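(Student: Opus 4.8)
The plan is to unwind the definition of the index using the two inequalities \eqref{eq:def-clean-U} and \eqref{eq:def-clean-W} that hold in a clean execution, and then — crucially — to absorb the discretization uncertainty $\vw(C_t)$ into the sampling uncertainty $\rad_t(C_t)$ by invoking the zooming rule. No probabilistic reasoning will be needed; everything takes place inside a fixed clean execution, so the only real content is bookkeeping of constants together with the observation that Invariant~\ref{cl:zooming-invariants}(c) is precisely the tool that converts discretization error into sampling error.

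Part (a) is immediate. Since $C_t$ is atomic, $I_t(C_t) = U_t(C_t) + \rad_t(C_t)$, and $C_t$ is active (it is the cell chosen in round $t$), so \eqref{eq:def-clean-U} gives $U_t(C_t) \le U(C_t) + \rad_t(C_t)$; hence $I_t(C_t) \le U(C_t) + 2\,\rad_t(C_t)$.

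Part (b) proceeds in three steps. First, with $C_t$ composite we have $I_t(C_t) = U_t(C_t) + W_t(C_t) + 5\,\rad_t(C_t)$, and applying \eqref{eq:def-clean-U} and \eqref{eq:def-clean-W} (valid since $C_t$ is active and composite) yields $I_t(C_t) \le U(C_t) + \vw(C_t) + 10\,\rad_t(C_t)$. Second, I would relate $U(C_t)$ to $U(x)$ for an arbitrary candidate contract $x\in C_t$: in a composite cell $U(C_t)$ is the average of $U(x^+(C_t))$ and $U(x^-(C_t))$, and both anchors as well as $x$ lie in $C_t$, so by the definition of $\width$ each differs from $U(x)$ by at most $\width(C_t)$; averaging gives $|U(C_t)-U(x)| \le \width(C_t) \le \vw(C_t)$, where the last inequality is Lemma~\ref{lm:virtual_width}. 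Thus $I_t(C_t) \le U(x) + 2\,\vw(C_t) + 10\,\rad_t(C_t)$. Third — the step where the zooming rule is used — since $C_t$ is active and composite, Invariant~\ref{cl:zooming-invariants}(c) gives $W_t(C_t) \le 5\,\rad_t(C_t)$, and combining with \eqref{eq:def-clean-W} gives $\vw(C_t) \le W_t(C_t) + 4\,\rad_t(C_t) \le 9\,\rad_t(C_t)$. Substituting, $I_t(C_t) \le U(x) + 28\,\rad_t(C_t) = U(x) + O(\rad_t(C_t))$, as claimed.

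The only subtlety — and the conceptual heart of the claim — is the third step: without the zooming rule, $\vw(C_t)$ could be large relative to $\rad_t(C_t)$ and the bound $I_t(C_t) \le U(x) + O(\rad_t(C_t))$ would fail. Everything else is routine manipulation, so I do not anticipate any genuine obstacle beyond tracking the constants carefully.
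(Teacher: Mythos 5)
Your proof is correct and follows essentially the same route as the paper's: unwind the index definition, apply the two clean-execution bounds \eqref{eq:def-clean-U} and \eqref{eq:def-clean-W}, pass through $\width \le \vw$ via Lemma~\ref{lm:virtual_width}, and close with Invariant~\ref{cl:zooming-invariants}(c). The only difference is the order of substitutions — the paper bounds $U_t(C_t) \le U(x) + W_t(C_t) + 5\,\rad_t(C_t)$ first and then plugs into the index to get the constant $20$, while you bound $\vw(C_t) \le 9\,\rad_t(C_t)$ at the end to get $28$ — but both are $O(\rad_t(C_t))$ and the argument is identical in substance.
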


\begin{proof}
Fix round $t$. Part (a) follows because $I_t(C_t) = U_t(C_t)+\rad_t(C_t)$ by definition of the index, and
    $U_t(C_t) \leq U(C_t) + \rad_t(C_t)$
by clean execution.

For part (b), fix a contract $x\in C_t$. Then:
\begin{align}
U_t(C_t)
    &\leq U(C_t) + \rad_t(C_t)
        & \text{by clean execution} \nonumber \\
    &\leq U(x) + \width(C_t) + \rad_t(C_t)
        & \text{by definition of width} \nonumber \\
    &\leq U(x) + \vw(C_t)  + \rad_t(C_t)
        & \text{by Lemma~\ref{lm:virtual_width}} \nonumber \\
    &\leq U(x) + W_t(C_t) + 5\,\rad_t(C_t) & \text{by clean execution}.
        \label{eq:lem:badness-2} \\
I_t(C_t)
    &= U_t(C_t) + W_t(C_t) + 5\,\rad_t(C_t)
        & \text{by definition of index}   \nonumber \\
	&\leq U(x) + 2\,W_t(C_t) + 10\,\rad_t(C_t)
        & \text{by \refeq{eq:lem:badness-2}} \nonumber \\
    &\leq U(x) + 20\,\rad_t(C_t)
        & \text{by Invariant~\ref{cl:zooming-invariants}(c)}. \nonumber \qquad\qquad\quad \qedhere
\end{align}
\end{proof}

For each relevant cell $C$, define badness $\Delta(C)$ as follows. If $C$ is composite,
    $\Delta(C) = \sup_{x\in C}\, \Delta(x)$
is the maximal badness among all contracts in $C$. If $C$ is atomic and $x\in C$ is the unique candidate contract in $C$, then $\Delta(C) = \Delta(x)$.

\begin{claim}\label{cl:badness}
In any clean execution,
    $\Delta(C) \leq O(\rad_t(C))$
for each round $t$ and each active cell $C$.
\end{claim}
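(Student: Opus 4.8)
The plan is to bound the badness of each active cell $C$ at round $t$ by tracking what happened in the most recent round $s \le t$ in which $C$ (or, if $C$ was created by a zoom-in at some round, its parent cell) was selected by the algorithm. The key fact is that $n_t(C)$ only increases when $C$ is selected, so $\rad_t(C) = \rad_{s}(C)$ for the last round $s < t$ in which $C$ was chosen; thus it suffices to control $\Delta(C)$ in terms of $\rad_s(C)$ at that round, plus handle the case where $C$ has never been chosen (in which case $n_t(C)=0$, $\rad_t(C)=\infty$ by convention, and the bound is trivial).

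First I would combine the index lower bound (Claim~\ref{cl:index-LB}) and the index upper bound (Claim~\ref{cl:index-UB}) at that round $s$: Claim~\ref{cl:index-LB} gives $I_s(C_s) \ge \OPT(\Xcand)$, and Claim~\ref{cl:index-UB} gives, for any $x \in C_s$, that $I_s(C_s) \le U(x) + O(\rad_s(C_s))$ (taking the atomic case $I_s(C_s) \le U(C_s) + 2\rad_s(C_s)$ when $C_s$ is atomic, where $U(C_s) = U(x)$ for the unique candidate $x$). Chaining these yields $\Delta(x) = \OPT(\Xcand) - U(x) \le O(\rad_s(C_s))$ for every candidate contract $x$ in $C_s$, hence $\Delta(C_s) \le O(\rad_s(C_s))$ by the definition of $\Delta(C_s)$ as a sup over contracts in $C_s$. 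This handles any cell that is currently active and was itself selected at some earlier round: take $s$ to be the last such round, and note $\rad_s(C) = \rad_t(C)$ since $n$ did not change in between.

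The remaining case is an active cell $C$ that has \emph{never} been selected but has $n_t(C) > 0$ — but this cannot happen, since $n_t(C)$ counts exactly the rounds in which $C$ was chosen. So the only cells with $n_t(C) = 0$ are those never chosen, for which $\rad_t(C) = \sqrt{\confC \log T / 0} = \infty$ (or $\ge 1$, which already exceeds the trivial bound $\Delta(C) \le 1$ since utilities lie in $[-1,1]$), making the claim vacuous. Putting the two cases together gives $\Delta(C) \le O(\rad_t(C))$ for all active cells $C$ and all rounds $t$.

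The main obstacle I anticipate is the bookkeeping around the monotonicity of the confidence radius and the precise relationship between a cell and its parent: one must be careful that when $C$ was activated (by zooming in on its parent $C'$ at some round $r$), the bound $\Delta(C') \le O(\rad_r(C'))$ transfers to $\Delta(C) \le O(\rad_r(C'))$ because $C \subset C'$ and hence $\Delta(C) \le \Delta(C')$ — and then that after activation $C$ is itself selected before the bound is needed, or else $\rad_t(C)$ is still large enough (indeed $\rad_t(C) \ge \rad_r(C')$ would need justification, or one simply argues directly from the last round $C$ itself was picked). The cleanest route, which I would follow, avoids the parent entirely: argue that for an active cell $C$ with $n_t(C) \ge 1$, letting $s$ be the last round $\le t-1$ with $C_s = C$, we have $\rad_t(C) = \rad_{s+1}(C) \ge \rad_s(C)/\sqrt{2}$-type relations are not even needed since $n_{s+1}(C) = n_t(C)$; thus $\Delta(C) \le O(\rad_s(C)) \le O(\rad_t(C)) \cdot \sqrt{(n_t(C))/(n_s(C))}$, and since $n_t(C) = n_s(C) + 1 \le 2 n_s(C)$, the factor is $O(1)$. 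That $n_t(C) = n_s(C)+1$ is where I'd be most careful, but it is immediate from the definition of $s$.
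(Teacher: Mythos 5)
Your proof is correct and follows the paper's approach exactly: combine Claims~\ref{cl:index-LB} and~\ref{cl:index-UB} at the most recent round $s\le t-1$ in which $C$ was selected to get $\Delta(C)\le O(\rad_s(C))$, observe that the radius is essentially unchanged at round $t$, and dismiss the never-selected case trivially. You are in fact slightly more careful than the paper, which asserts $\rad_s(C)=\rad_t(C)$ outright; as you note, $n_t(C)=n_s(C)+1$, so $\rad_s(C)/\rad_t(C)=\sqrt{1+1/n_s(C)}\le\sqrt{2}$ once $n_s(C)\ge 1$, and the one residual sub-case $n_s(C)=0$ (which both you and the paper leave implicit) is immediate because $\rad_t(C)=\sqrt{\confC\log T}$ already exceeds any possible value of $\Delta(C)$.
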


\begin{proof}
By Claims~\ref{cl:index-LB} and~\ref{cl:index-UB},
    $\Delta(C_t) \leq O(\rad_t(C_t))$
for each round $t$.
Fix round $t$ and let $C$ be an active cell in this round. If $C$ has never be selected before round $t$, the claim is trivially true. Else, let $s$ be the most recent round before $t$ when $C$ is selected by the algorithm. Then
    $\Delta(C) \leq O(\rad_s(C))$.
The claim follows since $\rad_s(C) = \rad_t(C)$.
\end{proof}

\begin{claim}\label{cl:badness2}
In a clean execution,
each cell $C$ is selected $\leq O(\log T / (\Delta(C))^2)$ times.
\end{claim}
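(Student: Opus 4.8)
The plan is to derive this directly from Claim~\ref{cl:badness}, which already contains the substance: a cell's badness controls its confidence radius. Fix a relevant cell $C$. If $C$ is never selected there is nothing to prove, so let $t$ be the last round in which the algorithm selects $C$; this is well-defined because once a composite cell is zoomed in it is permanently removed from the active collection (and atomic cells are never reselected either), so a cell is never chosen again after it leaves the active set. In round $t$ the cell $C$ is active, hence Claim~\ref{cl:badness} gives $\Delta(C) \leq O(\rad_t(C))$.

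Next I would unfold $\rad_t(C) = \sqrt{\confC\,\log(T)/n_t(C)}$ and solve for $n_t(C)$: the bound $\Delta(C) \leq O(\sqrt{\confC\,\log(T)/n_t(C)})$ rearranges to $n_t(C) \leq O(\confC\,\log(T)/(\Delta(C))^2) = O(\log T/(\Delta(C))^2)$, since $\confC$ is an absolute constant. Because $t$ is the final round in which $C$ is chosen, the total number of selections of $C$ equals $n_t(C)+1$; the additive $1$ is absorbed into the $O(\cdot)$ using that $\Delta(C)$ is bounded by an absolute constant (payments under weakly bounded contracts lie in $[0,m]$ and $m$ is constant, so $\Delta(C) = O(1)$) together with $\log T = \Omega(1)$ under the standing assumption $T \geq 18$; thus $1 = O(\log T/(\Delta(C))^2)$.

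The one degenerate case to note is $n_t(C) = 0$, i.e., $C$ is selected exactly once, where the confidence radius is notionally infinite and the rearrangement is vacuous; but then the single selection is itself $O(\log T/(\Delta(C))^2)$ by the same boundedness argument, so the conclusion still holds. I do not expect a genuine obstacle here: the statement is essentially a restatement of Claim~\ref{cl:badness} in terms of selection counts, and the only care needed is the bookkeeping around the ``last selection round'' and the absorption of the additive constant into the $O(\cdot)$.
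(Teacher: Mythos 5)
Your proposal is correct and follows the same route as the paper's proof: apply Claim~\ref{cl:badness} to bound $\Delta(C)$ by $O(\rad_t(C))$, then solve for the selection count using the definition of the confidence radius in \refeq{eq:conf-rad}. The paper's version is a terse one-liner that invokes $\rad_T(C)$ directly (implicitly using that $\rad_t(C)$ is frozen once $C$ is no longer selected), whereas you explicitly pin down the last round in which $C$ is chosen, absorb the off-by-one into the $O(\cdot)$, and handle the $n_t(C)=0$ edge case --- a slightly more careful write-up of the same argument.
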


\begin{proof}
By Claim~\ref{cl:badness},
    $\Delta(C) \leq O(\rad_T(C))$.
The claim follows from the definition of $\rad_T$ in \refeq{eq:conf-rad}.
\end{proof}

Let $n(x)$ and $n(C)$ be the number of times contract $x$ and cell $C$, respectively, are chosen by the algorithm. Then regret of the algorithm is
\begin{align}\label{eq:regret-sum}
R(T|\Xcand) = \textstyle \sum_{x\in X} n(x) \; \Delta(x)
    \leq \sum_{\text{cells $C$}}\; n(C) \, \Delta(C).
\end{align}
The next result (Lemma~\ref{lm:clean-execution}) upper-bounds the right-hand side of \refeq{eq:regret-sum} for a clean execution. By Lemma~\ref{lem:clean_prob}, this suffices to complete the proof of  Theorem~\ref{thm:regret}

\begin{lemma}\label{lm:clean-execution}
Consider a clean execution of \zooming. For any $\delta\in (0,1)$,
\begin{align*}
\textstyle
\sum_{\text{cells $C$}}\; n(C) \, \Delta(C)
\leq
    \delta T + O(\log T)
	       \sum_{\eps=2^{-j}\geq\delta:\; j\in \N}\; \frac{|\mF_\eps(X_{2\eps})|}{\eps}.
\end{align*}
\end{lemma}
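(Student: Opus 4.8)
The plan is to bound the sum $\sum_C n(C)\,\Delta(C)$ by partitioning the active cells according to their badness scale. First I would split off the cells with small badness: using \refeq{eq:regret-sum} and the fact that each cell is selected at most $O(\log T/\Delta(C)^2)$ times (Claim~\ref{cl:badness2}), the contribution of any single cell $C$ to the regret sum is $n(C)\,\Delta(C) = O(\log T/\Delta(C))$. So for cells with $\Delta(C) < \delta$ we cannot use this bound directly; instead I would observe that the \emph{total} number of rounds is $T$, so the rounds spent on cells with $\Delta(C) \le \delta$ contribute at most $\delta T$ to regret (each such round contributes badness at most $\delta$). This gives the $\delta T$ term. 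It remains to bound the contribution of cells with $\Delta(C) > \delta$.

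For the remaining cells, I would bucket them dyadically: for $\eps = 2^{-j} \ge \delta$, let $\mathcal{B}_\eps$ be the set of active cells $C$ with $\Delta(C) \in (\eps, 2\eps]$. Each such cell contributes $n(C)\,\Delta(C) = O(\log T / \Delta(C)) = O(\log T/\eps)$ to the regret sum, so the total contribution of bucket $\eps$ is $O(\log T)\cdot |\mathcal{B}_\eps|/\eps$, and summing over the $O(\log T)$ dyadic scales (absorbed into the $O(\log T)$ factor, or kept separate) gives the desired shape. The crux is then to show $|\mathcal{B}_\eps| \le |\mF_\eps(X_{2\eps})| + (\text{atomic cells})$, i.e., that active cells of badness scale $\eps$ are, essentially, feasible composite cells of virtual width $\ge \eps$ overlapping $X_{2\eps}$.

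To establish this containment I would argue as follows. An active cell $C$ with $\Delta(C) \le 2\eps$ has, by definition of $\Delta(C)$ as a sup/value of badness, $\Delta(x) \le 2\eps$ for the relevant candidate contract(s) $x \in C$, hence $C$ overlaps $X_{2\eps}$. Every active cell is feasible by construction (it arises from repeated zooming starting at $[0,1]^m$) and relevant by Invariant~\ref{cl:zooming-invariants}(a). For the virtual-width lower bound I would use the zooming rule together with Claim~\ref{cl:badness}: if a composite cell $C$ was ever \emph{selected}, then at the round $s$ just before the current round when it was last selected we had $\Delta(C) = O(\rad_s(C))$; combined with $\Delta(C) > \eps$ this forces $\rad_s(C)$ to be bounded below by a constant times $\eps$, and then either $C$ got zoomed (so it is no longer active — contradiction) or the zooming rule failed, i.e.\ $W_{s+1}(C) \le 5\,\rad_{s+1}(C)$, which via the clean-execution bound \refeq{eq:def-clean-W} gives $\vw(C) \le W_{s+1}(C) + 4\rad_{s+1}(C) = O(\rad_{s+1}(C)) = O(\eps)$ — but I actually want the \emph{reverse}, that $\vw(C) \ge \Omega(\eps)$. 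So instead the right argument is: since $C$ is still active and composite, the zooming rule never fired, so $W_{t+1}(C) \le 5\rad_{t+1}(C)$ throughout; but if $\vw(C)$ were much smaller than $\Delta(C)$, Claim~\ref{cl:index-UB}(b) would already give a contradiction with Claim~\ref{cl:index-LB} once $\rad_t(C)$ shrinks below $c\,\Delta(C)$, which must happen unless $C$ is selected $\Omega(\log T/\Delta(C)^2)$ times — i.e., it is selected exactly that many times, which is what feeds the count. Chasing this carefully, I expect $\vw(C) \ge \Omega(\eps)$ for every composite $C$ in bucket $\mathcal{B}_\eps$, so $C \in \mF_{\Omega(\eps)}(X_{2\eps})$, and after rescaling the dyadic index and absorbing constants into $\SlackC$ we recover the stated bound. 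The main obstacle is precisely this last step: cleanly relating the badness scale $\Delta(C)$ of an active composite cell to its virtual width $\vw(C)$, handling the interplay between the selection count, the confidence radius, and the zooming rule, and separately accounting for the atomic cells (whose count is $O(|\Xcand|)$ but which I would fold into the composite-cell count via the $X_{2\eps}$ overlap, since an atomic cell in bucket $\eps$ also overlaps $X_{2\eps}$ and its parent composite cells are feasible with virtual width $\ge \eps$).
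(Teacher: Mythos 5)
Your decomposition into a $\delta T$ term plus dyadic badness buckets $\mathcal{B}_\eps$, with each bucket contributing $O(\log T)\,|\mathcal{B}_\eps|/\eps$, matches the paper's strategy exactly, and your identification of the crux---showing $|\mathcal{B}_\eps| \lesssim |\mF_{\Omega(\eps)}(X_{2\eps})|$---is correct. But the argument you give for that containment has a genuine gap, and you half-notice it yourself.

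The issue is that you try to directly show $\vw(C) \ge \Omega(\eps)$ for \emph{every} composite cell in $\mathcal{B}_\eps$, and this is simply false for cells that were never zoomed in on. For a still-active composite cell $C$, the zooming rule never fired, so $W_{t+1}(C) \le 5\rad_{t+1}(C)$ for all $t$, and hence by clean execution $\vw(C) \le W_{t+1}(C) + 4\rad_{t+1}(C) \le 9\rad_{t+1}(C)$ -- an \emph{upper} bound. There is no lower bound: such a cell can have arbitrarily small (even zero) virtual width. Your fallback argument, invoking Claim~\ref{cl:index-UB}(b) against Claim~\ref{cl:index-LB}, doesn't help: Claim~\ref{cl:index-UB}(b) follows from the width \emph{estimate} $W_t$ and Invariant~\ref{cl:zooming-invariants}(c), not from $\vw(C)$, so combining the two claims only reproduces Claim~\ref{cl:badness} ($\Delta(C) \le O(\rad_t(C))$) and says nothing about $\vw(C)$.

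The missing idea is the leaf/non-leaf charging scheme. The right set to count is $\mG_\eps$: all cells \emph{ever activated} (not just currently active) with $\Delta(C) \in [\eps, 2\eps)$. For a cell $C$ that \emph{was} zoomed in on, the zooming rule fired at some round $t$, giving $5\rad_t(C) \le W_t(C) \le \vw(C) + 4\rad_t(C)$, hence $\rad_t(C) \le \vw(C)$; combined with $\eps \le \Delta(C) \le O(\rad_t(C))$ this yields $\vw(C) = \Omega(\eps)$, so $C \in \mF_{\Omega(\eps)}(X_{2\eps})$. For a cell $C$ that was \emph{never} zoomed in on (a leaf, including all atomic cells and all cells still active at horizon $T$), you cannot bound $\vw(C)$, and instead you must charge $C$ to its parent $\parent$, which \emph{was} zoomed in on (that is how $C$ got activated); by Claim~\ref{cl:badness-props}(a), $\Delta(C) \le \Delta(\parent)$, so the non-leaf argument applies to $\parent$ and gives $\parent \in \mF_{\Omega(\eps)}(X_{2\eps})$. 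Each parent has at most $2^m$ children, so $|\mG_\eps| \le (2^m+1)\,|\mF_{\Omega(\eps)}(X_{2\eps})|$. You gesture at exactly this for atomic cells at the very end, but you need it for \emph{all} leaves, not only the atomic ones, and your main argument for composite cells is the one that fails.
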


\newcommand{\parent}{C_{\mathtt{p}}}

The proof of Lemma~\ref{lm:clean-execution} relies on some simple properties of $\Delta(\cdot)$, stated below.

\begin{claim}\label{cl:badness-props}
Consider two relevant cells $C\subset \parent$. Then:
\begin{OneLiners}
\item[(a)] $\Delta(C)\leq \Delta(\parent)$.
\item[(b)] If $\Delta(C)\leq \eps$ for some $\eps>0$, then $C$ overlaps with $X_\eps$.
\end{OneLiners}
\end{claim}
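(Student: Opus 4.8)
\textbf{Proof proposal.} The plan is to prove both parts by directly unwinding the definition of $\Delta(\cdot)$ on cells, splitting into cases according to whether the cells involved are atomic or composite. The only structural fact I will need beyond the definitions is the following observation about the cell hierarchy: if $\parent$ is atomic and $C\subset \parent$ is relevant, then $C$ contains a candidate contract, and since $\parent$ contains exactly one, $C$ must contain that same unique candidate contract and hence be atomic as well. I would state this observation first, since it is used in both parts.

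For part~(a), I would argue as follows. If $\parent$ is composite, then $\Delta(\parent) = \sup_{x\in\parent}\Delta(x)$. When $C$ is also composite, $\Delta(C) = \sup_{x\in C}\Delta(x)\leq \sup_{x\in\parent}\Delta(x) = \Delta(\parent)$ because $C\subseteq\parent$. When $C$ is atomic with unique candidate contract $x_0\in C\subseteq\parent$, then $\Delta(C) = \Delta(x_0)\leq \sup_{x\in\parent}\Delta(x) = \Delta(\parent)$. If instead $\parent$ is atomic, then by the observation above $C$ is atomic with the same unique candidate contract, so $\Delta(C) = \Delta(\parent)$. This exhausts all cases.

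For part~(b), suppose $\Delta(C)\leq\eps$. Since $C$ is relevant it contains at least one candidate contract, say $x_0$. If $C$ is atomic, $x_0$ is the unique candidate contract and $\Delta(x_0) = \Delta(C)\leq\eps$, so $x_0\in X_\eps\cap C$. If $C$ is composite, then $\Delta(x_0)\leq \sup_{x\in C}\Delta(x) = \Delta(C)\leq\eps$, so again $x_0\in X_\eps\cap C$. Either way $C$ overlaps with $X_\eps$.

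Neither part presents a real obstacle; the proof is a routine case analysis. The one place requiring a moment of care — and the only nontrivial ingredient — is the handedling of atomic cells and their sub-cells, namely that a relevant cell contained in an atomic cell is itself atomic with the same candidate contract; I would make sure to spell this out explicitly rather than rely on the reader to notice it, since the definition of $\Delta(\cdot)$ is given by separate cases for atomic and composite cells.
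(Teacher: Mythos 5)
Your proof is correct and takes essentially the same approach as the paper: a case analysis on whether cells are atomic or composite, hinging on the observation that a relevant sub-cell of an atomic cell is itself atomic with the same unique candidate contract. You spell out a few more sub-cases explicitly than the paper does, but the underlying argument is identical.
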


\begin{proof}
To prove part (a), one needs to consider two cases, depending on whether cell $\parent$ is composite. If it is, the claim follows trivially. If $\parent$ is atomic, then
$C$ is atomic, too, and so $\Delta(C) = \Delta(\parent)= \Delta(x)$, where $x$ is the unique candidate contract in $\parent$.

For part (b), there exists a candidate contract $x\in C$. It is easy to see that
    $\Delta(x)\leq \Delta(C)$
(again, consider two cases, depending on whether $C$ is composite.)
So, $x\in X_\eps$.
\end{proof}

\begin{proof}[Proof of Lemma~\ref{lm:clean-execution}]
Let $\Sigma$ denote the sum in question. Let $\mA^*$ be the collection of all cells ever activated by the algorithm. Among such cells, consider those with badness on the order of $\eps$:
\[
	\mG_\eps := \left\{\;  C\in\mA^*: \Delta(C) \in [\eps, 2\eps) \; \right\}.
\]
By Claim~\ref{cl:badness2}, the algorithm chooses each cell $C\in
\mG_\eps$ at most $O(\log T/ \epsilon^2)$ times, so
    $n(C)\, \Delta(C) \leq O(\log T / \epsilon)$.

Fix some $\delta\in (0,1)$ and observe that all cells $C$ with $\Delta(C)\leq \delta$ contribute at most $\delta T$ to $\Sigma$. Therefore it suffices to focus on
   $\mG_\eps$,  $\eps\geq \delta/2$.
It follows that
\begin{align}\label{eq:regret-proof-almost-done}
	\Sigma &\leq  \delta T + O(\log T)
       \textstyle \sum_{\eps = 2^{-i} \geq \delta/2} \; \frac{|\mG_\eps|}{\eps}.
\end{align}

We bound $|\mG_\eps|$ as follows. Consider a cell $C\in \mG_\eps$. The
cell is called a \emph{leaf} if it is never zoomed in on (i.e., removed from the active
set) by the algorithm. If $C$ is activated in the round when cell
$\parent$ is zoomed in on, $\parent$ is called the \emph{parent} of
$C$. We consider two cases, depending on whether or not $C$ is a leaf.
\begin{itemize}

\item[(i)] Assume cell $C$ is not a leaf. Since
    $\Delta(C) < 2\eps$,
$C$ overlaps with $X_{2\eps}$ by Claim~\ref{cl:badness-props}(b). Note
that $C$ is zoomed in on in some round, say in round $t-1$. Then
\begin{align*}
5\,\rad_t(C)
    &\leq W_t(C) & \text{by the zooming rule} \\
    &\leq \vw(C) + 4\,\rad_t(C) & \text{by clean execution},
\end{align*}
so
    $\rad_t(C) \leq \vw(C)$.
Therefore, using Claim~\ref{cl:badness}, we have
    $$\eps \leq \Delta(C) \leq O(\rad_t(C)) \leq O(\vw(C)).$$
It follows that
	$C \in \mF_{\Omega(\eps)}(X_{2\eps})$.

\item[(ii)] Assume cell $C$ is a leaf. Let $\parent$ be the parent of $C$. Since
    $C \subset \parent$,
we have
    $\Delta(C)\leq \Delta(\parent)$ by Claim~\ref{cl:badness-props}(a).
Therefore, invoking case (i), we have
    $$\eps \leq \Delta(C) \leq \Delta(\parent) \leq O(\vw(\parent)).$$
Since $\Delta(C)<2\eps$, $C$ overlaps with $X_{2\eps}$ by Claim~\ref{cl:badness-props}(b), and therefore so does $\parent$. It follows that
    $\parent\in \mF_{\Omega(\eps)}(X_{2\eps})$.
\end{itemize}

Combing these two cases, it follows that
    $|\mG_\eps| \leq (2^m+1) \abs{\mF_{\Omega(\eps)}(X_{2\eps})}$.
Plugging this into \eqref{eq:regret-proof-almost-done} and making an appropriate substitution $\eps \to \Theta(\eps)$ to simplify the resulting expression, we obtain the regret bound in Theorem~\ref{thm:regret}
\end{proof}

\section{Simulations}
\label{sec:simulation}

\newcommand{\discsize}{\ensuremath{\minSize}\xspace} 
\newcommand{\marketUnif}{Uniform Worker Market\xspace}
\newcommand{\marketHomo}{Homogeneous Worker Market\xspace}
\newcommand{\marketTwoType}{Two-Type Market\xspace}

We evaluate the performance of \zooming through simulations. \zooming is compared with two versions of \NonAdaptive that use, respectively, two standard bandit algorithms: \UCB \citep{bandits-ucb1} and Thompson Sampling \citep{Thompson-1933} (with Gaussian priors). For both \UCB and \zooming, we replace the logarithmic confidence terms with small constants. (We find such changes beneficial in practice, for both algorithms; this observation is consistent with prior work \citep{RBA-icml08,ZoomingRBA-icml10}.)
All three algorithms are run with $\Xcand = \Xcand(\minSize)$, where $\minSize>0$ is the granularity of the discretization.

\OMIT{In \zooming, we pick the anchors $x^+(C)$ and $x^-(C)$ of a composite cell $C$ to be the two candidate contracts within $C$ in which all increments payments are maximal/minimal.\cj{added one sentence about how we choose anchors.}}

\xhdr{Setup.}
We consider a version of the \hlexample, as described in Section~\ref{sec:hlexample}. We set the requester's values to $V(\high)=1$ and $V(\low)=.3$. The probability of obtaining \high outcome given \high effort is set to $\prH=.8$. Thus, the worker's type is characterized by the cost $\cH$ for \high effort. We consider three supply distributions:
\begin{itemize}
\item \emph{\marketUnif}: $\cH$ is uniformly distributed on $[0,1]$.
\item \emph{\marketHomo}: $\cH$ is the same for every worker.
\item \emph{\marketTwoType}: $\cH$ is uniformly distributed over two values, $\cH'$ and $\cH''$.
\end{itemize}

These first two markets represent the extreme cases when workers are extremely homogeneous or extremely diverse, and the third market is one way to represent the middle ground. For each market, we run each algorithm 100 times. For \marketHomo, $\cH$ is drawn uniformly at random from $[0,1]$ for each run. For \marketTwoType, $\cH'$ and $\cH''$ are drawn independently and uniformly from $[0,1]$ on each run.

\xhdr{Overview of the results.}
Across all simulations, \zooming performs comparably to or better than \NonAdaptive. 
Its performance does not appear to suffer from large ``hidden constants" that appear in the analysis. We find that \zooming converges faster than \NonAdaptive when \discsize is near-optimal or smaller; this is consistent with the intuition that \zooming focuses on exploring the more promising regions. When \discsize is large, \zooming converges slower than \NonAdaptive, but eventually achieves the same performance. Further, we find that \zooming with small \discsize performs well compared to \NonAdaptive with larger  \discsize: not much worse initially, and much better eventually.

Our simulations suggest that \emph{if} time horizon $T$ is known in advance \emph{and} one can tune $\minSize$ to $T$, then \NonAdaptive can achieve similar performance as \zooming.  However, in real applications approximately optimal $\minSize$ may be difficult to compute, and the $T$ may not be known in advance.


\xhdr{Detailed results.}
Recall that in both \UCB and \zooming, the logarithmic confidence terms are replaced with small constants. For \UCB, the confidence term is $1$, so that if a given arm $a$ has been played $n_a$ times, its index is simply the average reward plus $1/\sqrt{n_a}$. For \zooming, we set $\rad_t(\cdot)=1$ in the selection rule, and $\rad_t(\cdot)=.6$ in the zooming rule. For both algorithms, we tried several values and picked those that performed well across all three markets; we found that the performance of both algorithms is not very sensitive to the particular choice of these constants, as long as they are on the order of $1$.

\newcommand{\AveU}{\widehat{U}}

For each algorithm, we compute the time-averaged cumulative utility after $T$ rounds given granularity $\minSize$, denote it $\AveU(T,\minSize)$, for various values of $T$ and $\minSize$.

First, we fix the time horizon $T$ to 5K rounds, and study how $\AveU(T,\minSize)$ changes with $\minSize$ (see Figure~\ref{fig:eps}). We observe that \zooming either matches or outperforms both versions of \NonAdaptive, across all markets and all values of $\minSize$. \zooming has a huge advantage when $\minSize$ is small.

\begin{figure}[ht]
	\centering
   \subfigure[\marketUnif]{
		\includegraphics[width=0.48\columnwidth, keepaspectratio]{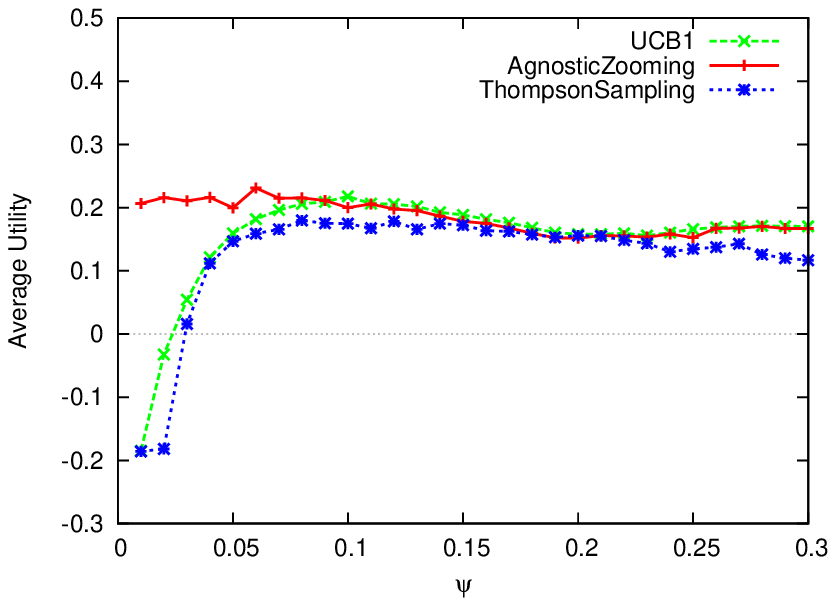}
		\label{fig:uniform}
	}
   \subfigure[\marketHomo]{
		\includegraphics[width=0.48\columnwidth, keepaspectratio]{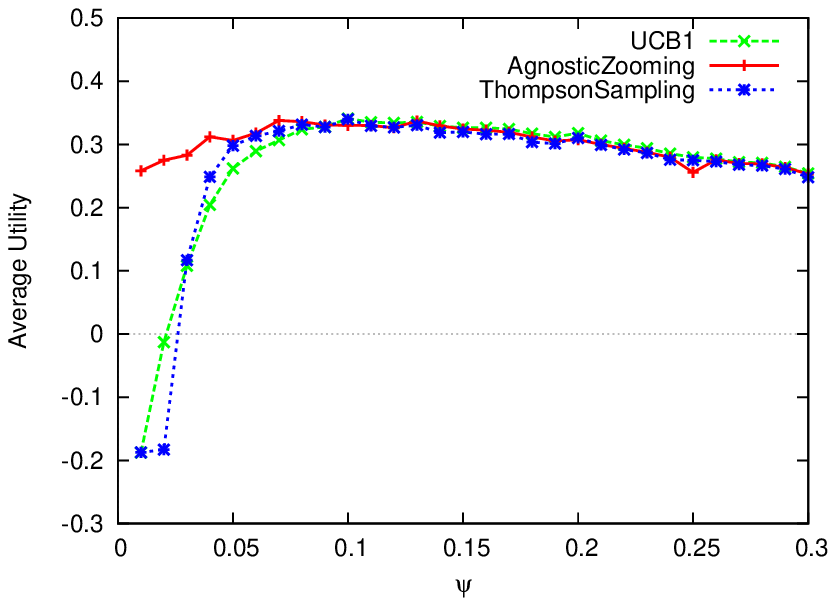}
		\label{fig:homogeneous}
	}
   \subfigure[\marketTwoType]{
		\includegraphics[width=0.48\columnwidth, keepaspectratio]{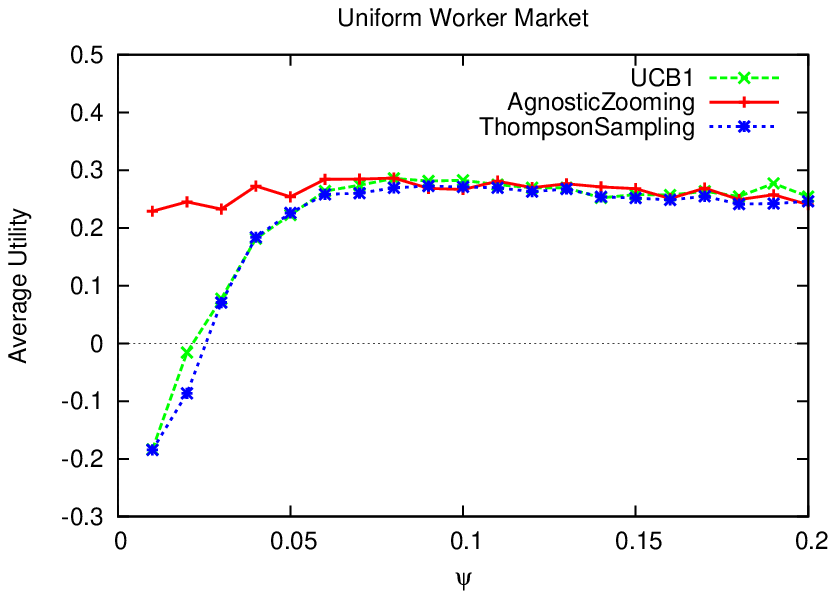}
		\label{fig:two}
	}
	\caption{The requester's payoff after $5,000$ rounds vs. the choice of initial discretization \discsize.}
	\label{fig:eps}
\end{figure}

Second, we study how the three algorithms perform over time. Specifically, we plot $\AveU(T,\minSize)$ vs. $T$, for three values of $\minSize$, namely $0.02$, $0.8$, and $0.2$. Since setting to $0.08$ is close to optimal in our examples, these values of $\minSize$ represent, resp., too small, adequate, and too large. The results are shown in Figure~\ref{fig:over-time}. We find that \zooming converges faster than \NonAdaptive when \discsize is adequate or small; this is consistent with the intuition that \zooming focuses on exploring the more promising regions. When \discsize is large, \zooming converges slower than \NonAdaptive, but eventually achieves the same performance.

\begin{figure}[p]
	\centering
   \subfigure{
		\includegraphics[width=0.32\columnwidth, keepaspectratio]{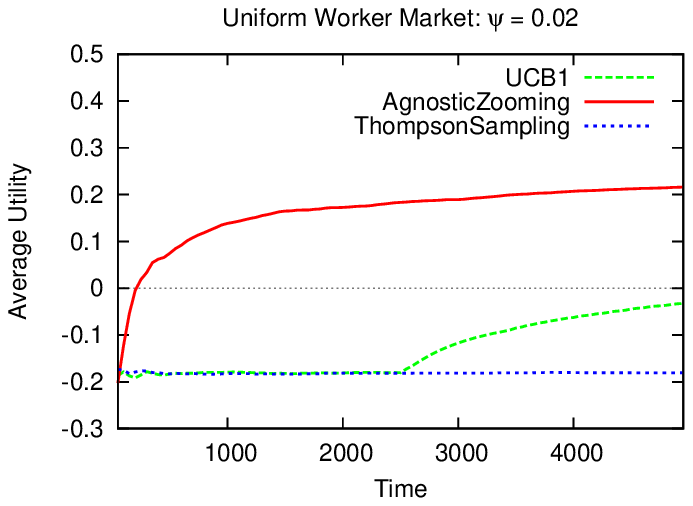}
		\label{fig:uniform0.02}
	}
   \subfigure{
		\includegraphics[width=0.32\columnwidth, keepaspectratio]{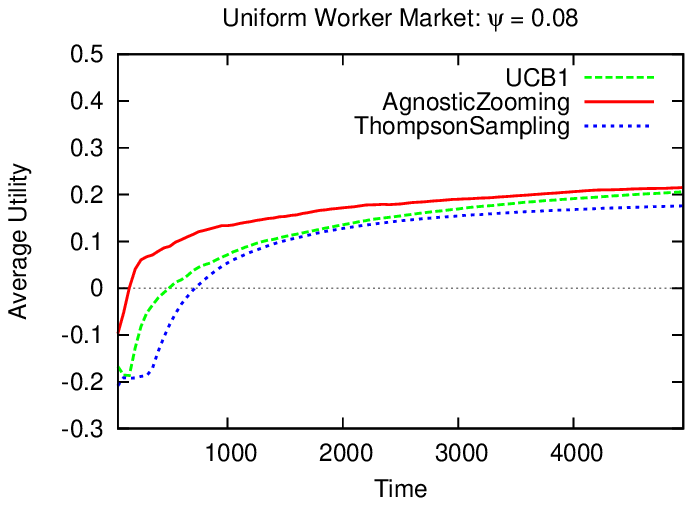}
		\label{fig:uniform0.08}
	}
   \subfigure{
		\includegraphics[width=0.32\columnwidth, keepaspectratio]{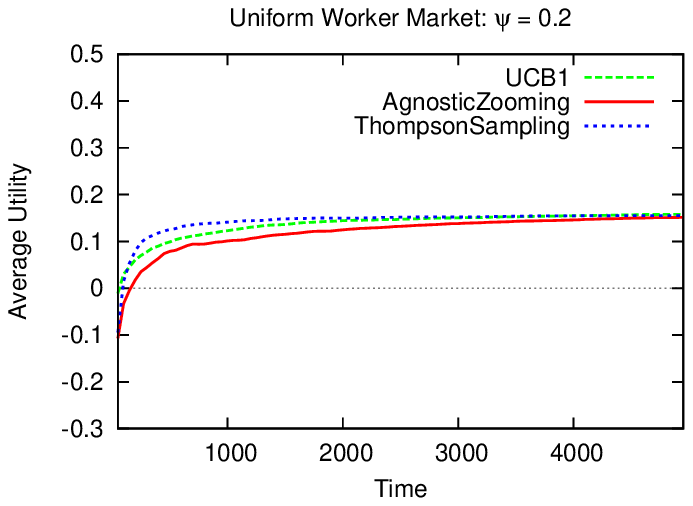}
		\label{fig:uniform0.2}
	}
   \subfigure{
		\includegraphics[width=0.32\columnwidth, keepaspectratio]{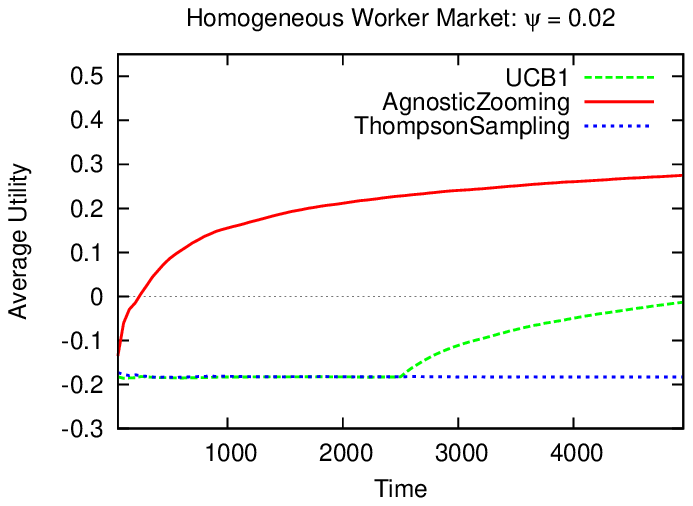}
		\label{fig:homogeneous0.02}
	}
   \subfigure{
		\includegraphics[width=0.32\columnwidth, keepaspectratio]{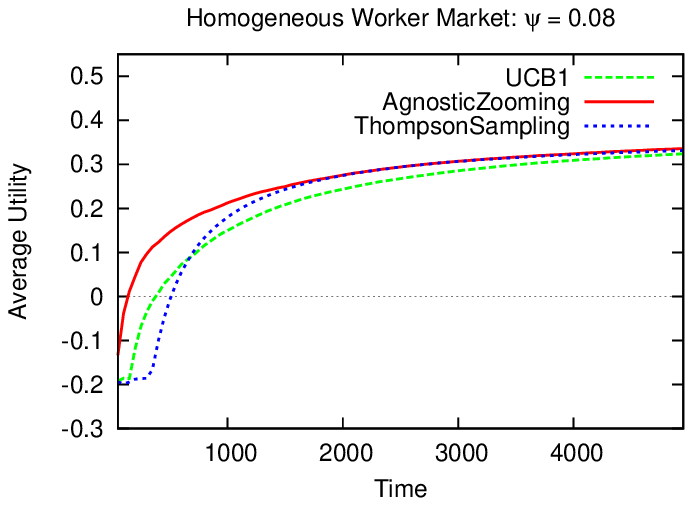}
		\label{fig:homogeneous0.08}
	}
   \subfigure{
		\includegraphics[width=0.32\columnwidth, keepaspectratio]{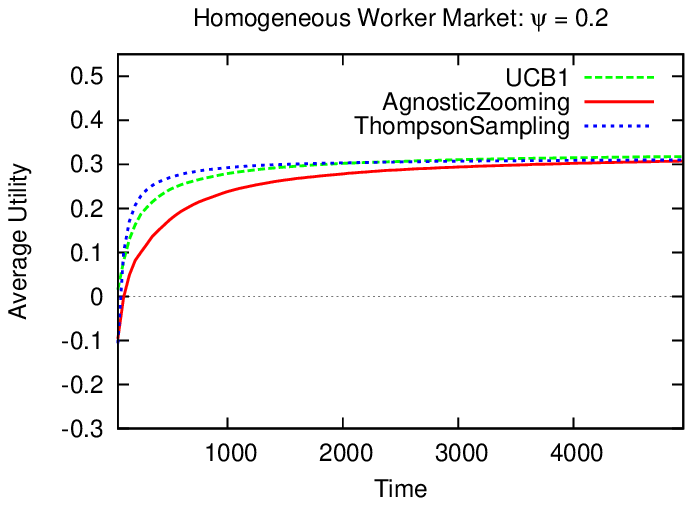}
		\label{fig:homogeneous0.2}
	}
   \subfigure{
		\includegraphics[width=0.32\columnwidth, keepaspectratio]{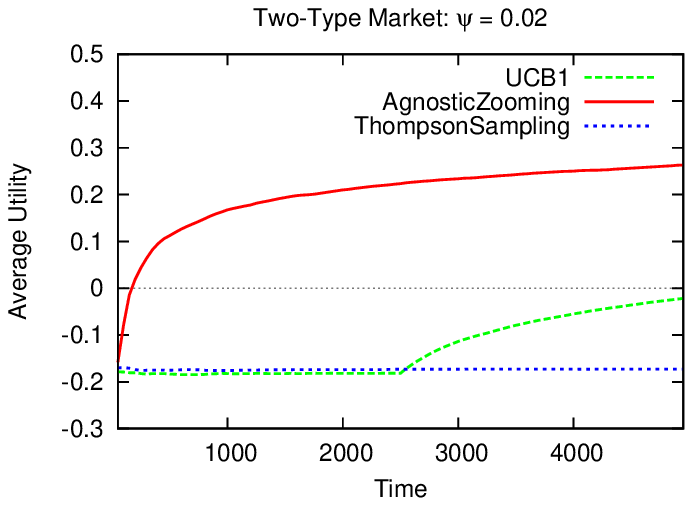}
		\label{fig:two0.02}
	}
   \subfigure{
		\includegraphics[width=0.32\columnwidth, keepaspectratio]{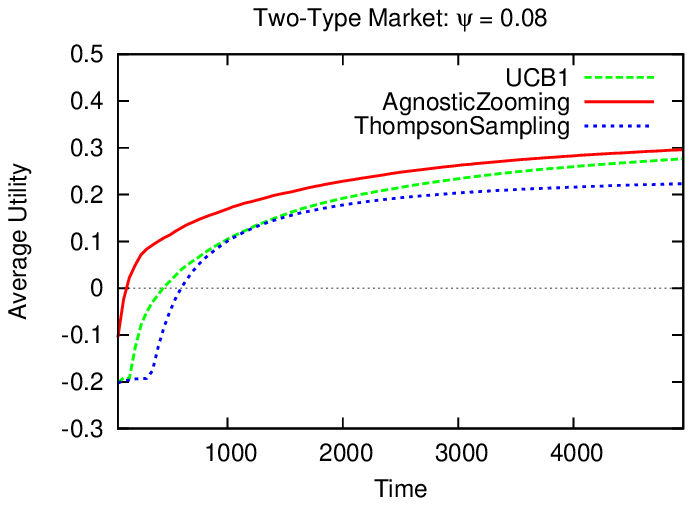}
		\label{fig:two0.08}
	}
   \subfigure{
		\includegraphics[width=0.32\columnwidth, keepaspectratio]{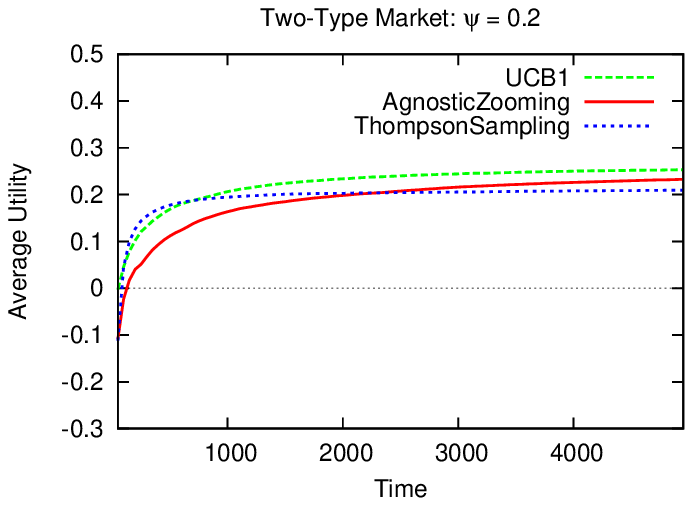}
		\label{fig:two0.2}
	}
	\caption{Algorithm performance over time under different discretization size.}
	\label{fig:over-time}
\end{figure}

\begin{figure}[p]
   \centering
	\begin{minipage}{.45\textwidth}
	\raggedleft
   \includegraphics[width=1\columnwidth, keepaspectratio]{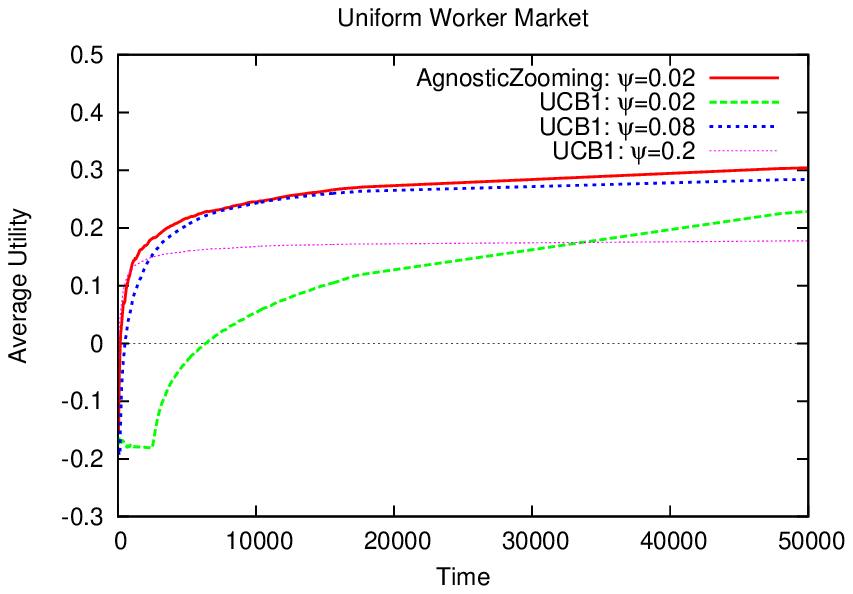}
   \caption{\zooming with small \discsize, compared to \NonAdaptive with three different values for \discsize. }
   \label{fig:uniform-comparison}
	\end{minipage}
   \begin{minipage}{.1\textwidth}
   \end{minipage}
   \begin{minipage}{.45\textwidth}
	\raggedright
	\includegraphics[width=1\columnwidth, keepaspectratio]{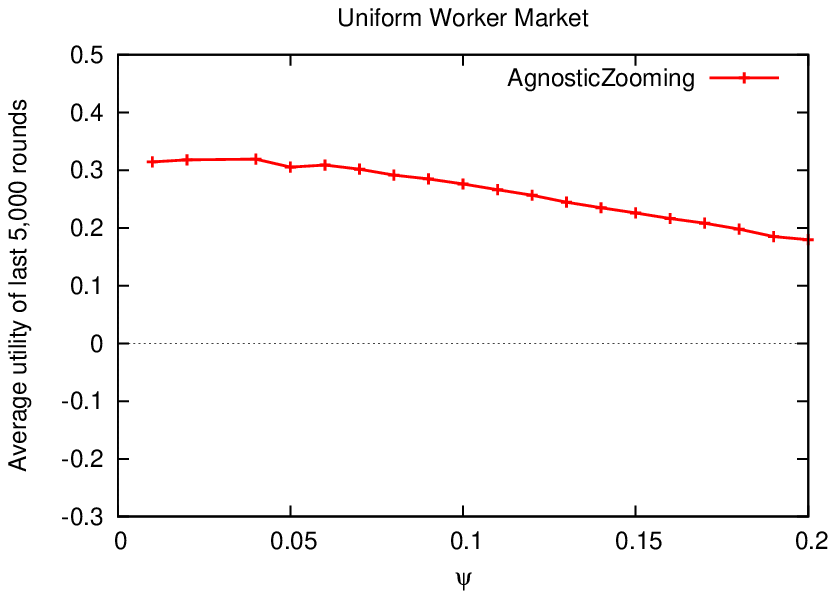}
	\caption{Average utility over the last 5K rounds in a 50K-round run of \zooming.}
	\label{fig:uniform-limit}
	\end{minipage}
\end{figure}


Our simulations suggest that if time horizon $T$ is known in advance and one can optimize the $\minSize$ given this $T$, then \NonAdaptive can achieve similar performance as \zooming.  However, in real applications approximately optimal $\minSize$ may be difficult to calculate; further, the $T$ may be unknown in advance.

Third, we argue that \zooming performs well with a small \discsize: we compare its performance against that for \NonAdaptive with different values of \discsize. For each algorithm and each choice of $\minSize$, we plot $\AveU(T,\minSize)$ vs. $T$, see Figure~\ref{fig:uniform-comparison}.~
\footnote{We only show the results for \marketUnif; the results for \marketHomo are very similar. We only show the version of \NonAdaptive with \UCB, because in our experiments it performs better that Thompson Sampling.
(We conjecture that this is because we replaced the logarithmic confidence term in \UCB with $1$.)}
 We find that for small $T$, \zooming with small \discsize converges nearly as fast as \NonAdaptive with larger \discsize. When $T$ is large, \zooming with small \discsize converges to a better payoff than \NonAdaptive with larger \discsize.

Fourth, we confirm the intuition that $\OPT(\Xcand(\minSize))$ decreases with the granularity $\minSize$. To this end, we run \zooming for 50K rounds, and take the average utility over the last 5K rounds, see Figure~\ref{fig:uniform-limit}.

The standard errors in all plots are in the order of $0.001$ or less. (Note that each point is not only the average of $100$ runs but also the average of all previous rounds.)

\ignore{
\xhdr{Discussion.}
In the simulations, we show that \zooming consistently outperforms the other two non-adaptive bandit algorithms given different discretization parameters and under different market settings. We also show that when \discsize is decreasing, \zooming still performs well while the performance of the other two non-adaptive algorithms degrade quickly.

Given the results, we believe it is a good strategy to run \zooming with a conservatively small \discsize in the initial discretization, especially when it is hard to know exactly what the optimal discretization is.  Even in the setting when the optimal \discsize for nonadaptive methods is known, running \zooming still slightly outperforms the nonadaptive algorithms.
}

\ignore{
\subsection{Todo}

\begin{itemize}
	\item Finish running the two-worker type setting \cj{finished. But I didn't observe anything interesting.}
	\item Run Thompson Sampling with different prior.
	\item Think about what is the reasonable minimum granularity and reasonable time rounds.
	\item Calculate OPT if possible (plot the offline optimal on the plot.)
	\item Maybe run the settings a large amount of times and draw error bar?
\end{itemize}

\ignore{
\begin{figure}
	\centering
	\includegraphics[width=0.8\columnwidth, keepaspectratio]{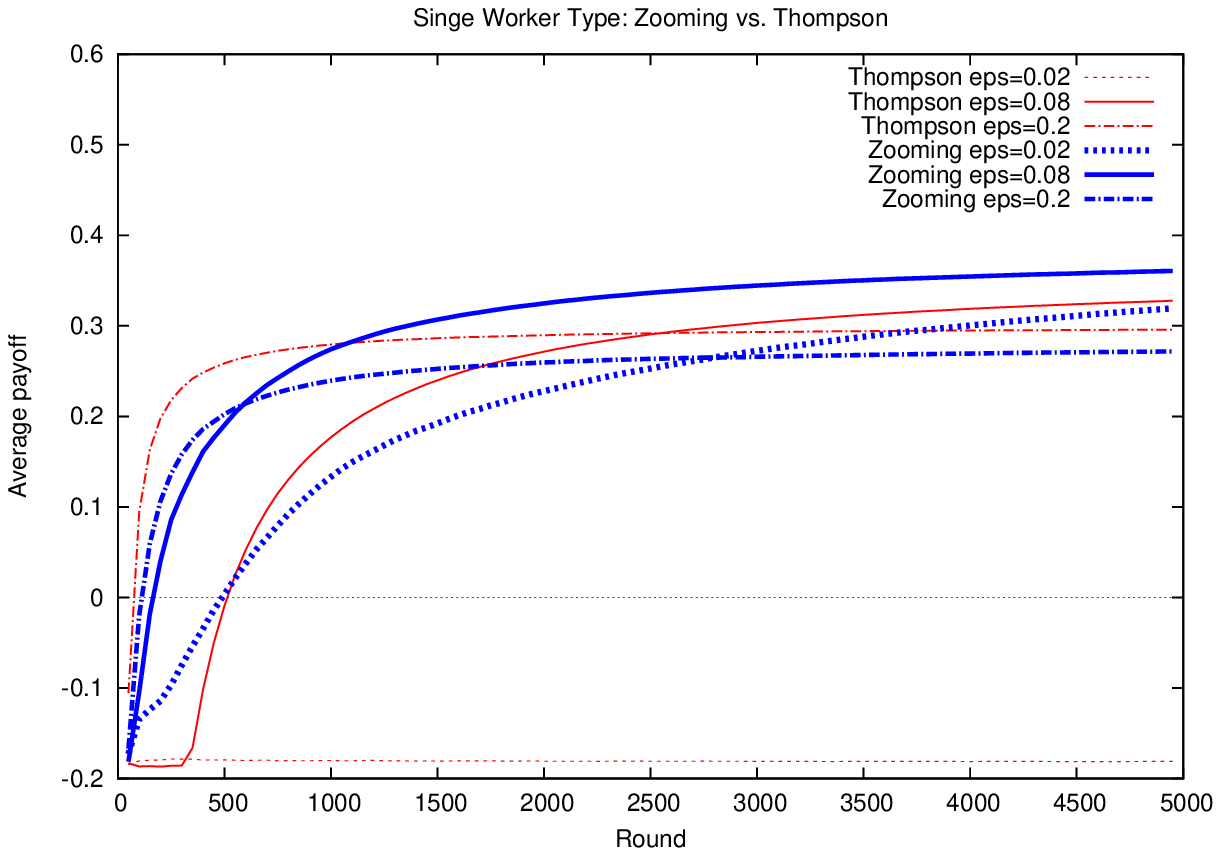}
	\caption{Single Worker, Zooming vs. Thompson}
	\label{fig:single-zooming-thompson}
\end{figure}

\begin{figure}[th]
	\centering
	\includegraphics[width=0.8\columnwidth, keepaspectratio]{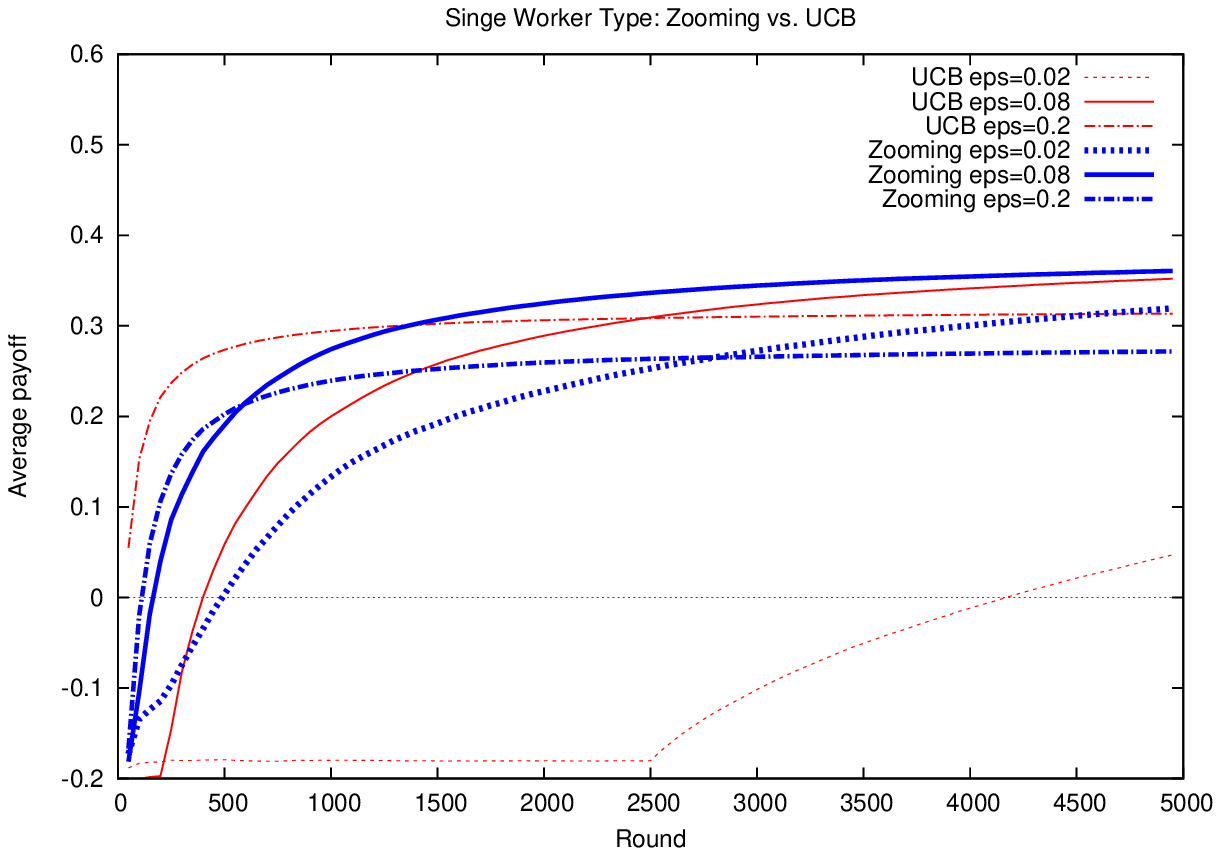}
	\caption{Single Worker, Zooming vs. UCB}
	\label{fig:single-zooming-ucb}
\end{figure}

\begin{figure}[th]
	\centering
	\includegraphics[width=0.8\columnwidth, keepaspectratio]{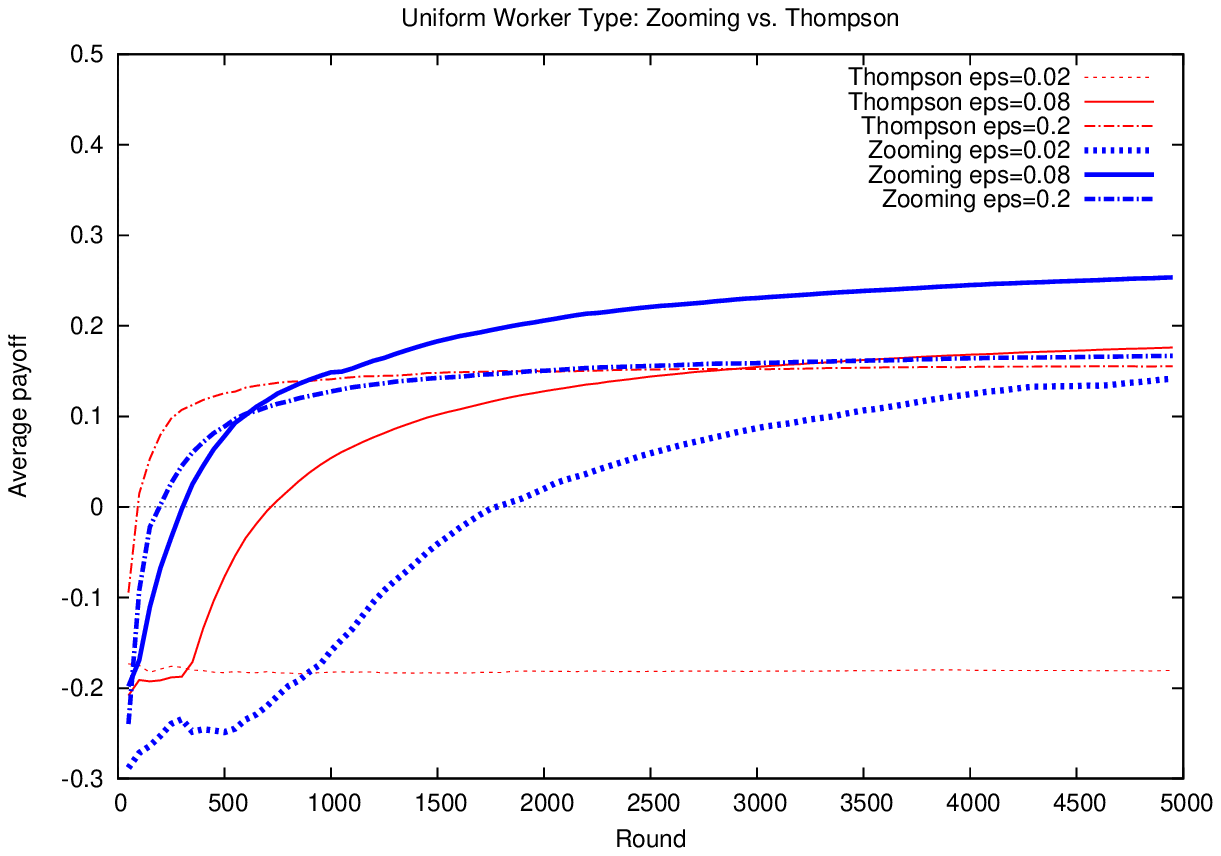}
	\caption{Uniform Worker, Zooming vs. Thompson}
	\label{fig:uniform-zooming-thompson}
\end{figure}

\begin{figure}[th]
	\centering
	\includegraphics[width=0.8\columnwidth, keepaspectratio]{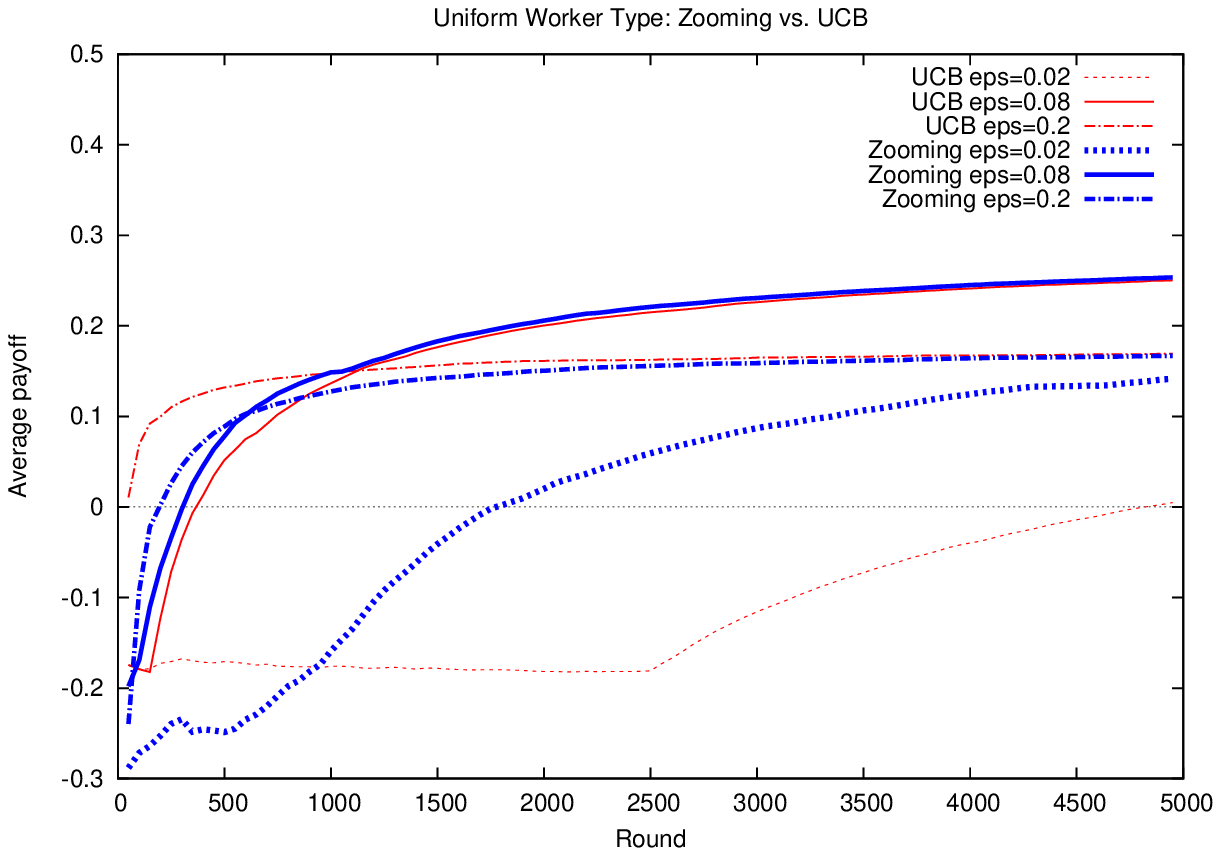}
	\caption{Uniform Worker, Zooming vs. UCB}
	\label{fig:uniform-zooming-ucb}
\end{figure}
}}

\section{Application to \taskPricing}
\label{sec:pricing}

We discuss \emph{\taskPricing}, which can be seen as  the special case of \problem in which there is exactly one non-null outcome. We identify an important family of problem instances for which \zooming out-performs \NonAdaptive.

\xhdr{Some background.}
The \taskPricing problem, in its most basic version, is defined as follows. There is one \emph{principal} (buyer) who sequentially interacts with multiple \emph{agents} (sellers). In each round $t$, an agent arrives, with one item for sale. The principal offers price $p_t$ for this item, and the agent agrees to sell if and only if $p_t\geq c_t$, where $c_t\in[0,1]$ is the agent's private \emph{cost} for this item. The principal derives value $v$ for each item bought; his utility is the value from bought items minus the payment. The time horizon $T$ (the number of rounds) is known.  Each private cost $c_t$ is an independent sample from some fixed distribution, called the \emph{supply distribution}. We are interested in the \emph{prior-independent} version, where the supply distribution is not known to the principal. The algorithm's goal is to choose the offered prices $p_t$ so as to maximize the expected utility of the principal.

\TaskPricing can be seen as the special case of \problem in which there is exactly one non-null outcome (which corresponds to a sale). Indeed, in this special case there is exactly one non-null effort level $e$ without loss of generality (because any non-null effort levels deterministically lead to the non-null outcome).

One crucial simplification compared to the full generality of \problem is that the discretization error can now be easily bounded from above:
\footnote{Recall that $\Mesh(\minSize)$ denotes the set of all prices in $[0,1]$ that are integer multiples of a given $\minSize>0$; call this set the \emph{additive $\minSize$-mesh}.}
    $$\OPT(X) - \OPT(\Xcand(\minSize)) \leq \minSize \quad \text{for each $\minSize>0$}.$$

Worst-case regret bounds are implicit in prior work on dynamic inventory-pricing~\citep{Bobby-focs03}.%
\footnote{The algorithmic result for \taskPricing is an easy modification of the analysis in~\citet{Bobby-focs03} for dynamic inventory-pricing. The lower bound in in~\citet{Bobby-focs03} can also be ``translated" from dynamic inventory-pricing to \taskPricing without introducing any new ideas. We omit the details from this version.}
Let $\NonAdaptive(\minSize)$ denote algorithm $\NonAdaptive$ with
    $\Xcand = \Mesh(\minSize)$.
Then, by the analysis in~\citet{Bobby-focs03}, $\NonAdaptive(\minSize)$
achieves regret
    $ R(T) = \tildeO(\minSize T + \minSize^{-2})$.
This is optimized to
    $R(T) = \tildeO(T^{2/3})$
if and only if
    $\minSize = \tildeO(T^{-1/3})$.
Moreover, there is a matching lower bound:
    $R(T) =\Omega(T^{2/3})$
for any algorithm.

Further, it is a folklore result that $\NonAdaptive(\minSize)$ achieves regret $R(T) =\tildeO(T^{2/3})$ if and only if
    $\minSize = \tilde{\Theta}(T^{-1/3})$.
(We sketch a lower-bounding example in the proof of Lemma~\ref{lm:procurement-NonAdaptive-generalLB}\ignore{Section~\ref{subsec:taskPricing-LB}}, to make the paper more self-contained.)

\xhdr{Preliminaries.}
Each contract is summarized by a single number: the offered price $p$ for the non-null outcome. Let $F(p)$ be the probability of a worker accepting a task at price $p$, and let
    $U(p) = F(p)\,(v-p)$
be the corresponding expected utility of the algorithm.

Note that all contracts are trivially monotone and any optimal contract is bounded without loss of generality. It follows that
    $\OPT(X) = \sup_{p\geq 0} U(p)$,
the optimal expected utility over all possible prices.

A cell $C$ is just a price interval $C = [p,p'] \subset [0,1]$, and its virtual width is
\begin{align*}
\vw(C) =  \left( v\, F(p') - p\, F(p)  \right) - \left( v\, F(p) - p'\, F(p') \right).
\end{align*}

\xhdr{Our results: the general case.}
We will be using \zooming with $\Xcand =X$.

First, let us prove that this is a reasonable choice in the worst case: namely, that we achieve the optimal $\tildeO(T^{2/3})$ regret.

\begin{lemma}\label{lm:taskPricing-fixexT}
Consider the \taskPricing problem. \zooming with $\Xcand =X$ achieves regret $O(T^{2/3} \log T)$.
\end{lemma}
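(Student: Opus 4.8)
The plan is to instantiate Theorem~\ref{thm:regret} with $\Xcand=X$ and then control the covering numbers that appear in \refeq{eq:thm:regret}. Since $\Xcand=X$, every feasible cell is a dyadic price interval $C=[k2^{-j},(k+1)2^{-j}]$ and every such cell is composite, so the only quantity to bound is $N_{\eps\SlackC}(X_\eps\,|\,X)$ --- the number of feasible cells of virtual width at least $\eps\SlackC$ that overlap the near-optimal set $X_\eps=\{p:\Delta(p)\le\eps\}$. Once we establish $N_{\eps\SlackC}(X_\eps\,|\,X)=\tildeO(1/\eps)$, plugging into \refeq{eq:thm:regret} gives $R(T|X)\le\delta T+\tildeO(\log T)\sum_{\eps=2^{-j}\ge\delta}\eps^{-2}=\delta T+\tildeO(\log T/\delta^{2})$, and optimizing with $\delta=T^{-1/3}$ yields $R(T|X)=O(T^{2/3}\log T)$.

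The combinatorial heart is the bound on $N_{\eps\SlackC}(X_\eps\,|\,X)$, and this is the step I expect to be the main obstacle. I would first record the virtual-width formula for a price interval $C=[p,p']$ with $d=p'-p$: from \refeq{eq:virt-width} together with $V(q)=vF(q)$ and $P(q)=qF(q)$ we get $\vw(C)=v\,(F(p')-F(p))+p'F(p')-pF(p)=(v+p')(F(p')-F(p))+d\,F(p)\le 2(F(p')-F(p))+d$. The two structural facts to exploit are that $F$ is nondecreasing with total increase $F(1)-F(0)\le1$ (a total-variation budget), and that $U(p)=F(p)(v-p)$, so a large jump of $F$ inside $C$ forces a correspondingly large $\width(C)$ and hence a large $\Delta(C)$. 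Counting scale by scale: at a scale $j$ with $2^{-j}\ge\eps\SlackC/2$ there are only $2^{j}\le 4/(\eps\SlackC)$ cells in all; at a scale $j$ with $2^{-j}<\eps\SlackC/2$ a cell of virtual width $\ge\eps\SlackC$ must satisfy $F(p')-F(p)\ge\eps\SlackC/4$, and at most $4/(\eps\SlackC)$ disjoint intervals per scale can do this. The genuine difficulty is that summing $O(1/\eps)$ cells over all scales naively diverges, because a cell straddling a jump of $F$ keeps virtual width $\Omega(1)$ at every scale; to handle this I would pass to $\eps$-minimal cells via \refeq{eq:Nmin}, so that nested chains of ``straddling'' cells collapse to a single representative, and then charge each $\eps$-minimal cell of virtual width $\ge\eps\SlackC$ either to one of the $O(1/\eps)$ jumps of $F$ of size $\Omega(\eps)$ or to one of the $O(1/\eps)$ per-scale ``thin'' cells, giving $\Nmin_{\eps\SlackC}(X_\eps)=O(1/\eps)$. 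The $\lceil\log(1/\minSize)\rceil$ factor from \refeq{eq:Nmin} is absorbed into $\tildeO(\cdot)$, using that within $T$ rounds \zooming only ever activates cells down to scale $O(\log T)$ in regions where $F$ has no $\Omega(1)$ jump, while near such a jump every scale carries virtual width $\Theta(1)$ and is handled by the same jump-point charging.

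Finally I would assemble the pieces: substitute $N_{\eps\SlackC}(X_\eps\,|\,X)=\tildeO(1/\eps)$ into \refeq{eq:thm:regret}, sum the resulting series over $\eps=2^{-j}\ge\delta$ to get $\delta T+\tildeO(\log T/\delta^{2})$, and set $\delta=T^{-1/3}$. Two free sanity checks are that the bound matches the lower bound $\Omega(T^{2/3})$ for \taskPricing recalled above, and that it coincides, up to logarithmic factors, with what $\NonAdaptive(\minSize)$ achieves only after tuning $\minSize=\tilde\Theta(T^{-1/3})$, whereas \zooming with $\Xcand=X$ needs no such tuning. The step I am least sure about and would invest the most effort in is making the per-scale count scale-independent: unlike in the Lipschitz-MAB analyses $F$ may be discontinuous, and the honest fix is the $\eps$-minimal reformulation combined with the observation that the relevant discontinuities of $F$ number only $O(1/\eps)$.
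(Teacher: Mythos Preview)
Your overall plan matches the paper's: bound $N_{\eps\SlackC}(X_\eps)\le O(1/\eps)$, conclude that the width dimension is at most $1$, and read off $R(T)=O(T^{2/3}\log T)$ from Corollary~\ref{cor:regret-dim}. Your virtual-width formula and the dichotomy ``either the cell is long or $F$ increases a lot across it'' are exactly the paper's key observation. The paper packages this more cleanly than your scale-by-scale count: it simply notes that if $\vw(C)\ge\eps$ then either $p'-p\ge\eps/4$ (``red'') or $F(p')-F(p)\ge\eps/4$ (``blue''), so \emph{any collection of mutually disjoint} cells of virtual width $\ge\eps$ contains at most $O(1/\eps)$ of each color; since active cells are disjoint, this yields the bound. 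Your per-scale accounting reaches the same dichotomy but with more bookkeeping.

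Where you diverge from the paper is in worrying about discontinuous $F$ (nested cells straddling a jump all have virtual width $\Omega(1)$, so $N_\eps$ as literally defined can be infinite). That worry is legitimate --- the paper's sketch silently passes from ``disjoint'' to the feasible-cell count $N_\eps(\cdot)$. However, your proposed fix does not work. Invoking \refeq{eq:Nmin} is vacuous here: with $\Xcand=X$ the smallest feasible cell has size $\minSize=0$, so the factor $\lceil\log(1/\minSize)\rceil$ is unbounded. And your fallback claim that \zooming only activates cells down to scale $O(\log T)$ is false: near a jump of $F$ of size $\Omega(1)$, each cell straddling the jump has $\vw=\Omega(1)$, so the zooming rule triggers after only $O(\log T)$ plays, and the algorithm can descend $\Theta(T/\log T)$ levels within $T$ rounds. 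So neither the ``$\eps$-minimal'' reformulation nor the depth bound rescues the argument.

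In short: your route is essentially the paper's, the paper's red/blue disjointness argument is the tidier way to execute it, and the discontinuity concern you raise is a genuine gap in the proof sketch that your proposal does not actually close.
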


\begin{proof}[Proof Sketch]
Fix $\eps>0$. The key observation is that if $\vw(C)\geq \eps$ then either $p'-p\geq \tfrac{\eps}{4}$, or $F(p')-F(p) \geq \tfrac{\eps}{4}$. Call $C$ a \emph{red} cell if the former happens, and \emph{blue} cell otherwise. Therefore in any collection of mutually disjoint cells of virtual width $\geq \eps$ there can be at most $O(\freps)$ red cells and at most $O(\freps)$ blue cells, hence at most $O(\freps)$ cells total. It follows that there can be at most $O(\freps)$ active cells of virtual width $\geq \eps$.

So, in the notation of Theorem~\ref{thm:regret} we have $N_{\eps}(\cdot)\leq O(\freps)$. It follows that the width dimension is at most $1$, which in turn implies the desired regret bound.
\end{proof}

\OMIT {
Further, using essentially the same argument in conjunction with the correspondingly modified version of Theorem~\ref{thm:regret}, we can derive a regret bound $R(t)$ that that holds for any given time interval $[t_0,t_0+t]\subset [1,T]$. Such regret bounds will be called \emph{uniform}.

\begin{lemma}\label{lm:taskPricing-fixexT}
Consider the \taskPricing problem with a given time horizon $T$. \zooming with $\Xcand =X$ achieves regret
    $R(t) = O(t^{2/3} \log T)$
for any given time interval $[t_0,t_0+t]\subset [1,T]$.
\end{lemma}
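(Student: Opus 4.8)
The plan is to combine two ingredients. The first is the structural bound extracted in the proof of the fixed-horizon version of Lemma~\ref{lm:taskPricing-fixexT}: for \taskPricing with $\Xcand=X$ a cell is a price interval $[p,p']$, and $\vw([p,p'])\geq\eps$ forces either $p'-p=\Omega(\eps)$ (a ``red'' cell) or $F(p')-F(p)=\Omega(\eps)$ (a ``blue'' cell); since pairwise-disjoint red cells tile a subset of $[0,1]$ and pairwise-disjoint blue cells each push $F$ up by $\Omega(\eps)$, one gets $N_{\eps\,\SlackC}(X_\eps|\Xcand)=O(1/\eps)$, i.e.\ width dimension at most $1$. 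The second ingredient is a version of Theorem~\ref{thm:regret} in which the quantity being bounded is the regret accumulated over an arbitrary window $[t_0,t_0+t]\subseteq[1,T]$ rather than over $[1,T]$; granting this, the lemma follows by exactly the $\delta$-optimization behind Corollary~\ref{cor:regret-dim} with $d=1$.

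The work, then, is to prove the interval version of Theorem~\ref{thm:regret}: for every $\delta\in(0,1)$, the regret over $[t_0,t_0+t]$ is at most
\[
\delta\,t \;+\; O(\log T)\sum_{\eps=2^{-j}\geq\delta:\;j\in\N}\frac{N_{\eps\,\SlackC}(X_\eps|\Xcand)}{\eps}.
\]
The crucial point is that \zooming is run with the \emph{fixed} horizon $T$ throughout --- the confidence radius $\rad_t(\cdot)$ uses $\log T$ --- so the clean-execution event of Lemma~\ref{lem:clean_prob} is a single global event, holding with probability $\geq 1-1/T$ over all rounds $1,\dots,T$; restricting attention to a sub-window changes nothing probabilistically, and cells that are already active and already sampled at round $t_0$ only help. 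All the deterministic, per-cell consequences of a clean execution used in Section~\ref{sec:main-proof} (Claims~\ref{cl:index-LB}--\ref{cl:badness2}, in particular $\Delta(C)\leq O(\rad_t(C))$ for active $C$ and ``each cell is chosen $\leq O(\log T/\Delta(C)^2)$ times'') make no reference to a round range and carry over verbatim. The only change is in the final accounting of Lemma~\ref{lm:clean-execution}: writing $n^{[t_0,t_0+t]}(C)$ for the number of window-rounds in which $C$ is chosen, we have $\sum_C n^{[t_0,t_0+t]}(C)=t$ and $n^{[t_0,t_0+t]}(C)\leq n(C)$, so cells with $\Delta(C)<\delta$ contribute at most $\delta\,t$, cells with $\Delta(C)\geq\delta$ are grouped into the classes $\mG_\eps$ exactly as before with $n^{[t_0,t_0+t]}(C)\,\Delta(C)\leq n(C)\,\Delta(C)=O(\log T/\eps)$ on $\mG_\eps$, and the counting bound $|\mG_\eps|\leq(2^m+1)\,|\mF_{\Omega(\eps)}(X_{2\eps})|$ is untouched. (The failure of the clean event adds only $O(1)$ to expected regret.)

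Plugging $N_{\eps\,\SlackC}(X_\eps|\Xcand)=O(1/\eps)$ into the displayed bound turns the $\eps$-sum into a geometric series $\sum_{\eps=2^{-j}\geq\delta}O(1/\eps^2)=O(1/\delta^2)$, dominated by its smallest term, so the window-regret is $O(\delta t + \log T/\delta^2)$; taking $\delta=\Theta\big((\log T/t)^{1/3}\big)$ --- and using the trivial bound $R(t)\leq t$ in the regime $t=O(\log T)$ where this $\delta$ would exceed $1$ --- yields $R(t)=O(t^{2/3}\log T)$ uniformly over all windows $[t_0,t_0+t]\subseteq[1,T]$. I expect the main obstacle to be the middle step: one has to reread the proof of Theorem~\ref{thm:regret} and confirm that nothing in it secretly uses that the window begins at round $1$ --- concretely, that the clean-execution event is genuinely the same global event, that the per-cell bounds survive restriction to a sub-window, and that the $\mG_\eps$/leaf--parent counting is about the set of cells \emph{ever} activated and is therefore window-independent. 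Everything after that bookkeeping is the routine reuse of the argument of Corollary~\ref{cor:regret-dim}.
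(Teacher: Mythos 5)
Your proof is correct and follows the approach the paper gestures at. The paper's only written proof of this lemma is the red/blue cell argument for the fixed-horizon version (giving width dimension $\le 1$), and the interval version is stated with a one-line remark that it follows ``using essentially the same argument in conjunction with the correspondingly modified version of Theorem~\ref{thm:regret}.'' You supply exactly that modification, and your verification of it is right: the clean-execution event in Lemma~\ref{lem:clean_prob} is indeed a single global event over all $T$ rounds (because the confidence radius and all high-probability statements use the fixed horizon $T$), Claims~\ref{cl:index-LB}--\ref{cl:badness2} are per-cell/per-round consequences that make no reference to where the window starts, and in Lemma~\ref{lm:clean-execution} the accounting only needs $\sum_C n^{[t_0,t_0+t]}(C)=t$ together with $n^{[t_0,t_0+t]}(C)\le n(C)=O(\log T/\Delta(C)^2)$ and the window-independent count $|\mG_\eps|\le(2^m+1)|\mF_{\Omega(\eps)}(X_{2\eps})|$. (Your $\delta$-optimization actually gives the slightly sharper $O\bigl(t^{2/3}(\log T)^{1/3}\bigr)$, which is fine since the stated bound $O(t^{2/3}\log T)$ only needs to hold up to the looser form the paper uses in Corollary~\ref{cor:regret-dim}.)
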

} 

\xhdr{Our results: ``nice" problem instances.}
We focus on problem instances with piecewise-uniform costs and bounded density. Formally, we say that an instance of \taskPricing has  \emph{$k$-piecewise-uniform costs} if the interval [0,1] is partitioned into $k\in \N$ sub-intervals such that the supply distribution is uniform on each sub-interval. A problem instance has \emph{$\lambda$-bounded density}, $\lambda\geq 1$ if the supply distribution has a probability density function almost everywhere, and the density is between $\tfrac{1}{\lambda}$ and  $\lambda$.  Using the full power of Theorem~\ref{thm:regret}, we obtain the following regret bound.

\begin{theorem}\label{thm:procurement}
Consider the \taskPricing problem with $k$-piecewise-uniform costs and $\lambda$-bounded density, for some absolute constants $k\in \N$ and $\lambda>1$.  \zooming with $\Xcand=X$ achieves regret
    $R(T) = \tildeO(T^{3/5})$.
\end{theorem}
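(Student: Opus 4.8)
The plan is to invoke Theorem~\ref{thm:regret} with $\Xcand = X$ and to show that the ``width dimension'' of this instance is at most $\tfrac12$ up to logarithmic factors, in close analogy with the proof of Lemma~\ref{lm:hlexample-nice}. Since $\Xcand = X$ there is no discretization error, so $\OPT(\Xcand) = \OPT(X)$ and $R(T) = R(T|X)$; it therefore suffices to control the right-hand side of \refeq{eq:thm:regret}, i.e.\ to bound $N_{\eps\SlackC}(X_\eps)$ for every dyadic $\eps$.

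The first step is to analyze $U(p) = F(p)(v-p)$. Under $k$-piecewise-uniform costs, $F$ is piecewise affine with at most $k$ linear pieces, and $\lambda$-bounded density means that on each piece $F$ has slope $s\in[\tfrac1\lambda,\lambda]$; hence on each piece $U$ is a downward parabola with $U'' = -2s \le -\tfrac2\lambda < 0$. Consequently, for each of the $k$ pieces, the set $X_\eps = \{p:\OPT(X)-U(p)\le\eps\}$ intersected with that piece is either empty (if the maximum of $U$ over the piece is more than $\eps$ below $\OPT(X)$) or an interval of length $O(\sqrt{\lambda\eps})$; this follows from strong concavity on the piece, treating separately the cases in which the vertex of the parabola lies inside or outside the piece. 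Thus $X_\eps$ is contained in a union of at most $k$ intervals of total length $L = O(\sqrt\eps)$, with constants depending only on $k$, $\lambda$, and $v$.

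The second step is to count feasible cells. Here a cell is an interval $C=[p,p']\subseteq[0,1]$, and since every cell contains candidate contracts (as $\Xcand=X$), the feasible cells are exactly the dyadic subintervals of $[0,1]$. Writing $\vw(C) = (v+p')\,(F(p')-F(p)) + F(p)\,(p'-p)$ and using $v\le1$, $F\le1$, and that $F$ is $\lambda$-Lipschitz, one obtains that $\vw(C)\ge\eps$ forces $p'-p = \Omega(\eps)$; this is the red/blue dichotomy already used in the proof of Lemma~\ref{lm:taskPricing-fixexT}. Hence only $O(\log(1/\eps))$ dyadic levels are relevant, and at a level with cell length $\ell$ the set $X_\eps$ (total length $L=O(\sqrt\eps)$, at most $k$ components) meets at most $O(L/\ell) + 2k$ cells. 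Summing over dyadic $\ell\in[\Omega(\eps),1]$, the geometric sum is dominated by the smallest $\ell\sim\eps$, giving $N_{\eps\SlackC}(X_\eps) = O(L/\eps) + O(\log(1/\eps)) = O(\eps^{-1/2}) + O(\log(1/\eps)) = \tildeO(\eps^{-1/2})$. Plugging this into \refeq{eq:thm:regret}, the series $\sum_{\eps=2^{-j}\ge\delta} \eps^{-3/2}$ is again geometric and dominated by $\eps=\delta$, so $R(T|X) \le \delta T + \tildeO(\delta^{-3/2})$; choosing $\delta = T^{-2/5}$ balances the two terms and yields $R(T) = \tildeO(T^{3/5})$ (equivalently, this shows $\WidthDim_\alpha\le\tfrac12$ and the bound follows from Corollary~\ref{cor:regret-dim}).

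The main obstacle is the geometric part, Steps 1 and 2: pinning down the shape of $X_\eps$ near the $k-1$ breakpoints of $F$, where $U$ is only piecewise-smooth (continuous, with a kink), and checking that the per-piece length bound $O(\sqrt\eps)$ survives when the optimal price, or a near-optimal plateau, sits at or near a breakpoint; and, relatedly, ensuring that all hidden constants depend only on the absolute constants $k,\lambda$ (and $v$), never on $T$ or $\eps$. The cell-counting argument itself is routine once the red/blue dichotomy from Lemma~\ref{lm:taskPricing-fixexT} is in place.
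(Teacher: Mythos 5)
Your proposal matches the paper's proof almost step for step: per-piece strong concavity gives $X_\eps$ as a union of at most $k$ intervals of length $O(\sqrt{\lambda\eps})$, Lipschitzness of $F$ gives that any cell of virtual width $\ge\eps\SlackC$ has length $\Omega(\eps/\lambda)$, the resulting cell count is $\tildeO(\eps^{-1/2})$, and plugging into \refeq{eq:thm:regret} with $\delta = T^{-2/5}$ gives $\tildeO(T^{3/5})$. Your counting of dyadic cells is in fact slightly tighter than the paper's (you note the geometric sum over levels is dominated by the finest level, dropping a $\log(1/\eps)$ factor), but this does not change the bound; the caveats you flag about breakpoints and hidden constants are exactly the same ones left implicit in the paper's sketch.
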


\begin{proof}[Proof Sketch]
Since the supply distribution has density at most $\lambda$, it follows that $F(\cdot)$ is a Lipschitz-continuous function with Lipschitz constant $\lambda$.  It follows that each cell of virtual width at least $\eps$ has diameter at least $\Omega(\eps/\lambda)$, for any $\eps>0$. (Note that each ``cell" is now simply a sub-interval $[p,q]\subset [0,1]$, so its diameter is simply $q-p$.)

Second, we claim that $X_\eps$ is contained in a union of $k$ intervals of diameter $O(\sqrt{\eps \lambda})$. To see this, consider the partition of $[0,1]$ into $k$ subintervals such that the supply distribution has a uniform density on each subinterval. Let $[p_j,q_j]$ be the $j$-th subinterval. Let $p^*_j$ be the local optimum of $U(\cdot)$ on this subinterval, and let
    $X_{j,\eps} = \{ x\in [p_j,q_j]:\; U(p^*_j)-U(x) \leq \eps \}$.
Then
    $X_\eps \subset \cup_j X_{j,\eps}$.
We can show that
    $X_{j,\eps} \subset [p^*_j - \delta,\, p^*_j + \delta] $
for some $\delta = O(\sqrt{\eps \lambda})$.

Recall that $N_{\eps\beta_0}(X_\eps)$ is the number of feasible cells of virtual width at least $\eps \beta_0$ which overlap with $X_\eps$. It follows that $N_{\eps\beta_0}(X_\eps)$ is at most $k$ times the maximal number of feasible cells of diameter at least $\Omega(\eps/\lambda)$ that overlap with an interval of diameter $O(\sqrt{\eps \lambda})$. Therefore:
    $N_{\eps\beta_0}(X_\eps)
        = O(k \lambda^{3/2} \eps^{-1/2}\, \log \frac{1}{\eps})$.
Moreover, we have a less sophisticated upper bound on $N_{\eps\beta_0}(X_\eps)$: it is at most the number of feasible cell of diameter at least $\Omega(\eps/\lambda)$. So
    $N_{\eps\beta_0}(X_\eps)
        = O(\lambda/\eps) (\log \frac{1}{\eps})$.
The theorem follows by plugging both upper bounds on  $N_{\eps\beta_0}(X_\eps)$ into  \refeq{eq:thm:regret}.
\end{proof}

\xhdr{Comparison with \NonAdaptive.}
Consider
    $\NonAdaptive(\minSize_0)$, where $\minSize_0 = \tilde{\Theta}(T^{-1/3})$
is the granularity required for the optimal worst-case performance. Call a problem instance \emph{nice} if it has $2$-piecewise-uniform costs and $\lambda$-bounded density, for some sufficiently large absolute constant $\lambda$; say $\lambda=4$ for concreteness. We claim that \zooming outperforms $\NonAdaptive(\minSize_0)$ on the ``nice" problem instances.
\begin{lemma}\label{lm:procurement-NonAdaptive}
$\NonAdaptive(\minSize_0)$ achieves regret $R(T)=\Omega(T^{2/3})$ in the worst case over all ``nice" problem instances.
\end{lemma}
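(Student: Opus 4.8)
The plan is to exhibit a single ``nice'' instance --- depending on $T$ only through $\minSize_0$ --- on which the \emph{discretization error} alone forces $R(T|X)=\Omega(T^{2/3})$ for $\NonAdaptive(\minSize_0)$, irrespective of which off-the-shelf MAB algorithm is plugged in. The starting point is the decomposition $R(T|X)=R(T\mid\Mesh(\minSize_0))+T\bigl(\OPT(X)-\OPT(\Mesh(\minSize_0))\bigr)$, valid because $\NonAdaptive(\minSize_0)$ only ever plays contracts in $\Mesh(\minSize_0)$. Since the first summand is nonnegative, it suffices to build a nice instance with $\OPT(X)-\OPT(\Mesh(\minSize_0))=\Omega(\minSize_0)$: then $R(T|X)\ge\Omega(\minSize_0 T)=\Omega(T^{2/3})$, using that the worst-case-optimal granularity satisfies $\minSize_0=\tilde\Theta(T^{-1/3})$, in particular $\minSize_0 T=\Omega(T^{2/3})$.

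\textbf{The instance.} I would take value $v=\tfrac12$ and a supply distribution for the cost $c_t$ with density $\rho_1=3$ on $[0,a]$ and density $\rho_2=(1-3a)/(1-a)$ on $[a,1]$, where the kink location $a=a(T)$ is chosen to be the midpoint of whichever cell of the additive $\minSize_0$-mesh lies closest to $\tfrac15$. Then $F(\cdot)$ is a legitimate CDF ($\rho_1 a+\rho_2(1-a)=1$), the density stays in $[\tfrac14,4]$ for all large $T$ (so the instance is $2$-piecewise-uniform with $4$-bounded density, i.e.\ nice), $a\to\tfrac15$, and $a$ is at distance $\minSize_0/2$ from every point of $\Mesh(\minSize_0)$. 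The point of a piecewise-uniform density with a kink is that $U(p)=F(p)(v-p)$ then has a \emph{corner} maximum, whereas strong concavity of $U$ on each uniform piece precludes flat maxima; one checks (a short explicit computation at $a=\tfrac15$, plus continuity for the nearby admissible $a$) that the two one-sided derivatives of $U$ at $a$ have opposite signs and are bounded away from $0$ by an absolute constant, so $U$ is strictly increasing on $[0,a]$, strictly decreasing on $[a,1]$, and $\OPT(X)=U(a)$.

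\textbf{Lower-bounding the gap at every mesh point.} For a mesh point $p$ within a fixed small constant neighborhood of $a$, the corner structure gives $U(a)-U(p)\ge c_0\,|p-a|-O(|p-a|^2)=\Omega(|p-a|)=\Omega(\minSize_0)$, using $|p-a|\ge\minSize_0/2$ and shrinking the neighborhood so the quadratic term is dominated; for $p$ outside that neighborhood, $U(a)-U(p)$ is bounded below by an absolute constant. Hence $\OPT(X)-\OPT(\Mesh(\minSize_0))=\Omega(\minSize_0)$, and feeding this into the decomposition above completes the proof.

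\textbf{Main obstacle and a remark.} The delicate step is the instance construction: meeting the density box constraint, the CDF normalization, and the ``corner maximum with one-sided slopes uniformly bounded away from $0$'' requirement simultaneously, while keeping the freedom to slide $a$ onto an arbitrary mesh-cell midpoint near $\tfrac15$ (so that the instance genuinely tracks $\minSize_0$); I expect this to reduce to verifying a handful of explicit inequalities at $a=\tfrac15$ and invoking continuity, with everything else routine. It is worth noting why a MAB-style lower bound (as used in the proof of Lemma~\ref{lm:hlexample-comparison}(b)) is \emph{not} the right tool here: on a nice instance $U$ is strongly concave on each of its two uniform pieces, so only relatively few mesh points lie within $O(\minSize_0)$ of optimal and the MAB learning cost of $\NonAdaptive$ is sub-$T^{2/3}$; the discretization error is the only mechanism that can produce the claimed $\Omega(T^{2/3})$ bound.
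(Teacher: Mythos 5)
Your proposal is correct and follows essentially the same route as the paper's own proof sketch: construct a nice two-piece instance in which $U(\cdot)$ has a corner maximum at a kink point deliberately placed far (order $\minSize_0$) from every mesh point, deduce discretization error $\Omega(\minSize_0)$, and conclude $R(T)\ge\Omega(\minSize_0 T)=\Omega(T^{2/3})$. The paper leaves the density tuning implicit ("we adjust the densities so that $U$ achieves its maximum at $p_0$"), whereas you supply explicit numerical choices and verify the one-sided derivatives, but the underlying argument and decomposition are the same.
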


\begin{proof}[Proof Sketch]
Recall that for $k=2$ the supply distribution has density $\lambda_1$ on interval $[0,p_0]$, and density $\lambda_2$ on interval $[p_0,1]$, for some numbers $\lambda_1, \lambda_2,p_0$. We pick $p_0$ so that it is sufficiently far from any point in $\Mesh(\minSize_0)$. Note that the function $U(\cdot)$ is a parabola on each of the two intervals. We adjust the densities so that $U(\cdot)$ achieves its maximum at $p_0$, and the maximum of either of the two parabolas is sufficiently far from $p_0$. Then the discretization error of $\Mesh(\minSize_0)$ is at least $\Omega(\minSize_0)$, which implies regret $\Omega(\minSize_0 T)$.
\end{proof}

\xhdr{Lower bound for \NonAdaptive.}
We provide a specific lower-bounding example for the worst-case performance of $\NonAdaptive(\minSize)$, for an arbitrary $\minSize>0$. Let $\mF$ be the family of all problem instances with $k$-piecewise-uniform costs and $\lambda$-bounded density, for all $k\in\N$ and $\lambda=4$.

\begin{lemma}\label{lm:procurement-NonAdaptive-generalLB}
Let $R_\minSize(T)$ be the maximal regret of $\NonAdaptive(\minSize)$ over all problem instances in $\mF$. Then
    $R_\minSize(T) = \Omega(\minSize T + \sqrt{T/\minSize}) \geq \Omega(T^{2/3})$.
\end{lemma}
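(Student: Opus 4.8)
The plan is to establish the two lower bounds $R_\minSize(T) = \Omega(\minSize T)$ and $R_\minSize(T) = \Omega(\sqrt{T/\minSize})$ separately, possibly witnessed by two different problem instances in $\mF$. Since $R_\minSize(T)$ is the \emph{maximum} regret over all instances in $\mF$, this immediately yields $R_\minSize(T) \geq \Omega(\minSize T + \sqrt{T/\minSize})$, because the maximum of two nonnegative quantities is within a factor of two of their sum. The bound $\Omega(T^{2/3})$ then follows for free: the function $\minSize \mapsto \minSize T + \sqrt{T/\minSize}$ is convex and, over $\minSize \in (0,1]$ with $T \geq 1$, is minimized at $\minSize = \Theta(T^{-1/3})$, where it equals $\Theta(T^{2/3})$; hence the bound holds for \emph{every} choice of $\minSize$.

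For the first bound I would reuse the ``kink'' construction from the proof sketch of Lemma~\ref{lm:procurement-NonAdaptive}, now for an arbitrary $\minSize\in(0,1]$ rather than $\minSize_0$. Take a problem instance with $k=2$ piecewise-uniform costs --- a high density on $[0,p_0]$ and a low density on $[p_0,1]$, both in $[1/\lambda,\lambda]$ with $\lambda=4$ --- chosen so that $U(\cdot)$ is strictly increasing on $[0,p_0]$ and strictly decreasing on $[p_0,1]$ with one-sided derivatives at $p_0$ bounded away from $0$; then $\OPT(X)=U(p_0)$. Moreover pick $p_0$ at distance at least $\minSize/3$ from $\Mesh(\minSize)$ (such a $p_0$ exists inside a fixed sub-interval of $(0,1)$ as long as $\minSize$ is below an absolute constant). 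Then $\OPT(X)-\OPT(\Mesh(\minSize))=\Omega(\minSize)$, so $\NonAdaptive(\minSize)$ --- in fact any algorithm restricted to prices in $\Mesh(\minSize)$ --- has regret $\Omega(\minSize T)$. For $\minSize$ above that absolute constant, a cruder instance with $O(1)$ mesh prices gives regret $\Omega(T) \geq \Omega(\minSize T)$.

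For the second bound I would embed a hard $N$-armed bandit instance with $N = \Theta(1/\minSize)$ into $\mF$. The central step is to choose $F$ so that, on some interval $[p_1,p_2]$ bounded away from both $0$ and $v$, the utility $U(p)=F(p)(v-p)$ takes a common value $A$ at \emph{every} mesh point in $[p_1,p_2]$. This can be done with $k=N+O(1)$ piecewise-uniform pieces: one verifies that the piecewise-constant density needed to pin $U$ at $A$ between two consecutive mesh points stays inside $[1/\lambda,\lambda]$ precisely because $[p_1,p_2]$ is bounded away from $v$ (density bounded above) and from $0$ (density bounded below); the density outside $[p_1,p_2]$ is then set so that $U\le A$ there and $F(0)=0$ (no atom). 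Next, perturb the density on the two pieces flanking one ``hidden'' mesh point to raise its $U$-value by $\Delta = \Theta(\sqrt{N/T})$ while leaving $F$ unchanged outside those pieces; for $T$ large relative to $1/\minSize$ this perturbation is tiny, so all densities remain admissible. Because $\NonAdaptive$ feeds its candidate contracts to the MAB subroutine after a uniformly random permutation, the subroutine faces a standard ``needle-in-a-haystack'' stochastic instance with $N$ near-tied arms, and by the standard minimax lower bound for stochastic MAB it incurs expected regret $\Omega(\sqrt{NT}) = \Omega(\sqrt{T/\minSize})$ with respect to $\OPT(\Mesh(\minSize))$, hence also with respect to $\OPT(X)\geq \OPT(\Mesh(\minSize))$. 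As in Lemma~\ref{lm:hlexample-comparison}(b), this argument is insensitive to the choice of MAB subroutine, not just \UCB.

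The main obstacle is the embedding in the second bound: we must produce a supply distribution making $\Theta(1/\minSize)$ of the mesh prices simultaneously near-optimal and near-tied, while that distribution is constrained to be $k$-piecewise-uniform with density in $[1/\lambda,\lambda]$ and free of atoms. Checking that ``flattening $U$ at the mesh points'' is compatible with these constraints --- and that a small bump at a single hidden mesh point preserves admissibility and yields a valid $\Omega(\sqrt{NT})$ instance --- is the delicate part. The discretization-error construction is essentially already contained in the proof of Lemma~\ref{lm:procurement-NonAdaptive}, and the minimax MAB lower bound is off-the-shelf, so those pieces are routine.
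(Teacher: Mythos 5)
Your proposal is correct, but it takes a genuinely different route from the paper's. The paper constructs a \emph{single} problem instance witnessing both lower bounds at once: it flattens $U$ at $\tfrac14$ over a large set $\mP_0$ of $\Omega(1/\minSize)$ points spaced $4\minSize$ apart, and then places a bump of height $\Omega(\minSize)$ at a point $p^*$ offset from the mesh by $\minSize/2$ --- so the same instance simultaneously has discretization error $\Omega(\minSize)$ and a near-tied set $X_\minSize$ of $\Omega(1/\minSize)$ mesh points, from which the paper reads off the bandit term $\Omega(\sqrt{T/\minSize})$. You instead split the lower bound into two separately witnessed instances --- one reusing the kink construction of Lemma~\ref{lm:procurement-NonAdaptive} for $\Omega(\minSize T)$, and one flattening $U$ at the mesh points and adding a bump of size $\Delta=\Theta(\sqrt{N/T})$ at a single mesh point for $\Omega(\sqrt{T/\minSize})$ --- and you then recover the sum via the observation that the worst-case regret is a max over instances, and $\max(a,b)\ge(a+b)/2$. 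Your decomposition is conceptually cleaner: the second instance with $\Delta=\Theta(\sqrt{N/T})$ is exactly the standard minimax-hard MAB family, so invoking the $\Omega(\sqrt{NT})$ lower bound is immediate, whereas the paper's instance fixes the gap at $\Theta(\minSize)$ and leaves more to be filled in by the reader when appealing to ``standard lower-bounding arguments.'' What you lose by splitting is a single, self-contained worst case; but that is a matter of taste, not correctness.

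One quantitative remark: your admissibility check for the second construction is stated too loosely. To raise $U$ at the hidden mesh point by $\Delta$ while keeping $F$ fixed at the two adjacent mesh points, the piecewise density on the two flanking intervals of width $\minSize$ must be perturbed by $\Theta(\Delta/\minSize)$, and keeping this inside $[1/\lambda,\lambda]$ requires $\Delta=O(\minSize)$, i.e.\ $\sqrt{1/(\minSize T)}=O(\minSize)$, i.e.\ $T=\Omega(1/\minSize^3)$ --- not merely $T=\Omega(1/\minSize)$ as you write. This does not damage your conclusion, because the $\sqrt{T/\minSize}$ term dominates $\minSize T$ exactly when $\minSize\le T^{-1/3}$, which is precisely the regime $T\ge 1/\minSize^3$ in which the construction is admissible; for $\minSize\ge T^{-1/3}$ your first instance already delivers $\Omega(\minSize T)\ge\Omega(T^{2/3})$, which dominates. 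You correctly flag this as the delicate part, but it is worth stating the exact threshold $T\gtrsim 1/\minSize^3$ rather than $T\gtrsim 1/\minSize$.
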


\begin{proof}[Proof Sketch]
For piecewise-uniform costs, we have $F(0)=0$ and $F(p)=1$. Assume that the principal derives value $v=1$ from each item. Then the expected utility from price $p$ is
    $U(p) = F(p) (1-p)$.

Fix $\minSize>0$. Use the following problem instance. Let
    $\mP_\delta = [\tfrac25,\tfrac35] \cap \{ 4j\minSize+\delta:\, j\in\N\}$.
Set $U(p)=\tfrac14$ for each $p\in \mP_0$. Further, pick some
    $p^*\in \mP_{\minSize/2}$
and set $U(p^*) = \tfrac14+\Omega(\minSize)$.
This defines $F(p)$ for $p\in \mP \cup \{0,1,p^*\}$. For the rest of the prices, define $F(\cdot)$ via linear interpolation. This completes the description of the problem instance.

We show that $X_\minSize$ consists of $N = \Omega(\tfrac{1}{\minSize})$ candidate contracts. Therefore, using standard lower-bounding arguments for MAB, we obtain
    $R(T|\Xcand) \geq \Omega(\sqrt{TN}) = \Omega(\sqrt{T/\minSize}) $.
Further, we show that the discretization error is at least $\Omega(\minSize)$, implying that
    $R(T)\geq R(T|\Xcand)+ \Omega(\minSize T)$.
\end{proof}
\section{Related work}
\label{sec:related-work}

This paper is related to three different areas: contract theory, market design for crowdsourcing, and online decision problems. Below we outline connections to each of these areas.


\xhdr{Contract theory.}
Our model can be viewed as an extension of the classic principal-agent model from contract theory~\citep{LM02}.  In the most basic version of the classic model, a single principal interacts with a single agent whose type (specified by a cost function and production function, as described in Section~\ref{sec:setting}) is generally assumed to be known.  The principal specifies a contract mapping outcomes to payments that the principal commits to make to the agent.  The agent then chooses an action (i.e., effort level) that stochastically results in an outcome in order to maximize his expected utility given the contract.  The principal observes the outcome, but cannot directly observe the agent's effort level, creating a \emph{moral hazard} problem.  The goal of the principal is to design a contract to maximize her own expected utility, which is the difference between the utility she receives from the outcome and the payment she makes.  This maximization can be written as a constrained optimization problem, and it can be shown that linear contracts are optimal.

The \emph{adverse selection} variation of the principal-agent problem relaxes the assumption that the agent's type is known.  Most existing literature on the principal-agent problem with adverse selection focuses on applying the revelation principle~\citep{LM02}.  In this setting, the principal offers a menu of contracts, and the contract chosen by the agent reveals the agent's type.  The problem of selecting a menu of contracts that maximizes the principal's expected utility can again be formulated as a constrained optimization.

\ignore{
The \emph{limited liability} variation of the principal-agent problem considers the problem under the additional restriction that payments must lie in a certain range, capturing settings in which payments must be non-negative or satisfy some other minimum wage~\citep{LM02,JKS08}. \jenn{Not sure what we want to say about this setting.  The \citet{JKS08} result is for risk averse agents and so doesn't directly apply.  They cite some papers that consider limited liability with risk neutral agents so maybe we should cite one of those instead? What are the important results?}
}


\ignore{
\citet{GM13} examine a setting with both moral hazard and adverse selection.  They consider only binary action spaces and outcome spaces and assume that a distribution over agent types is known.  They focus on characterizing the optimal solution and decrease the search space.
}

Our work differs from the classic setting in that we consider a principal interacting with multiple agents, and the principal may adjust her contract over time in an online manner.  Several other authors have considered extensions of the classic model to multiple agents.
\citet{LV02} show that with multiple agents it is optimal to set individual linear contracts for each agent rather than a single uniform contract for all agents, but offer a variety of descriptive explanations for why it is more common to see uniform contracts in practice.
\citet{BFN06} consider a setting in which one principal interacts with multiple agents, but observes only a single outcome which is a function of all agents' effort levels.
\citet{MND12} consider a variant in which the algorithm must decide both how to set a uniform contract for many agents and how to select a
subset of agents to hire.

Alternative online versions of the problem have been considered in the literature as well.  In dynamic principal agent problem~\citep{Sannikov08, William09, Sannikov12}, a single principal interacts with a single agent repeatedly over a period of time. The agent can choose to exert different effort at different time, and the outcome at time $t$ is a function of  all the efforts exerts by the agent before $t$. The principal cannot observe the agent's efforts but can observe the outcome. The goal of the principal is to design an optimal contract over time to maximize his payoff. Our work is different from this line of work since we consider the setting with multiple agents with different, unknown types. Our algorithm needs to learn the distribution of agent types and design an optimal contract accordingly.

\citet{CG06} studies the online principal agent problem with a similar setting to ours. However, they focus on empirically comparing different online algorithms, including bandit approaches with uniform discretization, gradient ascent, and Bayesian update approaches to the problem. Our goal is to provide an algorithm with nice theoretical guarantees.

\citet{BK13} studies the setting when the outcome is unverifiable. To address this issue, they propose to assign a bundle of tasks to each worker. To verify the outcome, each task in the bundle is chosen as a verifiable task with some non-trivial probability. A verifiable task can either be a gold standard task with known answer or a task which is assigned to multiple workers for verification. The payment for a task bundle is then conditional only on the outcome of verified tasks. In our setting, we assume the task outcome is verifiable. We can relax this assumption by adopting similar approaches.

%

\ignore{
\subsection{Multiple Principals and Multiple Agents}
\begin{itemize}
   \item ``The Principal-Agent Matching Market'', Kaniska Dam and David Perez-Castrillo, The B.E. Journal of Theoretical Economics, 2006 \cite{DP06}
\end{itemize}
}


\xhdr{Incentives in crowdsourcing systems.}
Researchers have recently begun to examine the design of incentive mechanisms to encourage high-quality work in crowdsourcing systems. \cite{JCP12} explore ways in which to award virtual points to users in online question-and-answer forums to improve the quality of answers. \cite{GH11,GH13} and \cite{GM11} study how to distribute user generated content (e.g., Youtube videos) to users to encourage the production of high-quality internet content by people who are motivated by attention. \cite{HZVv12} and \cite{Zv12} consider the design of two-sided reputation systems to encourage good behavior from both workers and requesters in crowdsourcing markets.  While we also consider crowdsourcing markets, our work differs in that it focuses on how to design contracts, perhaps the most natural incentive scheme, to incentivize workers to exert effort.

The problem closest to ours which has been studied in the context of crowdsourcing systems is the online task pricing problem in which a requester has an unlimited supply of tasks to be completed and a budget $B$ to spend on them~\citep{DynProcurement-ec12,SM13}. Workers with private costs arrive online, and the requester sets a single price for each arriving worker. The goal is to learn the optimal single fixed price over time. Our work can be viewed as a generalization of the task pricing problem, which is a special case of our setting with the number of non-null outcomes $m$ fixed at 1.

There has also been empirical work examining how workers' behavior varies based on the financial incentives offered in crowdsourcing markets. \cite{MW09} study how workers react to changes of performance-independent financial incentives. In their study, increasing financial incentives increases the number of tasks workers complete, but not the quality of their output. \cite{YCS13} provide a potential explanation for this phenomenon using the concept of ``anchoring effect'': a worker's cost for completing a task is influenced by the first price the worker sees for this task.
\cite{HC10} run experiments to estimate workers' reservation wage for completing tasks. They show that many workers respond rationally to offered contracts, whereas some of the workers appeared to have some ``target payment'' in mind.

Some recent research studies the effects of performance-based payments (PBPs). \cite{Harris11} runs MTurk experiments on resume screening, where workers can get a bonus if they perform well. He concludes that the quality of work is better with PBPs than with uniform payments. \cite{YCS13} show that varying the magnitude of the bonus does not have much effect in certain settings. \citet{PBPs-www15} perform a more comprehensive set of experiments aimed at determining whether, when, and why PBPs increase the quality of submitted work. Their results suggest that PBPs can increase quality on tasks for which increased time or effort leads to higher quality work. Their results also suggest that workers may interpret a contract as performance-based even if it is not stated as such (since requesters always have the option to reject work).  Based on this evidence, they propose a new model of worker behavior that extends the principal-agent model to explicitly reflect workers' subjective beliefs about their likelihood of being paid.

\ignore{
\cj{TODO: add a paragraph of performance-contingent pricing as described in Jenn's email. Also, add the work of 99design in HCOMP.}
\cite{Harris11, WIP13, Mao+13, Araujo13}
}

Overall, previous empirical work demonstrates that workers in crowdsourcing markets do respond to the change of financial incentives, but that their behavior does not always follow the traditional rational-worker model --- similar to people in any real-world market. In our work, we start our analysis with the rational-worker assumption ubiquitous in economic theory, but demonstrate that our results can still hold without these assumptions as long as the collective worker behavior satisfies some natural properties (namely, as long as Lemma~\ref{lm:virtual_width} holds). We note that our results hold under the generalized worker model proposed by \citet{PBPs-www15}, which is consistent with their experimental evidence as discussed above.


\xhdr{Sequential decision problems.}
In sequential decision problems, an algorithm makes sequential decisions over time.Two directions that are relevant to this paper are multi-armed bandits (MAB) and dynamic pricing.

MAB have been studied since 1933~\citep{Thompson-1933} in Operations Research, Economics, and several branches of Computer Science including machine learning, theoretical computer science, AI, and algorithmic economics. A survey of prior work on MAB is beyond the scope of this paper; the reader is encouraged to refer to \citet{CesaBL-book} or \citet{Bubeck-survey12} for background on prior-independent MAB, and to \citet{Gittins-book11} for background on Bayesian MAB. Below we briefly discuss the lines of work on MAB that are directly relevant to our paper.

Our setting can be modeled as prior-independent MAB with stochastic rewards: the reward of a given arm $i$ is an i.i.d. sample of some time-invariant distribution, and neither this distribution nor a Bayesian prior on it are known to the algorithm. The basic formulation (with a small number of arms) is well-understood \citep{Lai-Robbins-85,bandits-ucb1,Bubeck-survey12}. To handle problems with a large or infinite number of arms, one typically needs side information on similarity between arms. A typical way to model this side information, called \emph{Lipschitz MAB} \cite{LipschitzMAB-stoc08}, is that an algorithm is given a distance function on the arms, and the expected rewards are assumed to satisfy Lipschitz-continuity (or a relaxation thereof) with respect this distance function, e.g.
~\citep{Agrawal-bandits-95,Bobby-nips04,AuerOS/07,LipschitzMAB-stoc08,xbandits-nips08,contextualMAB-colt11}. Most related to this paper is the idea of adaptive discretization which is often used in this setting \citep{LipschitzMAB-stoc08,xbandits-nips08,contextualMAB-colt11}, and particularly the \emph{zooming algorithm} \citep{LipschitzMAB-stoc08,contextualMAB-colt11}. In particular, the general template of our algorithm is similar to the one in the zooming algorithm (but our ``selection rule" and ``zooming rule" are very different, reflecting the lack of a priori known similarity information).

In some settings (including ours), the numerical similarity information required for Lipschitz MAB is not immediately available. For example, in applications to web search and advertising it is natural to assume that an algorithm can only observe a tree-shaped taxonomy on arms~\citep{Kocsis-ecml06,Munos-uai07,yahoo-bandits07,ImplicitMAB-nips11,AdamBull-13}. In particular,~\citet{ImplicitMAB-nips11} and \citet{AdamBull-13} explicitly reconstruct (the relevant parts of) the metric space defined by the taxonomy. In a different direction, \citet{Bubeck-alt11} study a version of Lipschitz MAB where the Lipschitz constant is not known, and essentially recover the performance of \NonAdaptive for this setting.


\vspace{2mm}
\emph{Dynamic pricing} (a.k.a. online posted-price auctions) refers to settings in which a principal interacts with agents that arrive over time and offers each agent a price for a transaction, such as selling or buying an item. The version in which the principal \emph{sells} items has been extensively studied in Operations Research, typically in a Bayesian setting; see \citet{Boer-survey15} for a through literature review. The study of prior-independent, non-parameterized formulations has been initiated in \citet{Blum03} and \citet{Bobby-focs03} and continued by several others \citep{BZ09,DynPricing-ec12,BesbesZeevi-or12,Wang-OR14,BwK-focs13,cBwK-colt14}. Further, \citet{DynProcurement-ec12} and \citet{Krause-www13} studied the version in which the principal \emph{buys} items, or equivalently commissions tasks; we call this version \emph{\taskPricing}. Modulo budget constraints, this is essentially the special case of our setting where in each round a worker is offered the chance to perform a task at a specified price, and can either accept or reject this offer. In particular, the worker's strategic choice is directly observable. More general settings have been studied in \citep{BwK-focs13,cBwK-colt14,AgrawalDevanur-ec14,AgrawalDevanurLi-15}.%
\footnote{\citet{cBwK-colt14} and \citet{AgrawalDevanur-ec14} is concurrent and independent work with respect to the conference publication of this paper, and \citet{AgrawalDevanurLi-15} is subsequent work.} However, all this work (after the initial papers \citep{Blum03,Bobby-focs03}) has focused on models with constraints on the principal's supply or budgets, and does not imply any improved results when specialized to unconstrained settings.


\section{Conclusions}

Motivated by applications to crowdsourcing markets, we define the \emph{\problem problem}, a multi-round version of the principal-agent model with unobservable strategic decisions. We treat this problem as a multi-armed bandit problem, design an algorithm for this problem, and derive regret bounds which compare favorably to prior work. Our main conceptual contribution, aside from identifying the model, is the adaptive discretization approach that does not rely on Lipschitz-continuity assumptions. We provably improve on the uniform discretization approach from prior work, both in the general case and in some illustrative special cases. These theoretical results are supported by simulations.

We believe that the \problem problem deserves further study, in several directions that we outline below.

\fakeItem[1.] It is not clear whether our provable results can be improved, perhaps using substantially different algorithms and relative to different problem-specific structures. In particular, one needs to establish \emph{lower bounds} in order to argue about optimality; no lower bounds for \problem are currently known.

\fakeItem[2.] Our adaptive discretization approach may be fine-tuned
to improve its performance in practice. In particular, the definition
of the ``index" $I_t(C)$ of a given feasible cell $C$ may be
re-defined in several different ways. First, it can use the
  information from $C$ in a more sophisticated way, similar to the
  more sophisticated indices for the basic $K$-armed bandit problem;
  for example, see \citet{Garivier-colt11}. Second, the index can
  incorporate information from other cells. Third, it can be defined
  in a ``smoother", probabilistic way,  e.g., as in Thompson Sampling \citep{Thompson-1933}.

\OMIT{One possible area of improvement is selecting a feasible cell in a ``smoother" probabilistic way,  e.g., as in Thompson Sampling \citep{Thompson-1933}.}

\fakeItem[3.] Deeper insights into the structure of the (static) principal-agent problem are needed, primarily in order to optimize the choice of $\Xcand$, the set of candidate contracts. The most natural target here is the uniform mesh $\Xcand(\eps)$. To optimize the granularity $\eps$, one needs to upper-bound the discretization error
    $\OPT(\Xcand) - \OPT(\Xcand(\eps))$
in terms of some function $f(\eps)$ such that $f(\eps)\to 0$ as $\eps\to 0$. The first-order open question is to resolve whether this can be done in the general case, or provide a specific example when it cannot. A related open question concerns the effect of increasing the granularity: upper-bound the difference
    $\OPT(\Xcand(\eps)) - \OPT(\Xcand(\eps'))$, $\eps>\eps'>0$,
in terms of some function of $\eps$ and $\eps'$. Further, it is not known whether the optimal mesh of contracts is in fact a uniform mesh.

Also of interest is the effect of restricting our attention to monotone contracts. While we prove that monotone contracts may not be optimal (Appendix~\ref{sec:example-monotone}), the significance of this phenomenon is unclear. One would like to characterize the scenarios when restricting to monotone contracts is alright (in the sense that the best monotone contract is as good, or not much worse, than the best contract), and the scenarios when this restriction results in a significant loss. For the latter scenarios, different algorithms may be needed.

\fakeItem[4.] A much more extensive analysis of special cases is in order. Our general results are difficult to access (which appears to be an inherent property of the general problem), so the most immediate direction for special cases is deriving lucid corollaries from the current regret bounds. In particular, it is desirable to optimize the choice of candidate contracts. Apart from ``massaging" the current results, one can also design improved algorithms and derive specialized lower bounds. Particularly appealing special cases concern supply distributions that are mixtures of a small number of types, and supply distributions that belong to a (simple) parameterized family with unknown parameter.

\vspace{2mm}

Going beyond our current model, a natural direction is to
  incorporate a budget constraint, extending the corresponding results
  on dynamic task pricing. The main difficulty for such settings is
  that a \emph{distribution} over two  contracts may perform much
  better than any fixed contract; see \citet{BwK-focs13} for
  discussion. Effectively, an algorithm needs to optimize over the
  distributions. As a first step, one can use non-adaptive
  discretization in conjunction with the general algorithms for
  bandits with budget constraints (sometimes called ``bandits with knapsacks"
  \citep{BwK-focs13,AgrawalDevanur-ec14}). However, it is not clear
  how to choose an optimal mesh of contracts (as we discussed
  throughout the paper), and this mesh is not likely to be uniform
  (because it is not uniform for the special case of dynamic task
  pricing with a budget \citep{BwK-focs13}). The eventual target in
  this research direction is to marry adaptive discretization and the
  techniques from prior work on ``bandits with knapsacks.''

\begin{small}
\bibliography{bib-abbrv,dynamicPA,bib-bandits,bib-AGT,bib-slivkins,bib-crowdsourcing}
\end{small}

\appendix

\section{Monotone contracts may not be optimal}
\label{sec:example-monotone}

In this section we provide an example of a problem instance for which all monotone contracts are suboptimal (at least when restricting attention to only those contracts with non-negative payoffs).  In this example, there are three non-null outcomes (i.e., $m=3$), and two non-null effort levels, ``low'' effort and ``high'' effort, which we denote $e_{\ell}$ and $e_{h}$ respectively.  There is only a single worker type.  Since there is only one type, we drop the subscript when describing the cost function $c$.  We let $c(e_{\ell}) = 0$, and let $c(e_{h})$ be any positive value less than $0.5( v(2) -v(1)).$  If a worker chooses low effort, the outcome is equally likely to be 1 or 3.  If the worker chooses high effort, it is equally likely to be 2 or 3.  It is easy to verify that this type satisfies the FOSD assumption.  Finally, for simplicity, we assume that all workers break ties between high effort and any other effort level in favor of high effort, and that all workers break ties between low effort and the null effort level in favor of low effort.

Let's consider the optimal contract.  Since there is just a single worker type and all workers of this type break ties in the same way, we can consider separately the best contract that would make all workers choose the null effort level, the best contract that would make all workers choose low effort, and the best contract that would make all workers choose high effort, and compare the requester's expected value for each.

Since $c(e_{\ell}) = 0$ and workers break ties between low effort and null effort in favor of low effort, there is no contract that would cause workers to choose null effort; workers always prefer low effort to null effort.

It is easy to see that the best contract (in terms of requester expected value) that would make workers choose low effort would set $x(1) = x(3) = 0$ and $x(2)$ sufficiently low that workers would not be enticed to choose high effort; setting $x(2) = 0$ is sufficient.  In this case, the expected value of the requester would be $0.5 (v(1) + v(3))$.

Now let's consider contracts that cause workers to choose high effort.  If a worker chooses high effort, the expected value to the requester is
\begin{equation}
0.5 (v(2) - x(2) + v(3) - x(3)).
\label{eqn:reqexpv}
\end{equation}
Workers will choose high effort if and only if
\[
0.5 ( x(1) + x(3) ) \leq 0.5 ( x(2) + x(3) ) - c(e_h)
\]
or
\begin{equation}
0.5 x(1)  \leq 0.5 x(2)  - c(e_h).
\label{eqn:choosecond}
\end{equation}
So to find the contract that maximizes the requester's expected value when workers choose high effort, we want to maximize Equation~\ref{eqn:reqexpv}  subject to the constraint in Equation~\ref{eqn:choosecond}.  Since $x(3)$ doesn't appear in Equation~\ref{eqn:choosecond}, we can set it to $0$ to maximize Equation~\ref{eqn:reqexpv}.  Since $x(1)$ does not appear in Equation~\ref{eqn:reqexpv}, we can set $x(1) = 0$ to make Equation~\ref{eqn:choosecond} as easy as possible to satisfy. We can then see that the optimal occurs when $x(2) = 2 c(e_h)$.

Plugging this contact $x$ into Equation~\ref{eqn:reqexpv}, the expected utility in this case is $0.5 ( v(2) + v(3) ) - c(e_h)$. Since we assumed that $c(e_h) < 0.5 ( v(2) -v(1) ))$, this is strictly preferable to the constant 0 contract, and is in fact the unique optimal contract.   Since $x(2) > x(3)$, the unique optimal contract is not monotonic.

\ignore{
\subsection{Optimality of Linear Contracts in Moral Hazard Setting}
\label{sec:linear_optimal}

This section is mostly a summary of the existing results.
Below are the notations we use.

\begin{itemize}
   \item The agent chooses an effort level $e$.
   \item The agent has a cost function $c(e)$ for exerting effort level $e$.
   \item Given effort level $e$, the principal obtains profit $\pi$, which
         is conditional distributed in pdf $p(\pi|e)$.
   \item The principal sets a contract $w(\pi)$, which is the payment he/she
         gives to the agent when obtaining profit $\pi$.
\end{itemize}

We assume the agent is risk neutral, i.e., the agent's utility $v(w(\pi))$
for obtaining payment $w(\pi)$ is equal to $w(\pi)$ (without considering the
uncertainty of the payments.)  Given a contract $w(\pi)$ and  agent's effort
level $e$, the agent's payoff is
\[
\int_\pi w(\pi) p(\pi|e) d\pi - c(e),
\]
and the principal's payoff is
\[
\int_\pi (\pi - w(\pi)) p(\pi|e) d\pi.
\]

In the principal agent problem, the principal aims to design the contract
maximizing his/her own payoff.
}  

\ignore{
\subsection{Observable Effort}

Let $u$ be the minimum payoff for the agent to participate,
\ignore{the optimal
contract for the principal would be the optimal of the following
constrained maximization problem.
\begin{align*}
   \max_{e, w(\pi)} & \int_\pi (\pi - w(\pi)) p(\pi|e) d\pi \\
      \mbox{s.t.} & \int_\pi w(\pi) p(\pi|e) d\pi - c(e) \geq u
\end{align*}
}
the optimal contract can be solved in two stages. First, for any given
effort level $e$, the optimal contract $w(\pi)$ for $e$ is the optimal
solution of the following
\begin{align*}
   \max_{w(\pi)} & \int_\pi (\pi - w(\pi)) p(\pi|e) d\pi \\
   \mbox{s.t.} & \int_\pi w(\pi) p(\pi|e) d\pi - c(e) \geq u
\end{align*}

This problem can be solved by checking the
first order condition of the Lagrangian. We can then find the optimal $e^*$
of the maximization problem. So the optimal contract with observable efforts
is to ask the worker to exert effort level $e^*$ and pay him/her the
corresponding amount if he/she does so.

Note that in the optimal solution, the equality of the constraint holds,
otherwise, we can decrease the payment to increase the objective without
violating the constraint. Therefore,
\[
   \int_\pi w^*(\pi) p(\pi|e^*) d\pi - c(e^*) = u
\]

Insert this back to the objective of the maximization problem, we know the
the optimal effort level $e^*$ when the effort is observable
solves the following maximization problem.
\begin{align}
   \max_{e} \int_\pi \pi p(\pi|e) d\pi -u - c(e)
\label{eqn:eqn1}
\end{align}
}  

\ignore{
\subsection{Unobservable Effort}

Below we show that the simple linear contract $w(\pi) = \pi + K$ is
an optimal contract when the agent effort is not observable.

Consider a linear contract $w(\pi) = \pi + K $, where $K$ is some constant.
If the agent accepts this contract, he/she will exert
effort level $e$ to maximize his/her own payoff
\begin{align}
   \int_\pi w(\pi) p(\pi|e) d\pi - c(e) = \int_\pi \pi p(\pi|e) d\pi + K -c(e)
\label{eqn:eqn2}
\end{align}

Comparing Equation~\ref{eqn:eqn1} and~\ref{eqn:eqn2}, we know that the
optimal effort level $e^*$ when effort is observable also maximizes
equation~\ref{eqn:eqn2}. Therefore, if the agent accepts this contract,
he/she will exert the optimal effort as if the effort level is observable.

The agent would accept the contract as long as
(recall that $u$ is the minimum payoff the agent is willing to work for)
\[
   \int_\pi \pi p(\pi|e^*) d\pi + K - c(e^*) \geq u.
\]

Therefore, if we set the contract $w(\pi) = \pi + K^*$,
where $K^* = c(e^*) + u - \int_\pi \pi p(\pi|e^*) d\pi$. We can ensure that
the agent exerts effort level $e^*$ and also the principal can obtain the
same optimal payoff as if the effort is observable
(note that the principal's payoff is $-K^*$ in this case.)
} 

\end{document}